\newtheorem{theorem}{Theorem}
\newtheorem{proposition}{Proposition}[subsection]
\newtheorem{lemma}[proposition]{Lemma}
\newtheorem{corollary}[proposition]{Corollary}
\theoremstyle{definition}
\newtheorem{remark}{Remark}[subsection]
\theoremstyle{remark}
\newtheorem*{notation}{Notation}
\numberwithin{equation}{subsection}
\newcommand{\snew}{\mathbf{s}}
\newcommand{\ringV}{\mathring{\mathbf{V}}}
\newcommand{\ringVm}{\mathring{\mathbf{V}}^m}
\newcommand{\Vm}{\mathbf{V}^m}
\newcommand{\Bm}{\mathcal{B}^m}
\newcommand{\Vb}{\bar{\mathbf{V}}}
\newcommand{\Mg}{\underline{g}{}}
\newcommand{{\Ent}}{{\eta}}
\begin{document}
\pagestyle{fancy}
\title{Well-Posedness for the Euler-Nordstr\"{o}m System with Cosmological Constant}
\author{Jared Speck}
\address{Department of Mathematics, Hill Center-Busch Campus, Rutgers University \\ 110 Frelinghuysen
Rd Piscataway, NJ 08854-8019, USA}
\email{jrspeck@math.princeton.edu}
\date{October $12^{th}$, 2008}

\begin{abstract}
		In this paper the author considers the motion of a relativistic perfect
    fluid with self-interaction mediated by
    Nordstr\"{o}m's scalar theory of gravity. The evolution of the fluid is determined by
    a quasilinear hyperbolic system of PDEs, and a cosmological
    constant is introduced in order to ensure the existence of
    non-zero constant solutions. Accordingly, the initial value problem
    for a compact perturbation of an infinitely extended quiet fluid is studied.
    Although the system is neither symmetric hyperbolic nor
    strictly hyperbolic, Christodoulou's constructive results on the existence of energy
    currents for equations derivable from a Lagrangian can be adapted to
    provide energy currents that can be used in place of the standard energy
    principle available for first-order symmetric
    hyperbolic systems. After providing such energy currents,
    the author uses them to prove that the Euler-Nordstr\"{o}m system with a cosmological constant
    is well-posed in a suitable Sobolev space.
\end{abstract}
\maketitle


\section{Introduction}                                                               \label{S:Introduction}
    It is well-known that for \emph{symmetric hyperbolic} systems of PDEs, an energy principle is available
    that implies well-posedness (local existence, uniqueness, and continuous dependence on initial data) for
    initial data belonging to an appropriate Sobolev space. Consult \cite{rCdH1966},
    \cite{cD2000}, \cite{kF1954}, \cite{pL2006}, \cite{aM1984}, or \cite{dS1999} for the definition of a symmetric hyperbolic system
    and a detailed proof of local existence in this case. A full proof of well-posedness is
    difficult to locate in the literature, but Kato \cite{tK1975}
    supplies one using a very general setup that applies to symmetric
    hyperbolic systems in a Banach space. Additionally, for \emph{strictly hyperbolic} (not necessarily
    symmetric) systems, well-posedness follows from the availability of a generalization
    of the energy principle for symmetric hyperbolic systems.  For strictly hyperbolic systems,
    there are a variety of methods due to
    Petrovskii, Leray, G$\mathring{\mbox{a}}$rding, and Calder\'{o}n for
    generating energy estimates; consult \cite{rCdH1966} or \cite{pL2006} for
    details on these methods.

    We consider here the Cauchy problem for the Lorentz covariant
    Euler-Nordstr\"{o}m (EN) system, which is a scalar caricature of the general covariant
    Euler-Einstein system describing a gravitationally self-interacting fluid.
    The EN system is a quasilinear hyperbolic system of PDEs that is not manifestly symmetric hyperbolic.
    Moreover, because of the repeated factors in the expression for $\mathcal{Q}(x;\cdot)$ in
    equation \eqref{E:NormalCone} below, and because the sheets of the characteristic
    subset of the cotangent space at $x$ intersect (see Fig. \ref{Fi:CotCharSubset}), it is not strictly hyperbolic. 
    Therefore, well-posedness for the EN system does not follow from either of these two well-known frameworks.

    Fortunately, alternate techniques recently developed by Christodoulou \cite{dC2000}, and which are
    applied to the study of relativistic fluids in Minkowski spacetime in particular in
    \cite{dC2007}, offer a viable approach to studying the Cauchy problem for the EN system.
    The central advantage afforded by Christodolou's techniques, which provide
    energy currents for equations derivable from a Lagrangian, is that they
    bypass the physically artificial requirement of symmetry in
    the equations: even though the EN system is not
    manifestly symmetric, its energy currents
    allow for precisely the same energy estimates to be made as in the
    theory of symmetric hyperbolic systems. Once one has these
    estimates, the proof of well-posedness for the EN system 
    mirrors the well known proof for symmetric hyperbolic
    systems. Our main goal is to use the method of
    energy currents to prove the following theorem (stated loosely here), which is
    divided into parts and stated rigorously in Section \ref{S:WellPosedness}:
    \vspace{\baselineskip}
		\begin{changemargin}{.25in}{.25in}    
			
				\noindent {\bf Main Theorem (Well-Posedness).} \ \ Let $N \geq 3$ be an integer. Assume that the initial data $\ringV$ for 
				the EN system are an $H^N$ perturbation of a constant background solution $\Vb.$
        Then these data launch a unique solution $\mathbf{V}$
        possessing the regularity property $\mathbf{V} - \Vb \in C_b^1([0,T] \times \mathbb{R}^3) \cap C^0([0,T],H^{N}) \cap 
        C^1([0,T],H^{N-1}).$ Furthermore, the map from the initial perturbation $\ringV -
        \Vb$ to $\mathbf{V} - \Vb$ is a continuous map from an open subset of $H^N$
        into $C^0([0,T],H^{N}).$
    
    \end{changemargin}
    \vspace{\baselineskip}

    While Christodoulou's methods are not the only techniques available for proving the well-posedness
    of the EN system, they are powerful and natural in the sense that they exploit
    the inherent geometry of the equations. In contrast, one may proceed
    by seeking a change of state-space variables that renders the system symmetric
    hyperbolic. For example, Makino applies this symmetrizing technique to the Euler-Poisson equations
    in \cite{tM1986}, and Makino and Ukai apply it to the
    relativistic Euler equations without gravitational interaction
    in \cite{tMsU1995} and \cite{tMsU1995b}. Further discussion of applications of
    symmetrization discussed in the literature can be found in sections \ref{SS:EPkappa} and
    \ref{SS:EEkappa}.\footnote{The references given are far from exhaustive; we merely wish to provide the reader
    with some examples of the application of well-known techniques.} Yet the symmetrizing method is not without disadvantages: one
    must solve a formally over-determined system of equations to find the
    symmetrizing variables\footnote{Consult chapter 3 of \cite{cD2000} for a discussion of symmetrization.}, and the resulting 
    state-space variables, if they exist, may place un-physical and/or mathematically unappealing
    restrictions on the function spaces with which one would like to work. However, it should be noted that
    Makino's symmetrization is currently capable of dealing with a
    restricted class of compactly supported data, while
    the techniques applied here cannot yet handle such data due to singularities in the
    energy current \eqref{E:EnergyCurrent} when the proper energy density $\rho$ of the
    fluid vanishes.

\section{Remarks on the Notation}
    We introduce here some notation that is used throughout this
    article, some of which is non-standard. We assume that the reader is familiar with standard notation for the
    $L^p$ spaces and the Sobolev spaces $H^k.$ Unless otherwise stated, the symbols $L^p$ and $H^k$ refer to
    $L^p(\mathbb{R}^3)$ and $H^k(\mathbb{R}^3)$ respectively.

    \subsection{Notation and assumptions regarding spacetime}           \label{SS:SpacetimeConventions}
    In the Euler-Poisson system with cosmological constant introduced below, we use
    $t \in \mathbb{R}$ to denote the time variable and $\snew \in \mathbb{R}^3$
    to denote the space variable. In the Euler-Einstein and EN systems (which we also equip with a cosmological constant
    below), we assume that spacetime is a 4-dimensional, time-orientable
    Lorentzian manifold $\mathcal{M}$ and use the notation
    \begin{align}                               \label{E:SpacetimePoint}
        x=(x^0,x^1,x^2,x^3)
    \end{align}
    to denote spacetime points. For the EN system with cosmological constant, we assume
    the existence of a global system of rectangular coordinates (an inertial frame),
    and for this preferred time-space splitting, we identify
    $t=x^0$ with time and $\snew=(x^1,x^2,x^3)$ with space and use the notation \eqref{E:SpacetimePoint}
    to denote the components of $x$ relative to this fixed coordinate
    system.

    \subsection{Notation regarding differential operators}

    If $F$ is a scalar or finite-dimensional array-valued function on
    $\mathbb{R}^{1 + 3},$ then
    $DF$ denotes the array consisting of all first-order spacetime partial derivatives (including
    the partial derivative with respect to time) of every component of
    $F,$ while $\nabla^{(a)} F$ denotes the array of
    consisting of all $a^{th}$ order \emph{spatial} partial derivatives of every component of $F;$ this
    should not be confused with $\nabla,$ which represents covariant differentiation.

    \subsection{Index conventions}                  \label{SS:IndexConventions}
    We adopt Einstein's notation that repeated Latin
    indices are summed from $1 \ \mbox{to} \ 3,$ while repeated Greek
    indices are summed from $0 \ \mbox{to} \ 3.$ Indices are
    raised and lowered using a spacetime metric, which varies according to context.

		\subsection{Notation regarding norms and function spaces}

    If $\bar{\mathbf{V}}$ is a constant array, we use the
    notation
    \begin{align} \notag
        & \|F\|_{L_{\Vb}^p(\mathscr{A})}
        \overset{\mbox{\tiny{def}}}{=} \|F - \Vb
        \|_{L^p(\mathscr{A})},
    \end{align}
    and we denote the set of all Lebesgue measurable functions $F$ such that \\ $\|F\|_{L_{\Vb}^p(\mathscr{A})} < \infty$
    by $L_{\Vb}^p(\mathscr{A}).$ Unless we indicate otherwise, we
    assume that $\mathscr{A} = \mathbb{R}^3$ when the set $\mathscr{A}$ is not
    explicitly written.

    If $F$ is a map from the interval $[T_1,T_2]$ into the normed function space $X,$ we use the notation
    \begin{equation}
        \mid\mid\mid F \mid\mid\mid_{X,[T_1,T_2]} \ \overset{\mbox{\tiny{def}}}{=} \underset{t \in
        [T_1,T_2]}{\sup} \|F(t)\|_{X} \notag.
    \end{equation}
    We often abbreviate $\mid\mid\mid F \mid\mid\mid_{X,T}$ in place of $\mid\mid\mid F \mid\mid\mid_{X,[0,T]}.$
    
    We also use the notation $C^k([0,T],X)$ to denote the set of $k$-times continuously differentiable maps from $(0,T)$ into
    $X$ that, together with their derivatives up to order $k,$ extend continuously to $[0,T].$

    If $\mathscr{A} \subset \mathbb{R}^d$ ($d$ frequently
    equals $3, 4,$ or $10$ in this article) and $\mathscr{A}$ is open, then
    $C^k_b(\bar{\mathscr{A}})$ denotes the set $k-$times continuously differentiable
    functions (either scalar or array-valued, depending on context) on $\mathscr{A}$ with bounded derivatives up to
    order $k$ that extend continuously to the closure of $\mathscr{A}.$ The norm of a function $F \in C^k_b(\bar{\mathscr{A}})$ is
    defined by
    \begin{equation}
        |F|_{k,\mathscr{A}} \overset{\mbox{\tiny{def}}}{=} \sum_{|\vec{\alpha}|\leq k} \sup_{z \in \mathscr{A}}
        |\partial_{\vec{\alpha}}F(z)|, \notag
    \end{equation}
    where $\partial_{\vec{\alpha}}$ represents differentiation
    with respect to the arguments $z$ of $F$ (which may be spacetime
    variables or state-space variables, depending on the context).

    \subsection{Notation regarding operators}
     If $X$ and $Y$ are normed function spaces, then
     $\mathcal{L}(X,Y)$ denotes the set of bounded linear maps from $X$ to $Y.$
     If $\mathcal{U} \in \mathcal{L}(X,Y),$ then we denote its operator norm by
     $\|\mathcal{U}\|_{X,Y}.$ If
     $X=Y,$ we write $\mathcal{L}(X)$ instead of $\mathcal{L}(X,Y)$
     and $\|\mathcal{U}\|_{X}$ instead of $\|\mathcal{U}\|_{X,X}.$ If $\mathcal{U}(t,t')$ is an operator-valued map from the
    triangle \\ $\triangle_T \overset{\mbox{\tiny{def}}}{=} \lbrace 0
    \leq t' \leq t \leq T \rbrace$ into $\mathcal{L}(X),$ then we adopt the notation

    \begin{align} \notag
        \mid\mid\mid \mathcal{U} \mid\mid\mid_{X,\triangle_T} \overset{\mbox{\tiny{def}}}{=}
        \sup_{(t,t') \in \triangle_T} \|\mathcal{U}(t,t')\|_{X}.
    \end{align}

   \subsection{Notation regarding constants}
    We use the symbol $C$ to denote a generic constant in the estimates
    below which is free to vary from line to line. If the constant
    depends on quantities such as real numbers $N',$ subsets $\mathscr{A}$ of
    $\mathbb{R}^d,$ functions $F$ of the state-space variables, etc., that are peripheral
    to the argument at hand, we sometimes
    indicate this dependence by writing $C(N',\mathscr{A},F),$
    etc. We frequently omit the dependence of $C$ on functions of
    the state-space variables below in order to conserve
    space, but we explicitly show the dependence when it is (in
    our judgment) illuminating. Occasionally, we shall use additional symbols
    such as $C_{\bar{\mathcal{O}}_2}, L, K,$ etc., to denote
    constants that play a distinguished role in the discussion below.

\section{The EN and EN$_\kappa$ Models in Context}
    
    The EN system is an intermediate model in between
    the Galilean covariant Euler-Poisson (EP) and the general covariant
    Euler-Einstein (EE) systems for self-gravitating classical fluids. Although
    it is the most fundamental of these models for self-gravitating Eulerian fluids,
    the EE system presents numerous technical difficulties
    that make a detailed analysis of the system's evolution,
    through either numerical or analytical methods, extremely difficult.
    For example, in General Relativity there is a coordinate gauge freedom due to the diffeomorphism covariance of the equations,
    and furthermore, there is no known law of local conservation of gravitational energy. 
    Our main motivations for studying the EN system are to bridge the gap between the EP and the
    EE systems and to provide a special relativistic primer for studying the EE system.

    Since it is based on Nordstr\"{o}m's theory of gravity,
    it should be stressed that the EN system is physically wrong.
    However, since both the EN and the EE systems are
    relativistic generalizations of the EP system, we
    expect, at least in some limiting cases, that there are some
    qualitative similarities between solutions to the three
    systems. Furthermore, in \cite{sSsT1993}, Shapiro and Teukolsky
    discuss numerical simulations of the EN system in the spherically symmetric
    case; they expect that the numerical schemes developed in their paper can
    be adapted to allow for the calculation of accurate wave
    forms in the EE model.

    Before discussing the EN system in detail, we briefly recall
    the EP and EE systems, endowing both
    with a \emph{cosmological constant}\footnote{We deviate from Einstein's notation; he denoted the cosmological
    constant by $\Lambda.$} denoted by
    $\kappa^2.$ We also briefly discuss some local existence proofs for these systems
    in the case $\kappa = 0,$ emphasizing their dependence on the symmetric hyperbolic setup or
    the method of Leray (strict) hyperbolicity.

    We introduce a positive cosmological constant out of mathematical
    necessity: the EN system fails to have non-zero constant
    solutions without it. Our reasoning is similar to the reasoning that led Einstein to introduce
    the cosmological constant into General Relativity; he sought a static universe, and General Relativity without
    a cosmological constant features only Minkowski space as a static homogeneous solution (see
    \cite{aE1917}). We emphasize the presence of the cosmological constant $\kappa^2$ in the
    models by referring to them as the EP$_{\kappa},$ EE$_{\kappa},$ and EN$_{\kappa}$
    systems; note that EP$=$EP$_0,$ and similarly for the other
    two models.

        \subsection{The Euler-Poisson system with cosmological constant (EP$_{\kappa}$)}    \label{SS:EPkappa}

    In units with Newton's universal gravitational constant equal to 1, the
    equations governing the dynamics in this case are
    \begin{align}
        \partial_t {\Ent} + v^k \partial_k {\Ent} &=  0                      \label{E:EP1}\\
        \partial_t{\rho} +  \partial_k(\rho v^k) &=  0              \label{E:EP2}\\
        \rho\left(\partial_t v_j + v^k \partial_k v^j\right)
        + \partial_j p + \rho \partial_j \phi &=  0 \qquad  (j=1,2,3),               \label{E:EP3}
    \end{align}
    where
    \begin{align}
        \Delta \phi - \kappa^2\phi &= 4 \pi \rho                                  \label{E:EP4}
    \end{align}
    and
    \begin{align}
        p &=\mathscr{P}(\rho,{\Ent}).
    \end{align}

    The unknowns in \eqref{E:EP1} - \eqref{E:EP4} are the cosmological
    Newtonian gravitational scalar potential $\phi(t,\snew),$ and
    the state-space variables mass density $\rho(t,\snew),$
    velocity $\mathbf{v}(t,\snew)=(v_1,v_2,v_3),$
    pressure $p(t,\snew),$ and entropy density\footnote{We are influenced by Boltzmann's notation
    in denoting the entropy density by $\Ent.$} ${\Ent}(t,\snew).$ We remark that in the EP$_{\kappa}$ system, $\phi$ is not a 
    state-space variable because it is uniquely determined by $\rho$ under the assumption
    of appropriate decay conditions on $\phi$ and $\rho$ at infinity.
    The equation that specifies $p$ as a function $\mathscr{P}$ of $\rho$ and ${\Ent}$ is known as the
    \emph{equation of state.} 
    
    This system of equations\footnote{Kiessling omits equation \eqref{E:EP1} from the system of equations he studies.
    See Section \ref{SS:EEkappa} for further discussion of this truncation.} is discussed in
    \cite{mK2003}, in which, under an \emph{isothermal} equation of
    state ($p=c_s^2 \rho,$ where the constant $c_s$ denotes the speed of sound), Kiessling derives the
    Jeans dispersion relation that arises from linearizing
    \eqref{E:EP2} - \eqref{E:EP4} about a static state in which the
    background mass density $\bar{\rho}$ is non-zero, followed by taking the limit $\kappa \to 0.$

    In \cite{tM1986}, Makino studies the Cauchy problem for the
    EP$_0$ system\footnote{Equation \eqref{E:EP1} is also omitted from Makino's paper.} with ``tame" compactly supported initial data 
    belonging to an appropriate Sobolev space. He studies \emph{adiabatic} equations of state $(p=K \rho^{\gamma},$ where
    $K$ is a positive constant) under the mathematical assumption $1 < \gamma < 3,$ and after finding symmetrizing
    variables, he proves local existence using the symmetric hyperbolic setup. 
    
    \begin{remark}
    Let us now make a few remarks about the ``tame" data. Vanishing mass densities typically 
    produce singularities in the expression for the energy, but Makino's choice of symmetrizing variables,
    which works for the class of adiabatic equations of state described in the previous paragraph, allows him to handle a class of 
    compactly supported data. The ``tame" data are constrained by the requirement that $\rho^{\delta}$ must belong to an appropriate 
    Sobolev space, where 
    $\delta$ is a positive constant depending on $\gamma.$ To the author's knowledge, a fully satisfactory treatment (i.e., without 
    unphysical mathematical
    restrictions on the data) of the evolution of compactly supported data in the EP$_0$ system remains an open problem.
   	\end{remark}   

    \subsection{The Euler-Einstein system with cosmological constant (EE$_{\kappa}$)}                       \label{SS:EEkappa}

        We work in units with Newton's universal gravitational constant and the speed of light both equal to 1.
        Given $T,$
    the energy-momentum tensor of the contemplated matter model,
    the gravitational spacetime with cosmological constant is determined by the Einstein field equations,
        \begin{align}
        G_{\mu \nu} + \kappa^2 g_{\mu \nu} = 8 \pi T_{\mu \nu}  \qquad (0 \leq \mu,\nu \leq 3),    \label{E:Einstein}
    \end{align}
    where $G$ is the \emph{Einstein tensor} of the
    \emph{spacetime metric} $g.$
        As a consequence of \eqref{E:Einstein}, $T$ has to satisfy the admissibility condition
    \begin{align}
        \nabla_{\mu} T^{\mu \nu} = 0 \qquad (0 \leq \nu \leq 3),         \label{E:EMContinuity}
    \end{align}
    where the $\nabla$ denotes the covariant derivative induced by the spacetime metric $g.$
    Equation \eqref{E:EMContinuity} follows from the twice contracted Bianchi identity, which implies that
    \begin{align}               \label{E:Bianchi}
        \nabla_{\mu} G^{\mu \nu} =0,
    \end{align}
    together with
    \begin{align}               \label{E:NablagCommute}
        {\nabla_{\lambda}} g^{\mu \nu} = 0 \qquad (0 \leq \lambda, \mu, \nu \leq 3),
    \end{align}
    which follows from the fact that $\nabla$ is the Levi-Civita connection on spacetime.

    We now briefly introduce the notion of a relativistic perfect fluid. Readers may consult \cite{nAgG2007} or
    \cite{dC1995} for more background. For a perfect fluid model, the components of the energy-momentum
    tensor of matter read
        \begin{align}
        T^{\mu \nu} \overset{\mbox{\tiny{def}}}{=} (\rho + p)u^{\mu}u^{\nu} + pg^{\mu \nu}.   \label{E:EMDef}
    \end{align}
    Here the scalar $\rho \geq 0$ is the \emph{proper energy density},
    the scalar $p \geq 0$ is the
    \emph{pressure,} and the vector $u$ is the \emph{4-velocity,} a
    future-directed timelike vectorfield which is subject to the normalization condition
        \begin{align}
        g_{\mu \nu}u^{\mu}u^{\nu}=-1.                                                                 \label{E:VelocityConstraint}
    \end{align}
   	
    We also introduce the additional thermodynamic scalar variables $n \geq 0$, the \emph{proper number density}, and ${\Ent} \geq 
    0,$ the \emph{proper entropy density}, and the following continuity equation:
    \begin{align}
        \nabla_{\mu}(n u^{\mu}) = 0.                                                                \label{E:Continuity}
    \end{align}

    When $g$ is given and $T$ is defined by \eqref{E:EMDef}, equations \eqref{E:EMContinuity} and \eqref{E:Continuity} together
    form the Euler equations for a general-relativistic perfect fluid. In general, when both $g$ and $T$ are unknowns, 
    \eqref{E:Einstein}, its consequence \eqref{E:EMContinuity}, and \eqref{E:EMDef} - \eqref{E:Continuity} form the EE$_{\kappa}$ 
    system for $u, \rho, p, n, \Ent,$ and $g$ (up to closure, for instance by providing two equations
    that relate $\rho, p, n,$ and $\Ent$). As in the EP$_{\kappa}$ system, the under-determined system
    consisting of \eqref{E:Einstein}, \eqref{E:EMContinuity}, and \eqref{E:EMDef} - \eqref{E:Continuity} may be closed
    (up to a choice of coordinate gauge) by providing 
    further relationships between the state-space variables. An example of a simple closure often
    discussed in the mathematical (consult e.g. \cite{nAgG2007}, \cite{tMsU1995}, \cite{tMsU1995b}) and astrophysical (consult e.g. 
    \cite{aR1992}) literature is to assume that $\rho$ is a function of $n$ alone, in which case equation \eqref{E:Continuity} is an 
    automatic consequence 
    of \eqref{E:EMContinuity}, \eqref{E:EMDef} and the 
    thermodynamic relation \eqref{E:Pressure} below. Equivalently, one may specify $p$ as a function of $\rho$ alone; such fluids
    are called \emph{barotropic}. If the fluid is barotropic, the variable $\Ent$ becomes passive in the sense that it satisfies the 
    equation $u^{\mu} \nabla_{\mu} \Ent =0,$ but does not otherwise enter into the dynamics; the remaining state-space variables
    (which we may take to be $u, g, p$) decouple from $\Ent.$
    
    Local existence for a closed relativistic fluid system has been discussed by several authors under 
    various assumptions. For example, in \cite{yCB1958}, Choquet-Bruhat showed that the EE$_0$ system 
    with pressure-free dust sources\footnote{The energy-momentum tensor for
    pressure-free dust has components $T^{\mu \nu} = \rho u^{\mu} u^{\nu}.$} forms a well-posed Leray-hyperbolic system, and in 
    \cite{aR1992}, Rendall adapted Makino's
    symmetrization (as discussed in Section \ref{SS:EPkappa}) of the EP$_0$ system to handle a subclass of 
    compactly supported initial data for the EE$_0$ system with perfect fluid sources under an adiabatic equation of state
    with $\gamma > 1.$ Similar results are also proved in \cite{uBlK2003},
    in which Brauer and Karp write the equations as a symmetric hyperbolic system
    in harmonic coordinates.

    \subsection{The Euler-Nordstr\"{o}m system with cosmological constant}       \label{SS:SettingUpENkappa}

    We base our discussion here on Calogero's derivation of the Nordstr\"{o}m-Vlasov system\footnote{Each of the three Eulerian fluid 
    models discussed in this article has a
            kinetic theory counterpart. Collectively known as the Vlasov
            models, these diffeo-integral systems describe a particle density function $f$ on
            physical space $\times$ momentum space that evolves due
            to gravitational self-interaction. In particular, the EN$_0$ system is the Eulerian
            counterpart of the previously studied Nordstr\"{o}m-Vlasov (NV) system (which
            does not feature a cosmological constant). See e.g., \cite{sC2003} or \cite{sC2006}.} \cite{sC2003}.
    Consult sections \ref{SS:SpacetimeConventions} and \ref{SS:IndexConventions} for
    some remarks on our assumptions concerning spacetime and our use of index notation. As in the
    EE$_{\kappa}$ model, we work in units with the speed of light and Newton's
    universal gravitational constant
    both equal to 1.

    Like the EE$_{\kappa}$ system, the EN$_{\kappa}$ system subsumes equations \eqref{E:EMContinuity}, \eqref{E:EMDef},
    \eqref{E:VelocityConstraint},
    and \eqref{E:Continuity}, where $\rho \geq 0, p \geq 0, n \geq 0, {\Ent} \geq 0,$ and $u$ are defined
    as in the EE$_{\kappa}$ system.  In contrast to the EE$_{\kappa}$
    model, we do \emph{not} assume Einstein's field equations \eqref{E:Einstein};
    instead we turn to Nordstr\"{o}m's theory of gravity. We postulate that in
    our global rectangular coordinate system, the conformally flat metric is given
    by
    \begin{align}
        g_{\mu \nu}\overset{\mbox{\tiny{def}}}{=}e^{2\phi}\Mg_{\mu \nu},       \label{E:NordstromMetric}
    \end{align}
    where $\phi$ is the \emph{Nordstr\"{o}m scalar potential}, and
    $\Mg=$ diag$(-1,1,1,1)$ are the components of the Minkowski metric in the rectangular coordinate system. 
    
    Nordstr\"{o}m's theory of gravity \cite{gN1913} belongs to the class
    of theories known as scalar metric theories of gravity.
    For theories in this class, gravitational forces are mediated
    by a scalar field (or ``potential") $\phi$ that affects the spacetime metric.
    Furthermore, it is assumed that the effect of $\phi$ is to
    modify the otherwise flat metric by a scaling factor that depends on
    $\phi.$ Therefore, the physical metric in such a theory is
    given by $g_{\mu \nu} = \chi^2(\phi) \Mg_{\mu \nu},$
    where $\Mg$ is the Minkowski metric. A metric of this form
    is said to be \emph{conformally flat}. Strictly speaking, the
    scalar theory of gravity we study in this paper is not identical to
    the one published by Nordstr\"{o}m in \cite{gN1913}. In his
    paper, Nordstr\"{o}m makes the choice $\chi(\phi)= \phi,$ while in
    our paper, we make the choice $\chi(\phi)= e^{\phi},$ a theory
    that appears as a homework exercise in the well-known text
    ``Gravitation" by Misner, Thorne, and
    Wheeler \cite{cMkTjW1973}. See \cite{sC2003} or \cite{tDgEF1992} concerning the significance
    of the choice $\chi(\phi)= e^{\phi},$ which has the property of \emph{scale invariance}
    of the gravitational interaction. Also consult \cite{fR2004} for a discussion of scalar theories of
    gravity, including the two mentioned here.
    
    Following Nordstr\"{o}m's lead \cite{gN1913}, we also introduce
    the auxiliary energy-momentum tensor $T_{\mbox{\tiny{aux}}}$ with components
    \begin{align}
        T_{\mbox{\tiny{aux}}}^{\mu \nu} \overset{\mbox{\tiny{def}}}{=} e^{6\phi}T^{\mu \nu}     \label{E:EMAuxDef}
    \end{align}
    and postulate that $\phi$ is a solution to
    \begin{align}
        \square \phi - {\kappa}^2 \phi = -\Mg_{\mu \nu}T_{\mbox{\tiny{aux}}}^{\mu \nu} = -e^{4\phi}(3p-\rho).        \label{E:Phi}
    \end{align}
    Note that $\square \phi \overset{\mbox{\tiny{def}}}{=} - \partial^2_t \phi + \Delta \phi$
    is the wave operator on flat spacetime applied to $\phi.$ The virtue of the postulate \eqref{E:Phi} is that it
    provides us with continuity equations for an energy-momentum tensor in Minkowski
    space which we label $\Theta$ and discuss below; see equations \eqref{E:ENEMDef} and \eqref{E:ENEMContinuity}.

    As in the EP$_{\kappa}$ and EE$_{\kappa}$ models, we may close the EN$_{\kappa}$ system by supplying relationships
    between the state-space variables. The basic postulates we adopt are as follows (see e.g. \cite{yGsTZ1999}):
    
    \begin{list}{}{\setlength{\leftmargin}{.5in}}
        \item $\mathbf{1) \ }$ $\rho \geq 0$ is a function of $n \geq 0$ and ${\Ent} \geq 0.$

    \item $\mathbf{2) \ }$ $p \geq 0$ is defined by
        \begin{align}
            p\overset{\mbox{\tiny{def}}}{=}n \left. \frac{\partial \rho}{\partial n} \right|_{\Ent} -
            \rho,                                                                                   \label{E:Pressure}
        \end{align}
        where the notation $\left. \right|_{\cdot}$ indicates partial differentiation with $\cdot$ held
        constant.
    \item $\mathbf{3) \ }$ A perfect fluid satisfies
        \begin{align}
            \left. \frac{\partial \rho}{\partial n} \right|_{\Ent} >0, \left. \frac{\partial p}{\partial
            n} \right|_{\Ent}>0, \left. \frac{\partial \rho}{\partial {\Ent}} \right|_n \geq 0 \
            \mbox{with} \ ``=" \ \mbox{iff} \ {\Ent}=0.                             \label{E:EOSAssumptions}
        \end{align}
        As a consequence, we have that $\sigma,$ the speed of sound in the fluid, is always real:
            \begin{align}
                \sigma^2 \overset{\mbox{\tiny{def}}}{=} \left.\frac{\partial p}{\partial
                \rho}\right|_{\Ent} = \frac{{\partial p / \partial n}|_{\Ent}}{{\partial \rho / \partial
                n}|_{\Ent}} > 0.                                                                         \label{E:SpeedofSound}
            \end{align}
     \item  $\mathbf{4) \ }$ We also demand that the speed of sound is positive and less than the speed of light whenever $n,\Ent > 0$:
            \begin{align} \label{E:Causality}
                0 < \sigma < 1.
            \end{align}
    \end{list}
    
    Postulates $\mathbf{1}$ - $\mathbf{3}$ express the laws of thermodynamics and fundamental thermodynamic assumptions, while as discussed 
    in detail in Section \ref{S:Geometry}, postulate $\mathbf{4}$ ensures that vectors that are timelike with respect to the sound cone are 
    necessarily timelike with respect to the light cone. 
    
    \begin{remark}
    We note that the assumptions $\rho \geq 0, p \geq 0$ together imply that
    the energy momentum tensor \eqref{E:EMDef} satisfies both the \emph{weak energy 
    condition} ($T_{\mu \nu} X^{\mu} X^{\nu} \geq 0$ holds whenever $X$ is future-directed and timelike) 
    and the \emph{strong energy condition}
    ($[T_{\mu \nu} - 1/2 g^{\alpha \beta}T_{\alpha \beta} g_{\mu \nu}]X^{\mu}X^{\nu} \geq 0$ holds whenever $X$ 
    is future-directed and timelike). Furthermore, if we assume that the equation of state is such that $p=0$ when
    $\rho = 0,$ then \eqref{E:Causality} guarantees that $p \leq \rho.$ It is then easy to check
    that $0 \leq p \leq \rho$ implies the \emph{dominant energy condition}
    ($-T^{\mu}_{\ \nu} X^{\nu}$ is future-directed and causal whenever $X$ is future-directed and causal).
		\end{remark}
    
    \begin{remark}
        By \eqref{E:EOSAssumptions}, we can solve for $\sigma$ and $\rho$ as
        functions of $p$ and ${\Ent}:$
        \begin{align}
            \sigma &= \mathscr{S}({\Ent},p)                                                         \label{E:Sigma}\\
            \rho &= \mathscr{R}({\Ent},p).                                                       \label{E:rho}
        \end{align}
    \end{remark}
    
    \begin{remark}
        We will make use of the following identity implied by \eqref{E:SpeedofSound}, \eqref{E:Sigma}, and \eqref{E:rho}:
        \begin{align} \label{E:partialRpartialpandStoNegativeTwoRelationship}
            \left. \frac{\partial \mathscr{R}}{\partial p}(\Ent,p)\right|_{\Ent} = \mathscr{S}^{-2}(\Ent,p).
        \end{align}
    \end{remark}

    As a typical example, we mention a \emph{polytropic} equation
    of state, that is, an equation of state of the form (see e.g.
    \cite{yGsTZ1999})
    \begin{align}                                                                           \label{E:Polytropic}
        \rho = n + \frac{A({\Ent})}{\gamma -1}n^{\gamma},
    \end{align}
    where $1 < \gamma < 2,$ and $A$ is a positive,
    increasing function of ${\Ent}.$ In this case $p=An^{\gamma},$
    $\left. \partial p / \partial \rho \right|_{\Ent}$ is
    increasing in $\rho,$ and the speed of sound $\sigma$ is bounded
    from above by $\sqrt{\gamma - 1}.$

    \begin{remark}
        We note here a curious discrepancy that arises when, for the polytropic equation of state under the isentropic condition
        ${\Ent} \equiv {\Ent}_0$, we consider the Newtonian limit, that is, the limit as the speed of light $c$ goes to $\infty.$
        In dimensional units, \eqref{E:Polytropic} becomes $\rho = m_0c^2n + \frac{A_c({\Ent})}{\gamma -1}n^{\gamma},$ and
        $p=A_c({\Ent})n^{\gamma},$
        where $m_0$ is the mass per fluid element, and $A_c({\Ent})$ is a positive, increasing function of ${\Ent}$ indexed by the parameter $c.$ The
        speed of sound squared is given by $\sigma^2 \overset{\mbox{\tiny{def}}}{=}c^2\left.\frac{\partial p}{\partial
        \rho}\right|_{\Ent}
        =\frac{\gamma c^2 A_c({\Ent}_0)n^{\gamma -1}}{c^2 m_0 + (\gamma /{\gamma -1}) A_c({\Ent}_0) n^{\gamma-1}}.$
        Assuming that $\lim_{c \to \infty} A_c({\Ent}_0) \overset{\mbox{\tiny{def}}}{=}A_{\infty}({\Ent}_0)$ exists, we may
        consider the Newtonian limit $c \to \infty$ of $\sigma^2$ and $p,$ obtaining in the limit that $\sigma^2 = \gamma
        m_0^{-1}A_{\infty}({\Ent}_0)n^{\gamma - 1}$ and
        $p=A_{\infty}({\Ent}_0)n^{\gamma},$ Newtonian formulas that make mathematical sense
        and have physical interpretations for $1 \leq \gamma < \infty.$ In the Newtonian case,
        $\gamma=1$ corresponds to isothermal conditions, while $\gamma \to \infty$ yields the rigid body dynamics.
        However, for finite values of $c,$ not all values of the parameter $\gamma$ make mathematical or physical sense: there is a
        mathematical singularity in the formula for $\rho$ at $\gamma = 1.$ This is physically reasonable since
        isothermal conditions require the instantaneous transfer of heat
        energy. Thus, for finite $c,$ the polytropic equations of state do not allow for the case
        corresponding to the instantaneous transfer of heat energy over finite distances, a feature which we find desirable in a
        relativistic model. Additionally, we have that $\lim_{n \to \infty} \sigma^2 = c^2
        (\gamma - 1),$ so that for $\gamma > 2,$ there is a $\gamma-$dependent critical threshold for the number
        density above which the speed of sound exceeds the speed of light. Since larger values of $\gamma$ correspond to
        ``increasing rigidity" of the fluid, and the concept of rigidity violates the spirit of the framework of relativity,
        we are not surprised to discover that large values of
        $\gamma$ may lead to superluminal sound speeds. However, we find ourselves at the moment unable to attach a
        physical interpretation to the fact that the mathematical borderline case is $\gamma=2.$
    \end{remark}

    We summarize this section by stating that equations \eqref{E:EMContinuity}, \eqref{E:EMDef}, \eqref{E:VelocityConstraint},
    \eqref{E:Continuity}, \eqref{E:NordstromMetric}, \eqref{E:EMAuxDef}, \eqref{E:Phi}, \eqref{E:Pressure},
    and \eqref{E:rho} constitute the EN$_{\kappa}$ system.

\section{Reformulation of the EN$_{\kappa}$ System, the Linearized EN$_{\kappa}$ System, and the Equations of Variation}

    Because it is mathematically advantageous, in this section we reformulate the EN$_{\kappa}$ system as a
    fixed-background theory in flat Minkowski space. This is a mathematical reformulation only; the ``physical"
    metric in the EN$_{\kappa}$ system is $g$ from \eqref{E:NordstromMetric} rather than the Minkowski metric $\Mg.$
    We also discuss the linearization of the EN$_{\kappa}$ system and the related equations of variation, systems that are central to
    the well-posedness arguments.

    \subsection{Reformulating the EN$_{\kappa}$ system}

    For the remainder of this article, indices are raised and lowered
    with the Minkowski metric, so for example, $\partial^{\lambda} \phi = \Mg^{\mu \lambda}
    \partial_{\mu} \phi.$ To begin, we use the form of the metric \eqref{E:NordstromMetric}
    to compute that in our fixed rectangular coordinate system (see
    Section \ref{SS:SpacetimeConventions}),
    the continuity equation \eqref{E:EMContinuity} for the
    energy-momentum tensor \eqref{E:EMDef} is given by
    \begin{align}
        0  = \nabla_{\mu} T^{\mu \nu} &= \partial_{\mu} T^{\mu \nu} + 6
                T^{\mu \nu} \partial_{\mu}\phi  -e^{-2\phi}
                g_{\alpha \beta}T^{\alpha \beta}\partial^{\nu} \phi \notag\\
         &= \partial_{\mu} T^{\mu \nu} + 6 T^{\mu \nu}
                \partial_{\mu}\phi- e^{-6 \phi}\Mg_{\alpha \beta}
                T_{\mbox{\tiny{aux}}}^{\alpha \beta} \partial^{\nu}\phi \qquad
                (\nu=0,1,2,3),                                                                                    \label{E:TDiv}
    \end{align}
    where $T_{\mbox{\tiny{aux}}}^{\mu \nu}$ is given by \eqref{E:EMAuxDef}. For this calculation we made use of the
    explicit form of the Christoffel symbols in our rectangular coordinate
    system:
    \begin{align}
        \Gamma_{\mu \nu}^{\alpha} = \delta^{\alpha}_{\nu}
        \partial_{\mu}\phi + \delta^{\alpha}_{\mu}
        \partial_{\nu}\phi - \Mg_{\mu \nu}\Mg^{\alpha \beta} \partial_{\beta}
        \phi.                                                                                               \label{E:Christoffel}
    \end{align}

    Under the postulate \eqref{E:Phi} for $\phi,$
    \eqref{E:TDiv} can be rewritten as

    \begin{align}
        0 = e^{6\phi} \nabla_{\mu} T^{\mu \nu} = \partial_{\mu} \big(T^{\mu
        \nu}_{\mbox{\tiny{aux}}}
        + \partial^{\mu} \phi \partial^{\nu} \phi -\frac{1}{2} \Mg^{\mu \nu} \partial^{\alpha} \phi
        \partial_{\alpha} \phi - \frac{1}{2} \Mg^{\mu\nu} \kappa^2 \phi^2 \big).                         \label{E:Divergence}
    \end{align}
    Equation \eqref{E:Divergence} now
    illustrates the divergence-free energy-momentum tensor
    $\Theta$ mentioned in Section \ref{SS:SettingUpENkappa}. Its components
    $\Theta^{\mu \nu}$ consist of the terms from \eqref{E:Divergence} that are inside the
    parentheses; we are thus afforded with local conservation laws in Minkowski space.

    To simplify the notation, we make the change of state-space
    variables (recalling equation \eqref{E:rho} for the definition
    of the function $\mathscr{R}$)
    \begin{align}
        U^{\nu} &\overset{\mbox{\tiny{def}}}{=}e^{\phi} u^{\nu}  \qquad \qquad (\nu = 0,1,2,3) \label{E:U}  \\
        R &\overset{\mbox{\tiny{def}}}{=} e^{4\phi} \rho  = e^{4\phi}\mathscr{R}(p,{\Ent})             \label{E:R} \\
        P &\overset{\mbox{\tiny{def}}}{=} e^{4\phi} p                                                 \label{E:P}
    \end{align}
    throughout the EN$_{\kappa}$ system, noting that $U$ is subject to the constraint
    \begin{align}
        U^0 = (1 + U^kU_k)^{1/2}.                                                       \label{E:U0Def}
    \end{align}
    Following the above substitutions, $\Theta$ has components
    \begin{align}
        \Theta^{\mu \nu} \overset{\mbox{\tiny{def}}}{=} (R+P)U^{\mu}U^{\nu} + P{\Mg}^{\mu \nu} + \partial^{\mu}\phi
        \partial^{\nu}\phi
        -\frac{1}{2} \Mg^{\mu \nu}\partial^{\alpha}\phi\partial_{\alpha} \phi
        - \frac{1}{2} \Mg^{\mu\nu} \kappa^2 \phi^2,                                            \label{E:ENEMDef}
    \end{align}
    and \eqref{E:Divergence} becomes
    \begin{align}
        \partial_{\mu} \Theta^{\mu \nu}=0 \qquad (\nu =0,1,2,3). \label{E:ENEMContinuity}
    \end{align}

    We perform the same changes of variables in the equation \eqref{E:Continuity}
    and expand the covariant differentiation in terms of coordinate
    derivatives and the Christoffel symbols \eqref{E:Christoffel}, arriving at the
    equation
    \begin{align}
        \partial_{\mu}\left(n e^{3\phi} U^{\mu} \right)=0.                                                 \label{E:ENContinuity}
    \end{align}

    For our purposes below, we take as our equations the projections of
    \eqref{E:ENEMContinuity} onto the orthogonal complement of $U$
    and in the direction of $U.$ In this formulation, the mathematical form of the EN$_{\kappa}$
    system is that of the relativistic Euler equations in Mikowski space without gravitational interaction (as presented in
    \cite{dC2007}), with inhomogeneous terms involving $D\phi,$ and supplemented
    by the linear Klein-Gordon equation \eqref{E:Phi} for $\phi.$ Thus, we introduce $\Pi,$ the projection
    onto the orthogonal complement of $U,$ given by
    \begin{align}
    \Pi^{\mu \nu} \overset{\mbox{\tiny{def}}}{=} U^{\mu}U^{\nu} + \Mg^{\mu \nu}.   \label{E:Projection}
    \end{align}

    Considering first the projection of \eqref{E:ENEMContinuity} in the
    direction of $U,$ we remark that one may use \eqref{E:Pressure} and
    \eqref{E:ENContinuity} to conclude that for $C^1$ solutions,
    $U_{\nu}\partial_{\mu} \Theta^{\mu \nu}=0$ is equivalent to
    \begin{align}
        U^{\mu} \partial_{\mu}{\Ent} = 0,                                                            \label{E:Entropy}
    \end{align}
    which implies that the entropy density ${\Ent}$ is constant along
    the integral curves of $U.$

    The projection of \eqref{E:ENEMContinuity} onto the
    orthogonal complement of $U$ gives the 4 equations (only 3 of which are independent)
    \begin{align}
        (R+P) U^{\mu} \partial_{\mu} U^{\nu} + \Pi^{\mu \nu} \partial_{\mu}P = -(\square
        \phi - \kappa^2 \phi)\Pi^{\mu \nu}\partial_{\mu} \phi\qquad (\nu=0,1,2,3). \label{E:ENEMContinutiyProjection}
    \end{align}

    By \eqref{E:rho}, \eqref{E:R} and \eqref{E:P}, we may solve for $R$ as a function $\mathfrak{R}$ of ${\Ent}, P$ and $\phi:$
    \begin{align}
        R=\mathfrak{R}({\Ent},P,\phi)\overset{\mbox{\tiny{def}}}{=} e^{4\phi}\mathscr{R}({\Ent},e^{-4\phi}P).           \label{E:RfunctionDef}
    \end{align}

    We also the nameless quantity $Q$ and make use of
    \eqref{E:Pressure},
    \eqref{E:SpeedofSound}, \eqref{E:Sigma}, \eqref{E:rho}, \eqref{E:partialRpartialpandStoNegativeTwoRelationship}, \eqref{E:R}, and 
    \eqref{E:P} to express it as a function $\mathfrak{Q}$ of ${\Ent}, P$ and $\phi:$
    \begin{align}   \label{E:QfunctionDef}
        Q = \mathfrak{Q}({\Ent},P,\phi) \overset{\mbox{\tiny{def}}}{=} n \left. \frac{\partial P}{\partial n}\right|_{{\Ent},\phi} &=
        \left. \frac{\partial P}{\partial \rho} \right|_{\Ent,\phi} \cdot n \left. \frac{\partial \rho}{\partial n} 
        \right|_{{\Ent}} = e^{4\phi}{\mathscr{S}}^2({\Ent},p)(\rho+p)  \\
        &={\mathscr{S}}^2({\Ent},e^{-4\phi}P)[\mathfrak{R}({\Ent},P,\phi) + P].   \notag
    \end{align}

    We also we use the chain rule together with \eqref{E:ENContinuity},
    \eqref{E:Entropy}, and \eqref{E:QfunctionDef} to derive
    \begin{align}
        U^{\mu}\partial_{\mu}P + Q \partial_{\mu} U^{\mu} = (4P-3Q)U^{\mu} \partial_{\mu}\phi,
    \end{align}
    which we may use in place of \eqref{E:ENContinuity}.

    Deleting the redundant equation from
    \eqref{E:ENEMContinutiyProjection}, using \eqref{E:U0Def} to derive the relation
    \begin{align}
        \partial_{\lambda}U^0 = \frac{U_k}{U^0} \partial_{\lambda} U^k,
    \end{align}
    and rewriting \eqref{E:Phi} as an equivalent
    first order system, the working form of the EN$_{\kappa}$ system that we adopt is

    \begin{align}
            U^\mu \partial_\mu {\Ent} &= 0                                           \label{E:ENkappa1}\\
            U^\mu \partial_\mu P + Q\frac{U_k}{U^0}
                \partial_0U^k+ Q \partial_k U^k  &= (4P - 3Q)U^\mu
                \psi_{\mu}                                                      \label{E:ENkappa2}\\
            (R + P)U^\mu \partial_\mu U^j + \Pi^{\mu j} \partial_\mu P
            &= (3P-R) \Pi^{\mu j} \psi_{\mu} \qquad (j =1,2,3) \label{E:ENkappa3}\\
            -\partial_0 \psi_0 + \partial^j \psi_j &= \kappa^2 \phi + R - 3P                                             \label{E:ENkappa4}\\
            \partial_0 \psi_j - \partial_j \psi_0 &= 0 \qquad (j =1,2,3)               \label{E:ENkappa5}\\
            \partial_0 \phi &= \psi_0.      \label{E:ENkappa6}
    \end{align}
    Here, $U^0,R,$ and $Q$ are expressed in terms of the
    unknowns through the relations
    \begin{align}
        U^0 &= (1 + U^k U_k)^{1/2} \label{E:ENkappa7} \\
        Q   &=\mathfrak{Q}({\Ent},P,\phi) \label{E:ENkappa8}\\
        R   &= \mathfrak{R}({\Ent},P,\phi) \label{E:ENkappa9},
    \end{align}
    where the function $\mathfrak{Q}$ is defined in \eqref{E:QfunctionDef}, and the function $\mathfrak{R}$ is defined in
    \eqref{E:RfunctionDef}. In our rewriting of \eqref{E:Phi} as a first order
    system, we treat $\psi_{\nu} \overset{\mbox{\tiny{def}}}{=} \partial_{\nu} \phi$
    as separate unknowns for $\nu = 0,1,2,3.$ 
    
    To simplify the notation, we collect the unknowns
    $\mathbf{V}$ together into
    an array\footnote{Although every array appearing in this
        article is a $q \times 1$ column vector, we write them as
        if they were row vectors to save space.} given by
    \begin{align}
        \mathbf{V} & \overset{\mbox{\tiny{def}}}{=}
        ({\Ent},P,U^1,U^2,U^3,\phi,\psi_0,\psi_1,\psi_2,\psi_3),
        \label{E:Vdef}
    \end{align}
    and we refer to the first five components of $\mathbf{V}$ as
    \begin{align}
        \mathbf{W} & \overset{\mbox{\tiny{def}}}{=} ({\Ent},P,U^1,U^2,U^3).                                  \label{E:W}
    \end{align}

    \subsection{Linearization and the Equations of Variation (EOV)}                                                     \label{SS:EOV}
        The standard techniques for proving well-posedness
        require the linearization of the EN$_{\kappa}$ system around a known
        background solution, which we refer to as a ``bgs." Each
        bgs $\widetilde{\mathbf{V}}: \mathcal{M} \rightarrow \mathbb{R}^{10}$
        we consider is of the form $\widetilde{\mathbf{V}} = (\widetilde{{\Ent}},\widetilde{P},
        \cdots, \widetilde{\psi}_2, \widetilde{\psi}_3).$ The resulting system is known as the \emph{equations of variation} (EOV).
        Thus, given such a $\widetilde{\mathbf{V}}$ and inhomogeneous terms
        $f,g,\cdots,l^{(4)},$ we define the EOV by

    \begin{align}
            \widetilde{U}^\mu \partial_\mu \dot{{\Ent}} &= f                                 \label{E:EOV1} \\
            \widetilde{U}^\mu \partial_\mu \dot{P} +
                \widetilde{Q}\frac{\widetilde{U}_k}{\tilde{U}^0}
                \partial_0\dot{U}^k+ \widetilde{Q} \partial_k \dot{U}^k  &= g           \label{E:EOV2} \\
            (\widetilde{R} + \widetilde{P})\widetilde{U}^\mu \partial_\mu
                \dot{U}^j + \widetilde{\Pi}^{\mu j} \partial_\mu \dot{P}   &= h^{(j)} \qquad
                (j =1,2,3)                                                 \label{E:EOV3} \\
            -\partial_0 \dot{\psi}_0 + \partial^j \dot{\psi}_j &= l^{(0)}               \label{E:EOV4}\\
            \partial_0 \dot{\psi}_j - \partial_j \dot{\psi}_0 &= l^{(j)} \qquad (j =1,2,3)   \label{E:EOV5}\\
              \partial_0 \dot{\phi} &= l^{(4)},                                         \label{E:EOV6}
    \end{align}
    where
    \begin{align}
        \widetilde{U}^0 &\overset{\mbox{\tiny{def}}}{=} (1 + \widetilde{U}^k \widetilde{U}_k)^{1/2}  \\
        \widetilde{\Pi}^{\mu \nu} &\overset{\mbox{\tiny{def}}}{=} \widetilde{U}^{\mu}\widetilde{U}^{\nu} + \Mg^{\mu \nu}    \\
        \widetilde{Q}   &\overset{\mbox{\tiny{def}}}{=} \mathfrak{Q}(\widetilde{{\Ent}},\widetilde{P},\widetilde{\phi})\\
        \widetilde{R}   &\overset{\mbox{\tiny{def}}}{=}
        \mathfrak{R}(\widetilde{{\Ent}},\widetilde{P},\widetilde{\phi}). \label{E:WidetildeR}
    \end{align}
    Here, the function $\mathfrak{Q}$ is defined in \eqref{E:QfunctionDef}, and
    the function $\mathfrak{R}$ is defined in \eqref{E:RfunctionDef}. The unknowns are the components of $\dot{\mathbf{V}} 
    \overset{\mbox{\tiny{def}}}{=} (\dot{{\Ent}},\dot{P},\cdots, \dot{\psi}_2,\dot{\psi}_3)$,
    and we label the first five components of $\dot{\mathbf{V}}$ by
    $\dot{\mathbf{W}} \overset{\mbox{\tiny{def}}}{=} (\dot{{\Ent}},\dot{P},\dot{U}^1,\dot{U}^2,\dot{U}^3).$

    The EOV play multiple roles in this article.
    Except when discussing the space of variations $\dot{\mathbf{V}}$
    as an abstract vector space isomorphic to $\mathbb{R}^{10},$ we use the
    symbol $\dot{\mathbf{V}}$ to represent a quantity that solves the EOV.
    The quantity represented by $\dot{\mathbf{V}}$, the bgs $\widetilde{\mathbf{V}},$ and the inhomogeneous terms will vary
    from application to application, but we will always be clear about their definitions in the relevant sections.

    In the case that we are discussing the linearization of the EN$_{\kappa}$ system around a bgs $\widetilde{\mathbf{V}},$ the 
    inhomogeneous terms take the form
    \begin{align}
        f&=\mathfrak{F}(\widetilde{\mathbf{V}})\overset{\mbox{\tiny{def}}}{=}0 \label{E:Inhomogeneousf} \\
        g&= \mathfrak{G}(\widetilde{\mathbf{V}})\overset{\mbox{\tiny{def}}}{=}(4\widetilde{P} - 3\widetilde{Q})\widetilde{U}^\mu
        \widetilde{\psi}_{\mu} \label{E:Inhomogeneousg}\\
        h^{(j)}&=\mathfrak{H}^{(j)}(\widetilde{\mathbf{V}})\overset{\mbox{\tiny{def}}}{=}(3\widetilde{P}-\widetilde{R}) \widetilde{\Pi}^{\mu j} \widetilde{\psi}_{\mu} \qquad  (j=1,2,3)\label{E:Inhomogeneoushj}\\
        l^{(0)}&=\mathfrak{L}^{(0)}(\widetilde{\mathbf{V}})\overset{\mbox{\tiny{def}}}{=}\kappa^2 \widetilde{\phi} + \widetilde{R} -
        3\widetilde{P}\label{E:Inhomogeneousl0}\\
        l^{(j)}&=\mathfrak{L}^{(j)}(\widetilde{\mathbf{V}})\overset{\mbox{\tiny{def}}}{=}0 \qquad (j =1,2,3)\label{E:Inhomogeneouslj}\\
        l^{(4)}&=\mathfrak{L}^{(4)}(\widetilde{\mathbf{V}})\overset{\mbox{\tiny{def}}}{=}\widetilde{\psi}_0\label{E:Inhomogeneousl4},
    \end{align}
    where $\mathfrak{F},\mathfrak{G},\cdots,\mathfrak{L}^{(4)}$
    are functions of $\widetilde{\mathbf{V}}.$

    It is quite important that the coordinate derivatives of
    solutions to \eqref{E:EOV1} - \eqref{E:WidetildeR} also satisfy \eqref{E:EOV1} -
    \eqref{E:WidetildeR} with different inhomogeneous terms. This may be seen
    by differentiating the equations and relegating all but the principal terms
    to the right-hand side. Similarly, the difference of two solutions to \eqref{E:EOV1} -
    \eqref{E:WidetildeR} also satisfies \eqref{E:EOV1} - \eqref{E:WidetildeR}. Thus, the ``$\cdot$'' is a suggestive placeholder
    that will frequently represent ``derivative" or ``difference" depending on the
    application.

    \begin{notation}
        In reference to the inhomogeneous terms on the right-hand side of \eqref{E:Inhomogeneousf} - \eqref{E:Inhomogeneousl4},
        we often use vector notation including but not limited to
        \begin{align}
            \mathbf{b} &= (f,g,h^{(1)},h^{(2)},h^{(3)})                                 \label{E:bdef} \\
            \mathbf{l} &= (l^{(0)},l^{(1)},l^{(2)},l^{(3)},l^{(4)}).                    \label{E:ldef}
        \end{align}
            When it is convenient, we will use different vector notation to refer to the inhomogeneous terms, but we always use
            the notation $f,g,\cdots,l^{4}$ to refer to the inhomogeneous terms in scalar form; our use of
            notation for the inhomogeneous terms will always be made clear in the relevant sections.
    \end{notation}

            \noindent {\it Terminology:} If $\dot{\mathbf{V}}$ is a solution to the system \eqref{E:EOV1} - \eqref{E:WidetildeR}, we say
            that $\dot{\mathbf{V}}$ is a solution to the EOV
            defined by the bgs $\widetilde{\mathbf{V}}$ with inhomogeneous terms
            $(\mathbf{b},\mathbf{l}).$

    \begin{notation}                                                                                                                                                    \label{N:MatrixNotation}
            We will often find it advantageous to abbreviate the
            ``upper half" of the various systems in this article using matrix notation. For
            example, we sometimes write \eqref{E:EOV1} - \eqref{E:EOV3} as
            \begin{align}
                A^\mu(\widetilde{\mathbf{V}})\partial_\mu \dot{\mathbf{W}} &=\mathbf{b},           \label{E:MatrixDef}
            \end{align}
            where each $A^{\mu}(\widetilde{\mathbf{V}})$ is a $5 \times 5$ matrix
            with entries that are functions of the bgs $\widetilde{\mathbf{V}},$ while $\mathbf{b}$ is
            defined by \eqref{E:bdef}.
            For instance,
            \begin{equation}                                        \label{E:A0Def}
                A^0(\widetilde{\mathbf{V}}) =
                    \begin{pmatrix}
                        \widetilde{U}^0 & 0 & 0 & 0 &0\\
                        0 & \widetilde{U}^0 & \widetilde{Q} \widetilde{U}^1/\widetilde{U}^0 &
                            \widetilde{Q} \widetilde{U}^2/\widetilde{U}^0 &
                            \widetilde{Q} \widetilde{U}^3/\widetilde{U}^0 \\
                        0 & \widetilde{\Pi}^{01} & (\widetilde{R} +
                            \widetilde{P})\widetilde{U}^0 & 0 & 0\\
                            0 & \widetilde{\Pi}^{02} & 0 & (\widetilde{R} +
                            \widetilde{P})\widetilde{U}^0 & 0\\
                        0 & \widetilde{\Pi}^{03} & 0 & 0 &(\widetilde{R} +
                            \widetilde{P})\widetilde{U}^0 \\
                    \end{pmatrix},
            \end{equation}
            and similarly for the $A^k(\widetilde{\mathbf{V}}),$ for $k=
            1,2,3.$
        \end{notation}
        
        \begin{remark}
        		We reserve the use of matrix notation
            for the ``upper half" for two reasons. The first is that the ``lower half" involves
            constant coefficient differential operators, so when differentiating the ``lower half" equations, we don't
            have to worry about commutator terms, which are easily expressed using matrix notation as in
            \eqref{E:kalphaDef}, arising from differential operators acting on the
            coefficients. The second reason is that in future work, we plan to study the ``lower-half" in its original form as
            an inhomogeneous Klein-Gordon equation, but we will
            still use matrix notation for the ``upper-half."
        \end{remark}

       	\begin{remark}                                                              \label{R:Invertible}
            A calculation gives that $\det \big(A^0(\widetilde{\mathbf{V}})\big) =
            -\widetilde{Q}(\widetilde{R}+\widetilde{P})^2(\widetilde{U}^0)^3,$
            and in the Cauchy problem studied below, this formula will
            ensure that $A^0$ is invertible.
        \end{remark}

\section{The Geometry of the EN$_{\kappa}$ System} \label{S:Geometry}

        In this section, we discuss the geometry of the
        characteristics of the EN$_{\kappa}$ system and relate the
        geometry to the speeds of propagation.

        \subsection{The symbol and the characteristic subset of $T^*_x \mathcal{M}$}

        The \emph{symbol} $\sigma_{\xi}$ of the equations of
        variation at a given covector $\xi \in T^*_x \mathcal{M},$ the cotangent
        space of $\mathcal{M}$ at $x,$ is a linear operator
        on the space of variations $\dot{\mathbf{V}}.$ This operator
        is obtained by making the replacements $\partial_\lambda \mathscr{U}
        \longrightarrow \xi_{\lambda} \dot{\mathscr{U}}$ on the left-hand side of the
        system \eqref{E:EOV1} - \eqref{E:EOV6}. Here, $\mathscr{U}$ stands for any of
        the unknowns. The \emph{characteristic
        subset of the cotangent space} at $x$ is defined to be the
        set of all covectors $\xi \in T^*_x \mathcal{M}$ such that
        $\sigma_{\xi}$ has a nontrivial null space. Thus, $\xi$
        lies in the characteristic subset of $T^*_x \mathcal{M}$ iff the following
        \emph{algebraic} system has non-zero solutions $\dot{\mathbf{V}}\subset \mathbb{R}^{10}:$
        \begin{align}
            \widetilde{U}^\mu \xi_{\mu} \dot{{\Ent}} &= 0                                                    \label{E:Characteristics1}\\
            \widetilde{U}^\mu \xi_{\mu} \dot{P} +
            \widetilde{Q}\frac{\widetilde{U}_k}{\widetilde{U}^0}
                \xi_0 \dot{U}^k + \widetilde{Q} \xi_k \dot{U}^k  &= 0                                   \label{E:Characteristics2}\\
            (\widetilde{R} + \widetilde{P})\widetilde{U}^\mu \xi_\mu \dot{U}^j + \widetilde{\Pi}^{\mu j} \xi_\mu \dot{P}
            &= 0 \qquad (j =1,2,3)                                             \label{E:Characteristics3}\\
            \xi_{\mu} \dot{\psi}^{\mu} &= 0                                                   \label{E:Characteristics4}\\
            \xi_0 \dot{\psi}_j - \xi_j \dot{\psi}_0 &=0 \qquad (j=1,2,3)       \label{E:Characteristics5}\\
            \xi_0 \dot{\phi} &= 0.                                                 \label{E:Characteristics6}
        \end{align}

        The determinant of the linear operator $\sigma_{\xi}$ at
        $x,$ known as the \emph{characteristic form} of the
        EOV and denoted by $\mathcal{Q}(x;\xi),$ is given by
        \begin{align}
            \mathcal{Q}(x;\xi) \overset{\mbox{\tiny{def}}}{=}
            (\xi_0)^3\big(\widetilde{U}^{\lambda}\xi_{\lambda}\big)^3 (\widetilde{h}^{-1})^{\mu \nu}\Mg^{\alpha \beta} \xi_{\mu}
            \xi_{\nu}\xi_{\alpha}\xi_{\beta},                                                                       \label{E:NormalCone}
        \end{align}
        where $\widetilde{h}^{-1}$ is the \emph{reciprocal acoustical metric}, a non-degenerate
        quadratic form on $T^*_x \mathcal{M}$ defined by
        \begin{align}
            (\widetilde{h}^{-1})^{\mu \nu} &\overset{\mbox{\tiny{def}}}{=} \widetilde{\Pi}^{\mu \nu} - \widetilde{\sigma}^{-2} \widetilde{U}^{\mu}
            \widetilde{U}^{\nu}= \Mg^{\mu \nu} - \left(\widetilde{\sigma}^{-2} - 1\right) \widetilde{U}^{\mu}
            \widetilde{U}^{\nu}, \\
                \widetilde{\sigma} &\overset{\mbox{\tiny{def}}}{=}\mathscr{S}(e^{-4\widetilde{\phi}}\widetilde{P},\widetilde{{\Ent}}),
        \end{align}
        and the function $\mathscr{S}$ is defined by \eqref{E:Sigma}.
        The characteristic subset of $T_x^*$
        is therefore equal to the level set
        \begin{align}
            \lbrace \xi \in T^*_x \mathcal{M}|\mathcal{Q}(x;\xi)=0 \rbrace.
        \end{align}
        Consequently, $\xi$ is an element of the characteristic subset of $T^*_x \mathcal{M}$ iff one of the following four
        conditions holds:
        \begin{align}
            \xi_{\mu} \widetilde{U}^{\mu} &= 0                                                  \label{E:CotCharSubset1}\\
            (\widetilde{h}^{-1})^{\mu \nu} \xi_{\mu} \xi_{\nu} &= 0                              \label{E:CotCharSubset2}\\
            \Mg^{\mu \nu} \xi_{\mu} \xi_{\nu} &= 0                                 \label{E:CotCharSubset3}\\
            \xi_0 & = 0.                                                            \label{E:CotCharSubset4}
        \end{align}

        Condition \eqref{E:CotCharSubset1} defines a plane $P^*_{x,\widetilde{U}}$ in $T^*_x
        \mathcal{M},$ while conditions \eqref{E:CotCharSubset2} and
        \eqref{E:CotCharSubset3} define cones $C^*_{x,s(ound)}$ and $C^*_{x,l(ight)},$ respectively, in $T^*_x
        \mathcal{M}.$ Condition \eqref{E:CotCharSubset4} also defines a
        plane $P^*_{x,0}$ in $T^*_x \mathcal{M},$ and its presence is a consequence of our
        choice of $\partial_t \phi$ as a state-space variable in our rewriting of the linear Klein-Gordon
        equation as a first order system. We refer to \eqref{E:CotCharSubset1} - \eqref{E:CotCharSubset4} as the four
        \emph{sheets}
        of the characteristic subset of $T^*_x \mathcal{M}.$
        Fig. \ref{Fi:CotCharSubset} illustrates the characteristic subset of $T^*_x \mathcal{M}.$ In the illustration, we
        masquerade as if the domain of solutions to the EOV is $R^{1+2},$ with the vertical direction representing positive
        values of $\xi_0.$
        \begin{figure}[htp]
        		\centering
            \includegraphics{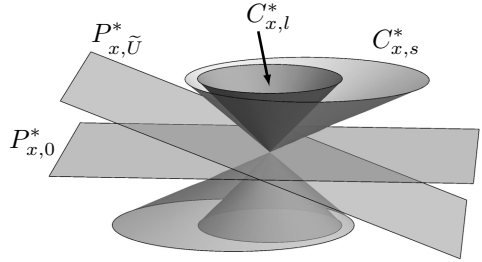}
            \caption{The Characteristic Subset of $T^*_x
            \mathcal{M}$}\label{Fi:CotCharSubset}
        \end{figure}

        \subsection{Characteristic surfaces and the
        characteristic subset of $T_x \mathcal{M}$} \label{SS:TangentSpaceCharactersisticSubset}
        A $C^1$ surface $S \subset \mathcal{M}$ that is given as a level set of a function $\Phi$
        is said to be a \emph{characteristic surface} if at each point
        $x \in S,$ the covector $\xi$ with components
        $\xi_{\nu} = \partial_{\nu} \Phi \ \mbox{for} \ \nu=0,1,2,3,$ is an
        element of the characteristic subset of $T^*_x \mathcal{M}.$ It is well-known (consult e.g. \cite{rCdH1966}) that jump 
        discontinuities
        in weak solutions can occur across characteristic surfaces, and
        that characteristic surfaces play a role in determining
        a domain of influence of a region of spacetime.

        There is an alternative characterization of
        characteristic surfaces in terms of the duals of the sheets $P^*_{x,\widetilde{U}},$ $P^*_{x,0},$ $C^*_{x,s},$
        and $C^*_{x,l}.$ The notion of duality we refer to is as follows
        (consult e.g. \cite{rCdH1966}): To each covector $\xi$ in the
        characteristic subset of $T_x^* \mathcal{M}$ there
        corresponds the \emph{null space} of $\xi,$ which we denote by $N_{\xi}.$
        This 3-dimensional plane is a subset of $T_x \mathcal{M}$,
        the tangent space of $\mathcal{M}$ at $x,$ and is described in coordinates as
        $N_{\xi}\overset{\mbox{\tiny{def}}}{=}\lbrace X \in T_x
        \mathcal{M} | \xi_{\mu} X^{\mu}=0 \rbrace.$ We define the dual to a sheet
        of the characteristic subset of $T_x^* \mathcal{M}$ to be the envelope
        in $T_x \mathcal{M}$ generated by the $N_{\xi}$ as $\xi$ varies over the sheet. The \emph{characteristic
        subset of the tangent space} at $x$ is defined to be the union of the duals to
        the sheets \eqref{E:CotCharSubset1} - \eqref{E:CotCharSubset4}. A calculation of the envelopes implies that the respective
        duals to \eqref{E:CotCharSubset1}, \eqref{E:CotCharSubset2},
        \eqref{E:CotCharSubset3}, and \eqref{E:CotCharSubset4} are the sets of $X \in T_x \mathcal{M}$ such that
        in our fixed rectangular coordinate system (see Section \ref{SS:SpacetimeConventions}),

        \begin{align}
            X &= \lambda \widetilde{U} \ \mbox{for some} \ \lambda \in \mathbb{R}  \label{E:TanCharSubset1} \\
            \widetilde{h}_{\mu \nu}X^{\mu}X^{\nu}&=0                             \label{E:TanCharSubset2} \\
            \Mg_{\mu \nu}X^{\mu}X^{\nu}&=0                        \label{E:TanCharSubset3}\\
            X &= \lambda (1,0,0,0) \ \mbox{for some} \ \lambda \in \mathbb{R}, \label{E:TanCharSubset4}
        \end{align}
        where

        \begin{align}
            \widetilde{h}_{\mu \nu} \overset{\mbox{\tiny{def}}}{=} \Mg_{\mu \nu} + (1- \widetilde{\sigma}^2)\widetilde{U}_{\mu}\widetilde{U}_{\nu}
        \end{align}
        is the \emph{acoustical metric}, a non-degenerate quadratic
        form on $T_x \mathcal{M}.$ The dual to $P^*_{x,\widetilde{U}},$
        given by \eqref{E:TanCharSubset1}, is the linear span of $\widetilde{U},$ and
        the dual to the plane $P^*_{x,0},$ given by \eqref{E:TanCharSubset4}, is the linear
        span of $(1,0,0,0).$ The dual to $C^*_{x,s},$ given by \eqref{E:TanCharSubset2} and
        labeled as $C_{x,s},$ is the sound cone in $T_x \mathcal{M},$
        while the dual to $C^*_{x,l},$ given by \eqref{E:TanCharSubset3} and
        labeled as $C_{x,l},$ is the light cone in $T_x \mathcal{M}.$
        We refer to these subsets of $T_x \mathcal{M}$ as the four sheets of the
        characteristic subset of the $T_x \mathcal{M}$ (noting that
        the degenerate cases \eqref{E:TanCharSubset1} and \eqref{E:TanCharSubset4} are
        lines rather than ``sheets"). See Fig. \ref{Fi:TanCharSubset}
        for the picture in $\mathbb{R}^{1+2},$ where the
        vertical direction represents positive values of $X^0.$

        \begin{figure}[htp]
        		\centering
            \includegraphics{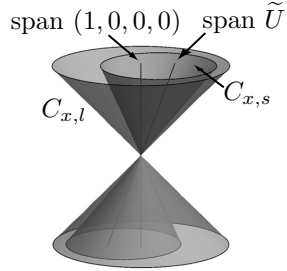}
            \caption{The Characteristic Subset of $T_x \mathcal{M}$} \label{Fi:TanCharSubset}
        \end{figure}

        It follows from the above description that for each $\xi$ belonging to
        a fixed sheet of the characteristic subset of $T_x^* \mathcal{M},$
        $N_{\xi}$ is tangent to the corresponding sheet of the characteristic subset
        of $T_x \mathcal{M}.$ Therefore, we may equivalently define a
        characteristic surface as a $C^1$ surface $S$ such that the tangent
        plane at each of its points $x$ is tangent to any of the
        four sheets of the characteristic subset of $T_x \mathcal{M}.$

        \begin{remark}
            Note that $C_{x,s}$ lies inside $C_{x,l},$ but
            $C_{x,l}^*$ lies inside $C_{x,s}^*.$
        \end{remark}

        \subsection{Inner characteristic core, strict hyperbolicity, spacelike surfaces}    \label{SS:Hyperbolicity}
        The \emph{inner characteristic core} of the cotangent space at
        $x,$ denoted $\mathcal{I}^*_x,$ is the subset of $T_x^* \mathcal{M}$ lying \emph{strictly} inside
        the innermost sheet $C^*_{x,l}.$ $\mathcal{I}^*_x$ comprises two components,
        and we refer to the component such that each co-vector $\xi$ belonging to it
        has $\xi_0 > 0$ as the \emph{positive component,} denoted by $\mathcal{I}^{*+}_x:$
        \begin{align}
            \mathcal{I}^{*+}_x \overset{\mbox{\tiny{def}}}{=} \lbrace \xi \in T^*_x \mathcal{M} | \xi_{\mu} \xi^{\mu} < 0
            \ \mbox{and} \ \xi_0 > 0 \rbrace.                                                   \label{E:ICCPC}
        \end{align}

        A co-vector $\xi \in T^*_x \mathcal{M}$ is said to be
        \emph{hyperbolic for $\mathcal{Q}$ at $x$} iff for any co-vector
        $\upsilon$ not parallel to $\xi,$ $\mathcal{Q}(x;\lambda \xi + \upsilon)=0$
        has real roots in $\lambda,$ where $\mathcal{Q}$ is given
        in \eqref{E:NormalCone}. The set of hyperbolic co-vectors at
        $x$ is equal to $\mathcal{I}^*_x;$ see Fig. \ref{Fi:CotCharSubset}. A co-vector $\xi \in T^*_x \mathcal{M}$ is said to be
        \emph{strictly hyperbolic}\footnote{For PDEs derivable from a Lagrangian, the notions of hyperbolicity,
        characteristic subsets, etc., have been generalized by
        Christodoulou \cite{dC2000} in a manner that allows one to handle characteristic
        forms that feature multiple roots.} \emph{for $\mathcal{Q}$ at $x$} iff for any co-vector
        $\upsilon$ not parallel to $\xi,$ $\mathcal{Q}(x;\lambda \xi + \upsilon)=0$
        has \emph{distinct} real roots in $\lambda.$ As mentioned in Section \ref{S:Introduction},
        the EOV (and hence the EN$_{\kappa}$ system) are (is) not strictly hyperbolic because
        of the repeated factors in the expression \eqref{E:NormalCone} for $\mathcal{Q}(x;\cdot),$
        and because two of the sheets of the characteristic subset of $T^*_x \mathcal{M}$ intersect.

        A $C^1$ surface $S \subset \mathcal{M}$ is said to be \emph{spacelike} (with respect
        to the light cones $C^*_{x,l}$) if at each $x \in S,$
        there is a co-vector $\xi$ belonging to $\mathcal{I}^*_x$
        such that the tangent plane to $S$ at $x$ is equal to
        $N_{\xi}.$ Based on the discussion above,
        it follows that $S$ is spacelike at $x$ iff the tangent plane to $S$ at $x$
        is the null space of a co-vector $\xi$ that is hyperbolic for $\mathcal{Q}$ at $x.$

        \subsection{Speeds of propagation}                                                \label{SS:Speeds of Propagation}

        It is well-known that for first order symmetric hyperbolic systems,
        the speeds of propagation are locally governed by the
        characteristic subsets. For example, in the case that the
        characteristic subset of $T^*_x \mathcal{M}$ at each
        $x$ includes an innermost sheet, the \emph{domain of influence} of a
        spacetime point $x'$ is contained in the interior of the forward conoid in $\mathcal{M}$ traced out
        by the set of all curves emanating from $x'$ and remaining
        tangent to the sheets of the characteristic subsets of the $T_x
        \mathcal{M}$ that are dual to the innermost sheets of the characteristic subsets
        of the $T_x^* \mathcal{M}$ as the curve parameter varies; consult \cite{pL2006}
        for a detailed discussion of this fact.

        We will later illustrate the occurrence of similar phenomena in the EN$_{\kappa}$ system. In
        this case, the innermost sheet at $x$ is $C^*_{x,l},$ the
        dual of which is $C_{x,l},$ the light cone in $T_x \mathcal{M}.$ Therefore, the forward conoid emanating from a spacetime point $x'$
        is the forward light cone in $\mathcal{M}$ with vertex at $x'.$ Thus, one would expect that the fastest speed of
        propagation in the EN$_{\kappa}$ system is the speed of light. This claim is given rigorous meaning
        below in the uniqueness argument (see Section \ref{SS:UniqunessandHNMinusOneDependence}) which shows, for example, that a
        solution that is constant in the Euclidean sphere of radius $r$ centered at the point $\snew \in \mathbb{R}^3$ at $t=0$ remains
        constant in the Euclidean sphere of radius $r - t$ centered at $\snew$ at time $t > 0;$ see Remark \ref{R:DOI}.

        We contrast this to the case of the special-relativistic Euler equations without gravitational interaction, in which there is 
        no Klein-Gordon equation governing the
        propagation of gravitational waves at the speed of light, and the set
        $C^*_{x,l}$ does not belong to the characteristic subset of $T_x^* \mathcal{M}.$
        The inner sheet at $x$ in this case is $C^*_{x,s},$ the dual of which is $C_{x,s},$ the sound cone
        in $T_x \mathcal{M},$ and the methods applied below can be used to
        show that the fastest local speed of propagation is dictated by the sound cones
        $C_{x,s}.$ This case is studied in detail in \cite{dC2000} and \cite{dC2007}.

    \subsection{Energy currents} \label{SS:EnergyCurrents}

        The role of energy currents in the well-posedness proof
        is to replace the energy principle available for symmetric
        hyperbolic systems. After providing the definition of an energy current, we
        illustrate its two key properties, namely that it has the positivity property
        \eqref{E:PositiveDefinite} below, and that its divergence is lower order in the variation
        $\dot{\mathbf{V}}.$

        \subsubsection{The definition of an energy current}

        Given a variation $\dot{\mathbf{V}}: \mathcal{M} \rightarrow \mathbb{R}^{10}$
        and a bgs $\widetilde{\mathbf{V}}: \mathcal{M} \rightarrow \mathbb{R}^{10}$
        as defined in Section \ref{SS:EOV},
        we define the energy current to be the vectorfield $\dot{J}$ with
        components $\dot{J}^0,$ $\dot{J}^{j},$ $j=1,2,3,$ in the global rectangular coordinate system given by
        \begin{align}                                                                                       
            \hspace{-2cm} \dot{J}^0 \overset{\mbox{\tiny{def}}}{=} \widetilde{U}^0 \dot{{\Ent}}^2  & +
                    \frac{\widetilde{U}^0}{\widetilde{Q}}\dot{P}^2 +
                    2\frac{\widetilde{U}_k\dot{U}^k}{\widetilde{U}^0} \dot{P}
                 + (\widetilde{R} + \widetilde{P})\widetilde{U}^0 \Big[\dot{U}^k
                    \dot{U}_k - \frac{(\widetilde{U}_k
                    \dot{U}^k)^2}{(\widetilde{U}^0)^2} \Big] \notag \\
                 & + \frac{1}{2} \big[(\dot{\phi})^2
                    + (\dot{\psi}_0)^2 + (\dot{\psi}_1)^2 + (\dot{\psi}_2)^2 + (\dot{\psi}_3)^2
                    \big] \notag
        \end{align}
            \begin{align}
            \dot{J}^j &\overset{\mbox{\tiny{def}}}{=}\widetilde{U}^j \dot{{\Ent}}^2  +
                    \frac{\widetilde{U}^j}{\widetilde{Q}} \dot{P}^2 + 2 \dot{U}^j \dot{P}
                 + (\widetilde{R} + \widetilde{P})\widetilde{U}^j \Big[\dot{U}^k
                    \dot{U}_k - \frac{(\widetilde{U}_k \dot{U}^k)^2}{(\widetilde{U}^0)^2} \Big]  -
                    \dot{\psi}_0 \dot{\psi}_j . \label{E:EnergyCurrent}
                \end{align}

        \begin{notation}
            In an effort to avoid cluttering the notation,
            we sometimes suppress the direct dependence of $\dot{J}$ on $\dot{\mathbf{V}}$
            and $\widetilde{\mathbf{V}}$ and instead emphasize the indirect dependence of
            $\dot{J}$ on $(t,\snew)$ through $\dot{\mathbf{V}}$ and $\widetilde{\mathbf{V}}$ by writing
            $``\dot{J}(t,\snew)."$
        \end{notation}

            \noindent {\it Terminology:} We say that $\dot{J}$ is the energy current for the variation
            $\dot{\mathbf{V}}$ with coefficients defined by the bgs $\widetilde{\mathbf{V}}.$

        \begin{remark}
            The theory of hyperbolic PDEs derivable from a Lagrangian, and in particular the derivation of energy
            currents, is developed by Christodoulou in \cite{dC2000}. For readers interested in studying
            Christodoulou's techniques, we remark that the Lagrangian density for
            \eqref{E:ENkappa1} - \eqref{E:ENkappa3} (the
            first 5 scalar equations of the EN$_{\kappa}$ system) is
            expressed in the original variables as $\rho
            e^{4 \phi}.$ The energy current \eqref{E:EnergyCurrent} is the sum of an energy
            current for the linear Klein-Gordon equation, which supplies
            the terms involving $(\dot{\phi})^2$ and $(\dot{\psi_{\nu}})^2,$ and an energy current used by Christodoulou in
            \cite{dC2007} to study the special-relativistic Euler equations without gravitational interaction.
        \end{remark}

    \subsubsection{The positive definiteness of $\xi_{\mu}\dot{J}^{\mu}$ for $\widetilde{P}>0$ and $\xi \in \mathcal{I}^{*+}_x$}

        Given an energy current as defined by \eqref{E:EnergyCurrent} and a co-vector $\xi \in T^*_x
        \mathcal{M},$ the quantity $\xi_{\mu}\dot{J}^{\mu}$ may be viewed as a quadratic form in
        the variations $\dot{\mathbf{V}}$
        with coefficients defined by the bgs $\widetilde{\mathbf{V}}.$
        We emphasize this quadratic dependence on the variations by writing
        $\xi_{\mu}\dot{J}^{\mu}(\dot{\mathbf{V}},\dot{\mathbf{V}}).$ One of the two key
        features of the energy current is that $\widetilde{P}>0$ and $\xi \in
        \mathcal{I}^{*+}_x$ together imply that the form $\xi_{\mu}\dot{J}^{\mu}(\dot{\mathbf{V}},\dot{\mathbf{V}})$
        is positive definite in $\dot{\mathbf{V}}:$
        \begin{equation}                                                                                    \label{E:PositiveDefinite}
           \xi_\mu \dot{J}^\mu(\dot{\mathbf{V}},\dot{\mathbf{V}})> 0 \  \mbox{if} \ \xi \in
            \{ \zeta \in T^*_x(\mathcal{M}) \ | \ \zeta_\mu \zeta^\mu < 0 \ \mbox{and} \ \zeta_0 > 0
            \} \ \mbox{and} \ \dot{\mathbf{V}} \neq \mathbf{0}.
        \end{equation}
        A direct verification of this fact can be carried out, for example,
        by calculating the eigenvalues of the matrix of the quadratic form 
        $\xi_{\mu}\dot{J}^{\mu}(\dot{\mathbf{V}},\dot{\mathbf{V}}).$ The
        eigenvalues depend on $\xi$ and are positive whenever $\widetilde{P}>0$ and $\xi \in \mathcal{I}^{*+}_x.$
        As we shall soon see, inequality \eqref{E:PositiveDefinite} will allow us to use the form
        $\xi_{\mu}\dot{J}^{\mu}(\dot{\mathbf{V}},\dot{\mathbf{V}})$
        to estimate the $L^2$ norms of the variations, provided that we
        estimate the bgs $\widetilde{\mathbf{V}}.$

        \begin{remark}
                Although later in this article we make use of the fact that $\dot{\mathbf{V}}$ is a
                solution to the EOV, the inequality in \eqref{E:PositiveDefinite} does
                not rely on this fact; it is an algebraic statement about $\xi_{\mu} \dot{J}^{\mu}(\dot{\mathbf{V}},\dot{\mathbf{V}})$
                viewed as a quadratic form on $\mathbb{R}^{10}.$
        \end{remark}

    \subsubsection{The divergence of the energy current}
        If the variations $\dot{\mathbf{V}}$ are solutions of the EOV, then we can compute $\partial_\mu \dot{J}^\mu$ and use the 
        equations \eqref{E:EOV1} - \eqref{E:EOV6} for substitution to eliminate the terms containing the derivatives of 
        $\dot{\mathbf{V}}:$

        \begin{align}
            \begin{split}                                                                           \label{E:EnergyCurrentDivergence}
                \partial_\mu \dot{J}^\mu & =
                    (\partial_\mu \widetilde{U}^\mu) \dot{{\Ent}}^2 +
                    \partial_\mu \left(\frac{\widetilde{U}^\mu}{\widetilde{Q}}\right) \dot{P}^2
                   +  2 \partial_0 \left(\frac{\widetilde{U}_k}{\widetilde{U}^0}\right) \dot{U}^k \dot{P}                \\
                &  + \partial_\mu[(\widetilde{R} + \widetilde{P})\widetilde{U}^\mu]
                    \left[\dot{U}^k \dot{U}_k - \frac{(\widetilde{U}_k
                    \dot{U}^k)^2}{(\widetilde{U}^0)^2} \right]
                    - 2\widetilde{U}_k \dot{U}^k(\widetilde{R}
                    + \widetilde{P})\left(\frac{\widetilde{U}^\mu}{\widetilde{U}^0}\right)
                    \partial_\mu \left(\frac{\widetilde{U}_j}{\widetilde{U}^0}\right)\dot{U}^j                  \\
                &   + 2 \dot{{\Ent}}f + 2\frac{\dot{P}g}{\widetilde{Q}} + 2
                    \dot{U}_k h^{(k)} - 2 \frac{\widetilde{U}_j h^{(j)} \widetilde{U}_k
                    \dot{U}^k}{(\widetilde{U}^0)^2} - \ {\dot{\psi}}_0 l^{(0)}  \ + \ {\dot{\psi}}_k l^{(k)} \ + \ \dot{\phi}l^{(4)} .
            \end{split}
        \end{align}
        That the right-hand side of \eqref{E:EnergyCurrentDivergence} does not contain any derivatives of the
        variations is the second key property announced at the beginning of Section \ref{SS:EnergyCurrents}.

        \begin{remark}                                                              \label{R:HigherOrderEnergyCurrents}
        Given a spatial derivative multi-index $\vec{\alpha}$ and
        an energy current $\dot{J}$ as defined in
        \eqref{E:EnergyCurrent} such that the variation $\dot{\mathbf{V}}$ is a solution of
        \eqref{E:EOV1} - \eqref{E:WidetildeR} with inhomogeneous terms $(\mathbf{b},\mathbf{l})$, where $\mathbf{b}$ and
        $\mathbf{l}$ are defined by \eqref{E:bdef} and \eqref{E:ldef} respectively,
        we define the \emph{higher-order energy current} $\dot{J}_{\vec{\alpha}}$ to be the energy
        current for the variation $\partial_{\vec{\alpha}}\dot{\mathbf{V}}$ with
        coefficients defined by the same bgs $\widetilde{\mathbf{V}}.$ The
        variations $\partial_{\vec{\alpha}}\dot{\mathbf{V}}$ are solutions
        of \eqref{E:EOV1} - \eqref{E:WidetildeR} with inhomogeneous terms $(\mathbf{b}_{\vec{\alpha}},
        \partial_{\vec{\alpha}}\mathbf{l}),$ where $\mathbf{b}_{\vec{\alpha}}$ is defined
        in terms of $\mathbf{b}$ below through \eqref{E:balphaDef}.
        Consequently, the expression for $\partial_\mu \dot{J}^\mu_{\vec{\alpha}}$ is
        given by taking the formula \eqref{E:EnergyCurrentDivergence}
        for $\partial_\mu \dot{J}^{\mu}$ and making the replacements
        $\dot{\mathbf{V}} \rightarrow
        \partial_{\vec{\alpha}}\dot{\mathbf{V}}$ and
        $(\mathbf{b},\mathbf{l}) \rightarrow
        (\mathbf{b}_{\vec{\alpha}},\partial_{\vec{\alpha}}\mathbf{l}).$
        \end{remark}

\section{Assumptions on the Initial Data}                                     \label{S:IVP}
        We now describe a class of initial data to which
        the energy methods for showing well-posedness can be applied.
        The Cauchy surface we consider is $\{(t,\snew) \in \mathcal{M} \ | \ t=0\}.$
        
        \subsection{An $H^N$ perturbation of a quiet fluid}
        The initial data for the EN$_{\kappa}$ system are denoted by
        $\mathring{\mathbf{V}}=\mathring{\mathbf{V}}(\snew)\overset{\mbox{\tiny{def}}}{=}(\mathring{{\Ent}},\mathring{P},
        \mathring{U}^1,\cdots,\mathring{\psi}_3),$
        where $\mathring{\psi}_j \overset{\mbox{\tiny{def}}}{=} \partial_j \mathring{\phi}$ for $j=1,2,3.$ We assume that the initial
        data $\mathring{\mathbf{V}}$ for the EN$_{\kappa}$ system are constructed from initial data
        $(\mathring{{\Ent}},\mathring{p}, \mathring{u}^1,\cdots,\mathring{\psi}_3)$
        in the original state-space variables $({\Ent},p,u^1,\cdots, \psi^3)$ according to the substitutions \eqref{E:U}, \eqref{E:R},
        and \eqref{E:P}. Additionally, we assume that outside of the unit ball centered at the origin in the Cauchy surface
        
        \begin{equation}                                                                                \label{E:InitialConstant}
            \mathring{\mathbf{V}} \equiv \bar{\mathbf{V}}\overset{\mbox{\tiny{def}}}{=}(\bar{{\Ent}},\bar{P},0,0,0,\bar{\phi},0,0,0,0),
        \end{equation}
        where $\bar{\phi}$ is the unique solution to
        \begin{align}                                                                               \label{E:phiBar}
            {\kappa}^2 \bar{\phi} + e^{4\bar{\phi}} \left(\mathscr{R}(\bar{p},\bar{{\Ent}}) - 3\bar{p}\right) = 0,
        \end{align}
        $\bar{{\Ent}}$ and $\bar{p}$ are positive constants
        denoting the initial entropy and pressure of the fluid outside of
        the unit ball, $\bar{P}\overset{\mbox{\tiny{def}}}{=}e^{4\bar{\phi}} \bar{p},$ and the function $\mathscr{R}$ is defined in 
        \eqref{E:rho}. An initial
        state of this form is a perturbation of an infinitely extended quiet fluid, such that
        the perturbation is initially contained in the unit ball. Here we
        need the cosmological constant $\kappa^2 >0$ in order to ensure that the
        EN$_{\kappa}$ system has non-zero constant solutions of the form $\bar{\mathbf{V}}.$
        
        Because the standard energy methods require that the initial data belong to a
        Sobolev space of high enough order, we assume that
        \begin{align}
            \|\mathring{{\Ent}} - \bar{{\Ent}} \|_{H^N} + \|\mathring{p} -
            \bar{p}\|_{H^N} + \|\mathring{u}^k \|_{H^N} + \| \mathring{\phi} -
            \bar{\phi}\|_{H^{N+1}} + \|\mathring{\psi}_0 \|_{H^N} < \infty,              \label{E:IDSobolevConditions}
        \end{align}
        where $N \in \mathbb{N}$ satisfies
        \begin{align}
            N \geq 3. \                                                                                                                                         \label{E:Ndef}
        \end{align}
        Note that \eqref{E:IDSobolevConditions} implies that $\|\mathring{\psi}_j \|_{H^N}
        < \infty \ (j=1,2,3).$ By Proposition \ref{P:CompositionProductSobolevMoser} and Remark \ref{R:SobolevCalculusRemark},
        it follows from \eqref{E:IDSobolevConditions} that
        \begin{align}
            \|\mathring{\mathbf{V}}\|_{H_{\Vb}^N} < \infty.
        \end{align}

        \begin{remark}
        It is not necessary to assume that the initial deviation from the constant state has compact
        support. It is sufficient to consider initial data $\mathring{\mathbf{V}}$
        that differ from $\bar{\mathbf{V}}$ by a perturbation belonging to $H^N,$ such that that
        $\mathring{\mathbf{V}}(\mathbb{R}^3)$ is contained in a compact subset of $\mathcal{O},$ where $N$ is given by \eqref{E:Ndef}
        and $\mathcal{O}$ is defined in Section \ref{SS:AdmissibleStateSpace}. We make the compactness assumption
        because it is useful for illustrating the speeds of propagation as discussed in Section \ref{SS:Speeds of Propagation},
        and because we plan to make use of this setup in future work.
				\end{remark}
        
        \subsection{The admissible subset of state space and the uniform positive definiteness of $\dot{J}^{0}$} \label{SS:AdmissibleStateSpace}
        
        In this section we discuss a further positivity restriction that we place on the initial data.
        We will see in Section \ref{SSS:UniformTime} that this positivity condition is propagated for short times 
        during an iterative construction of solutions to the
        linearized EN$_{\kappa}$ system. Since it plays a key role in our future 
        analysis, we discuss here the implications of this positivity
        restriction regarding the uniform positive definiteness of the energy current,
        viewed as a quadratic form in the variations.
            
            \subsubsection{The definition of the admissible subset of state-space}

						In order to avoid studying the free boundary problem and in order
            to avoid singularities in the energy current, we assume that
            the initial pressure, energy density, and speed of sound are
            uniformly bounded from below by a positive constant. According
            to our assumptions \eqref{E:EOSAssumptions} on the equation of state, to 
            satisfy these requirements, it is
            sufficient to consider initial data for the EN$_{\kappa}$ system such that $\mathring{\mathbf{V}}(\mathbb{R}^3)$
            is contained in a compact subset of the following
            open subset $\mathcal{O}$ of the state-space $\mathbb{R}^{10},$ the
            \emph{admissible subset of state-space:}

            \begin{equation}                                                                \label{E:AdmissibleSubset}
                \mathcal{O}= \{\mathbf{V} \in \mathbb{R}^{10}|\Ent >0, P > 0\}.
            \end{equation}
            We therefore assume that $\Vb \in \mathcal{O}_1$ and
            $\mathring{\mathbf{V}}(\mathbb{R}^3) \subset \mathcal{O}_1,$
            where $\mathcal{O}_1$ is a precompact open set with $\bar{\mathcal{O}}_1
            \Subset \mathcal{O}.$ We then fix a precompact open subset
            $\mathcal{O}_2$ with convex\footnote{Proposition
            \ref{P:SobolevTaylor} requires the convexity of $\bar{\mathcal{O}}_2.$ Without loss of
            generality, we may choose it to be a cube.} closure satisfying $\bar{\mathcal{O}}_1 \Subset \mathcal{O}_2
            \subset \bar{\mathcal{O}}_2 \Subset \mathcal{O};$ our
            goal is to show the existence of a solution that remains in $\bar{\mathcal{O}}_2$ for short times.

            \subsubsection{The uniform positive definiteness of $\dot{J}^{0}$} \label{SS:JdotUniformPositiveDefinite}

            Most of the technical exposition below is devoted to obtaining control over $\|\dot{\mathbf{V}}(t)\|_{H^N},$
            where $\dot{\mathbf{V}}$ is a solution to the EOV defined by a bgs
            $\widetilde{\mathbf{V}}.$ Instead of trying
            to estimate $\|\dot{\mathbf{V}}(t)\|_{L^2}$ directly, it is advantageous to
            estimate $\|\dot{J}^{0}(t)\|_{L^1},$ where $\dot{J}$ is
            an energy current for $\dot{\mathbf{V}}$ with coefficients defined by the bgs
            $\widetilde{\mathbf{V}},$
            since the divergence of $\dot{J}$ is lower order in $\dot{\mathbf{V}}.$ A similar remark applies to estimating 
            $\|\partial_{\vec{\alpha}}\dot{\mathbf{V}}\|_{L^2}$ using higher-order energy currents $\dot{J}_{\vec{\alpha}}.$
          	We shall see that $\|\dot{J}^{0}(t)\|_{L^1}$ can be used to
            estimate $\|\dot{\mathbf{V}}(t)\|^2_{L^2}$ from above and below provided that
            $\dot{J}^{0}$ is uniformly positive definite independent of the bgs
            $\widetilde{\mathbf{V}}.$ More precisely, we claim that there exists
            a $C_{\bar{\mathcal{O}}_2}$ with $0 < C_{\bar{\mathcal{O}}_2} < 1$ such that for any variation
            $\dot{\mathbf{V}}$ and any bgs $\widetilde{\mathbf{V}}$
            contained in $\bar{\mathcal{O}}_2,$ we have
            \begin{equation}                                                                        \label{E:UniformPositivity}
                C_{\bar{\mathcal{O}}_2}|\dot{\mathbf{V}}|^2 \leq
                \dot{J}^0(\dot{\mathbf{V}},\dot{\mathbf{V}})
                \leq \frac{1}{C_{\bar{\mathcal{O}}_2}}|\dot{\mathbf{V}}|^2.
            \end{equation}
            To prove \eqref{E:UniformPositivity}, recall that $\dot{J}$ is
            defined by \eqref{E:EnergyCurrent} and note
            that $(1,0,0,0) \in \mathcal{I}^{*+}_x$ by \eqref{E:PositiveDefinite}. The uniform continuity of
            $\dot{J}$ (which we momentarily view as a function of
            $(\widetilde{\mathbf{V}},\dot{\mathbf{V}})$)
            on the compact set $\bar{\mathcal{O}}_2 \times \lbrace |\dot{\mathbf{V}}|=1 \rbrace$ implies that
            there exists a $C_{\bar{\mathcal{O}}_2}$ with $0 < C_{\bar{\mathcal{O}}_2} < 1$
            such that \eqref{E:UniformPositivity} holds whenever $\widetilde{\mathbf{V}}(t,\snew) \in \bar{\mathcal{O}}_2$ and
            $|\dot{\mathbf{V}}|=1.$ Since the inequalities in \eqref{E:UniformPositivity} are invariant under any rescaling of
            $\dot{\mathbf{V}},$ it follows that we may remove the restriction $|\dot{\mathbf{V}}|=1.$

    \section{The Well-Posedness Theorems}                                                                   \label{S:WellPosedness}
    In this section, we state and indicate how to prove our two main theorems. We have separated
    the proof of well-posedness into two theorems since the techniques used in proving each are
    different. Statements of the technical estimates involving the Sobolev-Moser calculus have been
    placed in the Appendix so as to not interrupt the flow of the main argument.

		\newpage

    \begin{theorem} {\bf(Local Existence and Uniqueness)}                       \label{T:LocalExistence}
        Let $\mathring{\mathbf{V}}(\snew)$ be initial data for the EN$_{\kappa}$ system
        \eqref{E:ENkappa1} - \eqref{E:ENkappa9}
        that are subject to the conditions described in Section \ref{S:IVP}. Then there
        exists a $T > 0$ such that \eqref{E:ENkappa1} - \eqref{E:ENkappa9}
        has a unique classical solution $\mathbf{V}(t,\snew)$ on
        $[0,T] \times \mathbb{R}^3$ satisfying $\mathbf{V}(0,\snew)=\mathring{\mathbf{V}}(\snew).$ The solution is of the form $\mathbf{V}
        =({\Ent},P,U^1,U^2,U^3,\phi,\partial_0 \phi,\partial_1
        \phi,\partial_2 \phi,\partial_3 \phi)$ and satisfies \\ $\mathbf{V}([0,T] \times \mathbb{R}^3) \subset \bar{\mathcal{O}}_2.$
        Furthermore, \\ $\mathbf{V} \in C_b^1([0,T] \times \mathbb{R}^3) \cap 
        C^0([0,T],H_{\Vb}^{N}) \cap C^1([0,T],H_{\Vb}^{N-1}),$
        and consequently \\ $\phi \in C_b^2([0,T] \times \mathbb{R}^3) \cap C^0([0,T],H_{\bar{\phi}}^{N+1}) \cap
        C^1([0,T],H_{\bar{\phi}}^{N}) \cap C^2([0,T],H_{\bar{\phi}}^{N-1}).$
    \end{theorem}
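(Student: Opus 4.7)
The plan is to prove Theorem \ref{T:LocalExistence} via a Picard-type iteration in which each iterate is produced by solving the linear equations of variation around the previous iterate, and then closing the scheme using the energy currents of Section \ref{SS:EnergyCurrents} to derive uniform $H^N$ bounds and strong contraction in a lower-order norm. I would begin by setting $\widetilde{\mathbf{V}}_{(0)}(t,\snew) \equiv \ringV(\snew)$ and recursively defining $\widetilde{\mathbf{V}}_{(n+1)}$ as the solution of the EOV \eqref{E:EOV1}--\eqref{E:EOV6} with bgs $\widetilde{\mathbf{V}}_{(n)}$ and inhomogeneous terms \eqref{E:Inhomogeneousf}--\eqref{E:Inhomogeneousl4}, subject to the initial data $\widetilde{\mathbf{V}}_{(n+1)}(0,\cdot)=\ringV$. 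Because for a fixed bgs the upper half \eqref{E:EOV1}--\eqref{E:EOV3} has the form $A^\mu(\widetilde{\mathbf{V}}_{(n)})\partial_\mu \dot{\mathbf{W}}=\mathbf{b}$ with $A^0$ invertible (Remark \ref{R:Invertible}) and with the energy current supplying the analogue of a symmetrizer via \eqref{E:PositiveDefinite}--\eqref{E:UniformPositivity}, while the lower half \eqref{E:EOV4}--\eqref{E:EOV6} is the first-order form of a linear Klein-Gordon equation, the linearized problem is solvable by standard linear theory for each iterate once sufficient regularity of $\widetilde{\mathbf{V}}_{(n)}$ has been established.

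The main energy estimate is obtained as follows. For each spatial multi-index $\vec{\alpha}$ with $|\vec{\alpha}| \leq N$, the differentiated variation $\partial_{\vec{\alpha}}\widetilde{\mathbf{V}}_{(n+1)}$ again solves an EOV with a bgs $\widetilde{\mathbf{V}}_{(n)}$ and modified inhomogeneous terms $(\mathbf{b}_{\vec{\alpha}},\partial_{\vec{\alpha}}\mathbf{l})$ built from commutators (Remark \ref{R:HigherOrderEnergyCurrents}). Integrating the divergence identity \eqref{E:EnergyCurrentDivergence} applied to $\dot{J}_{\vec{\alpha}}$ over $[0,t]\times\mathbb{R}^3$ and summing over $|\vec\alpha|\leq N$ yields
\begin{equation}
  \sum_{|\vec\alpha|\leq N}\int_{\mathbb{R}^3}\dot{J}^0_{\vec\alpha}(t)\, d^3\snew
  = \sum_{|\vec\alpha|\leq N}\int_{\mathbb{R}^3}\dot{J}^0_{\vec\alpha}(0)\, d^3\snew
  + \int_0^t \sum_{|\vec\alpha|\leq N} \int_{\mathbb{R}^3} \partial_\mu \dot{J}^\mu_{\vec\alpha}\, d^3\snew\, d\tau. \notag
\end{equation}
The inequalities \eqref{E:UniformPositivity} convert the left-hand side and the initial term into $\|\widetilde{\mathbf{V}}_{(n+1)}(t)\|_{H^N_{\Vb}}^2$ and $\|\ringV\|_{H^N_{\Vb}}^2$. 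The integrand on the right-hand side is free of top-order derivatives of $\widetilde{\mathbf{V}}_{(n+1)}$ but contains products of derivatives of the bgs and of the variation, which are controlled by Sobolev-Moser estimates (using $N\geq 3$ so that $H^N\hookrightarrow C^1_b$ in $\mathbb{R}^3$). This produces a Gronwall-type inequality of the form $\frac{d}{dt}\mathcal{E}_{(n+1)}^2 \leq C(\|\widetilde{\mathbf{V}}_{(n)}\|_{H^N_{\Vb}})(1+\mathcal{E}_{(n+1)}^2)$. A standard bootstrap argument then produces a common $T>0$ and an $R>0$, depending only on $\|\ringV-\Vb\|_{H^N}$ and $\bar{\mathcal{O}}_1$, on which all iterates satisfy $\|\widetilde{\mathbf{V}}_{(n)}(t)-\Vb\|_{H^N}\leq R$ and $\widetilde{\mathbf{V}}_{(n)}([0,T]\times\mathbb{R}^3)\subset\bar{\mathcal{O}}_2$; the latter inclusion uses the Sobolev embedding together with a short-time closeness estimate propagating the admissibility condition $P>0$.

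To pass to the limit I would apply the same energy-current machinery at order zero to the difference $\widetilde{\mathbf{V}}_{(n+1)}-\widetilde{\mathbf{V}}_{(n)}$, which solves an EOV with bgs $\widetilde{\mathbf{V}}_{(n)}$ and inhomogeneous terms controlled in $L^2$ by $\|\widetilde{\mathbf{V}}_{(n)}-\widetilde{\mathbf{V}}_{(n-1)}\|_{L^2}$ (using the Sobolev-Moser estimates on the smooth nonlinearities $\mathfrak{F},\mathfrak{G},\mathfrak{H}^{(j)},\mathfrak{L}^{(j)}$ together with the uniform $H^N$ bound). Choosing $T$ smaller if necessary, this yields a contraction in $C^0([0,T],L^2)$, hence a limit $\mathbf{V}$ with $\mathbf{V}-\Vb\in C^0([0,T],L^2)$. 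Interpolating between the Cauchy convergence in $L^2$ and the uniform $H^N$ bound gives strong convergence in $C^0([0,T],H^{N-1})$, so $\mathbf{V}$ is a classical solution belonging to $L^\infty([0,T],H^N_{\Vb})\cap C^0([0,T],H^{N-1}_{\Vb})$. Strong continuity into $H^N_{\Vb}$ is then recovered by the usual weak-continuity-plus-norm-continuity argument: one applies the energy identity to $\mathbf{V}$ itself to show $t\mapsto\|\mathbf{V}(t)-\Vb\|_{H^N}^2$ is continuous, and combines this with weak continuity obtained by approximation. The $C^1([0,T],H^{N-1}_{\Vb})$ regularity and the additional statements for $\phi$ then follow by solving \eqref{E:ENkappa1}--\eqref{E:ENkappa6} algebraically for $\partial_t\mathbf{V}$ and reading off the resulting $H^{N-1}$ regularity. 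Uniqueness is obtained by applying the zeroth-order energy estimate to the difference of two solutions, both of which now serve as admissible bgs.

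The main obstacle will be closing the uniform a priori $H^N$ bound on the iterates while simultaneously preserving admissibility of $\widetilde{\mathbf{V}}_{(n)}$ in $\bar{\mathcal{O}}_2$ on a time interval independent of $n$: the nonlinearities $\mathfrak{R}$, $\mathfrak{Q}$ and the energy-current coefficients degenerate as $P\to 0$ or as $\widetilde{Q}\to 0$, so the propagation of the positivity condition cannot be separated from the Sobolev estimate. Controlling the Moser-type commutator terms in the divergence of each $\dot{J}_{\vec\alpha}$ by $\|\widetilde{\mathbf{V}}_{(n)}\|_{H^N_{\Vb}}$ and $\|\widetilde{\mathbf{V}}_{(n+1)}\|_{H^N_{\Vb}}$ in a form that permits a bootstrap — and that is consistent with the constant $C_{\bar{\mathcal{O}}_2}$ in \eqref{E:UniformPositivity} — is the technically delicate step, and is where the Sobolev-Moser estimates from the Appendix enter decisively.
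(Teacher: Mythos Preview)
Your proposal is correct and follows the same overall architecture as the paper: a Picard iteration in which each step solves the linearized EOV around the previous iterate, closed by energy-current estimates and Sobolev--Moser calculus, with convergence via low-norm contraction. A few technical choices differ from the paper's proof and are worth noting.

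First, the paper mollifies the initial data and measures each iterate's deviation from the \emph{smoothed} data ${^{(0)}\mathring{\mathbf{V}}}$ rather than from $\Vb$. The reason is that subtracting ${^{(0)}\mathring{\mathbf{V}}}$ introduces terms like $\partial_k[{^{(0)}\mathring{P}}]$ into the inhomogeneities (see \eqref{E:IterateInhomogeneous1}--\eqref{E:IterateInhomogeneousl4}), forcing the estimates to depend on $\|{^{(0)}\mathring{\mathbf{V}}}\|_{H^{N+1}}$ (Remark \ref{R:ExtraDerivativeNeeded}); in exchange, the Gronwall bound on $\|{^{(m+1)}\mathbf{V}}-{^{(0)}\mathring{\mathbf{V}}}\|_{H^N}$ starts from the small quantity $\|{^{(m+1)}\mathring{\mathbf{V}}}-{^{(0)}\mathring{\mathbf{V}}}\|_{H^N}\leq C_{\bar{\mathcal{O}}_2}\Lambda/2$, so containment in $\bar{\mathcal{O}}_2$ follows directly from \eqref{E:WellDefined}. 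Your choice to subtract $\Vb$ avoids the extra derivative entirely (since $\mathfrak{F}(\Vb)=\cdots=\mathfrak{L}^{(4)}(\Vb)=0$ by \eqref{E:phiBar}), but then your Gronwall bound on $\|\widetilde{\mathbf{V}}_{(n+1)}-\Vb\|_{H^N}$ starts from $\|\ringV-\Vb\|_{H^N}$, which need not be small, and does \emph{not} by itself force $\widetilde{\mathbf{V}}_{(n+1)}\in\bar{\mathcal{O}}_2$. You correctly flag this and appeal to a ``short-time closeness estimate''; concretely, you must couple the $H^N$ bootstrap with the bound $\|\widetilde{\mathbf{V}}_{(n+1)}(t)-\ringV\|_{L^\infty}\leq C\int_0^t\|\partial_t\widetilde{\mathbf{V}}_{(n+1)}\|_{H^{N-1}}\,d\tau\leq C(R)\,t$ (using $H^{N-1}\hookrightarrow L^\infty$) to propagate admissibility. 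This is exactly the content of the paper's Lemma \ref{L:L} combined with \eqref{E:WellDefined}, repackaged; make sure it is part of your induction hypothesis.

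Second, the paper applies the divergence theorem on truncated light cones (Lemma \ref{L:DifferentialInequality}), using \eqref{E:PositiveDefinite} on the mantle to discard the lateral flux with a sign; this simultaneously yields the finite-speed-of-propagation statement (Remark \ref{R:DOI}). Your integration over $\mathbb{R}^3$ is legitimate for existence (the currents decay since the variations lie in $H^N$), but you then need the cone argument separately for the domain-of-dependence claim implicit in the theorem's setup. Finally, the paper defers the $L^2$-contraction and $C^0H^N$-continuity steps to Majda \cite{aM1984}, whereas you spell them out; your outline of the weak-continuity-plus-norm-continuity argument for $C^0([0,T],H^N_{\Vb})$ is the standard one and is compatible with the energy-current framework.
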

    
    \begin{proof}
    	As discussed in Section \ref{SSS:ProofDescription}, our abbreviated proof of Theorem \ref{T:LocalExistence} is located in 
    	Section \ref{SS:AbbreviatedProof}.
  	\end{proof}
  	
  	\begin{remark}
        In the discussion below, we sometimes denote the solution from Theorem
        \ref{T:LocalExistence} by $\mathbf{V}_{sol}$ for clarity.
    \end{remark}

    \begin{corollary}                                      \label{C:ExistenceInterval}
        The interval of
        existence $[0,T]$ supplied by the Theorem \ref{T:LocalExistence} depends only on
        the set $\bar{\mathcal{O}}_2$ from Section \ref{S:IVP},
        $\|{^{(0)}\mathring{\mathbf{V}}}\|_{H_{\Vb}^{N+1}},$ and the constant $\Lambda$ chosen in
        \eqref{E:SmoothingApproximation} - \eqref{E:IterateIntialValueEstimate}
        below. Here, ${^{(0)}\mathring{\mathbf{V}}}$ denotes the mollified initial data
        as described in Section \ref{SS:AbbreviatedProof}. Furthermore, the set $\bar{\mathcal{O}}_2,$
        the mollified initial data ${^{(0)}\mathring{\mathbf{V}}},$
        and constant $\Lambda$ can be chosen to be independent of all
        initial data varying in a small $H^N$ neighborhood of
        $\ringV.$ Therefore, if we define
        $B_{y}(\ringV) \overset{\mbox{\tiny{def}}}{=} \lbrace \mathring{\widetilde{\mathbf{V}}} \in H_{\Vb}^N  \ | \
        \|\mathring{\widetilde{\mathbf{V}}} - \mathring{\mathbf{V}}\|_{H^N} <
        y\rbrace,$ then there exist $\delta >0$ and $T'>0$ (depending on $\ringV$) such that any initial data
        $\mathring{\widetilde{\mathbf{V}}}$ belonging to $B_{\delta}(\ringV)$
        launch a unique classical solution $\widetilde{\mathbf{V}}$ that exists on the common time interval
        $[0,T']$ and that has the property $\widetilde{\mathbf{V}}([0,T']\times \mathbb{R}^3) \subset \bar{\mathcal{O}}_2.$
    \end{corollary}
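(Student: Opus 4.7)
The plan is to revisit the construction of the solution in Theorem \ref{T:LocalExistence} and track precisely which quantities enter into the determination of the existence time $T$, and then to show that these quantities vary continuously enough under $H^N$ perturbations of the initial data to be chosen uniformly on a small ball. The iterative scheme outlined in Section \ref{SS:AbbreviatedProof} produces a sequence of iterates $^{(n)}\mathbf{V}$ launched from mollified initial data $^{(0)}\mathring{\mathbf{V}}$, and the existence time $T$ is selected so that (i) the iterates remain in $\bar{\mathcal{O}}_2$, and (ii) the higher-order energy-current estimates yield a uniform $H^N$ bound of the form $\mid\mid\mid{^{(n)}\mathbf{V}} - \Vb\mid\mid\mid_{H^N,T} \leq \Lambda$ for every $n$.

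First I would verify that the three data-side inputs that feed into choosing $T$ are exactly $\bar{\mathcal{O}}_2$, $\|{^{(0)}\mathring{\mathbf{V}}}\|_{H_{\Vb}^{N+1}}$, and $\Lambda$. This is essentially bookkeeping in the energy inequality: the coefficients $A^0(\widetilde{\mathbf{V}})$, $\widetilde{Q}$, $(\widetilde{R}+\widetilde{P})$, and the positivity constant $C_{\bar{\mathcal{O}}_2}$ from \eqref{E:UniformPositivity} are all controlled once $\widetilde{\mathbf{V}}$ stays in $\bar{\mathcal{O}}_2$; the Sobolev-Moser inequalities bounding the inhomogeneous terms then produce a Gronwall-type differential inequality whose blow-up time depends only on $\Lambda$ (the a priori upper bound on the iterates) and on the initial datum's $H^N$ norm. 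The mollification produces the $N+1$-regular datum $^{(0)}\mathring{\mathbf{V}}$, whose $H^{N+1}$-norm enters only through the choice of $\Lambda$.

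Next I would establish the uniformity in an $H^N$ neighborhood. Fix a standard mollifier and apply it at a \emph{fixed} scale to every datum in the ball; since mollification with a fixed kernel is a bounded operator $H^N \to H^{N+1}$, if $\|\mathring{\widetilde{\mathbf{V}}} - \mathring{\mathbf{V}}\|_{H^N} < \delta$, then $\|{^{(0)}\mathring{\widetilde{\mathbf{V}}}} - {^{(0)}\mathring{\mathbf{V}}}\|_{H^{N+1}}$ is controlled by $C\delta$. Using the Sobolev embedding $H^N \hookrightarrow C^0_b$ (valid since $N\geq 3$), small $\delta$ also ensures the $C^0$-closeness of $\mathring{\widetilde{\mathbf{V}}}$ to $\ringV$, hence the range condition $\mathring{\widetilde{\mathbf{V}}}(\mathbb{R}^3) \subset \mathcal{O}_1$ persists (shrinking $\mathcal{O}_1$ slightly if necessary, but retaining $\bar{\mathcal{O}}_1 \Subset \mathcal{O}_2$). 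Thus one may choose a single $\mathcal{O}_2$, a single constant $\Lambda$ (slightly larger than the one selected for $\ringV$), and a single mollifier, all valid simultaneously for every $\mathring{\widetilde{\mathbf{V}}} \in B_\delta(\ringV)$, with $\delta$ sufficiently small.

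Once these uniform choices are in place, the iteration scheme of Section \ref{SS:AbbreviatedProof} runs for every such $\mathring{\widetilde{\mathbf{V}}}$ and produces a classical solution on the \emph{common} time interval $[0,T']$, where $T'$ is the smaller existence time obtained from the uniform inputs; uniqueness is then inherited from Theorem \ref{T:LocalExistence}. The main technical obstacle is to confirm that the differential inequality driving the iterates is genuinely monotone in its data, so that slightly enlarging $\Lambda$ and slightly enlarging $\mathcal{O}_2$ does not shrink $T$ to zero; this is straightforward because the right-hand side of the Gronwall estimate is a polynomial in the $H^N$ norm with coefficients continuous in the $C^0$ norm of the coefficient functions, both of which change continuously under small $H^N$ perturbation of the data.
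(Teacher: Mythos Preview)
Your argument reaches the right conclusion, but it departs from the paper's approach in one respect that is worth noting, and it contains one inaccuracy about where $\|{^{(0)}\mathring{\mathbf{V}}}\|_{H_{\Vb}^{N+1}}$ enters.

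The corollary asserts not merely that the existence time can be bounded uniformly over $B_\delta(\ringV)$, but that the \emph{function} ${^{(0)}\mathring{\mathbf{V}}}$ itself may be chosen independently of the perturbed datum. The paper (via Remark \ref{R:LambdaRemark}) does exactly this: it fixes ${^{(0)}\mathring{\mathbf{V}}}$ once and for all as a mollification of the central datum $\ringV$, and then runs the iteration for every $\mathring{\widetilde{\mathbf{V}}}\in B_\delta(\ringV)$ using that same zeroth iterate, controlling $\|{^{(m)}\mathring{\widetilde{\mathbf{V}}}} - {^{(0)}\mathring{\mathbf{V}}}\|_{H^N}$ by the triangle inequality \eqref{E:SmoothedInitialDataInequality}. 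Your route instead mollifies each $\mathring{\widetilde{\mathbf{V}}}$ separately at a fixed scale and argues that the resulting $H^{N+1}$ norms are uniformly bounded. This also yields a uniform $T'$, but it does not literally establish the ``fixed ${^{(0)}\mathring{\mathbf{V}}}$'' clause of the statement; the paper's choice is slightly cleaner because every constant in the Gronwall argument is then manifestly identical across the ball rather than merely uniformly bounded.

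The inaccuracy: you write that the $H^{N+1}$ norm of ${^{(0)}\mathring{\mathbf{V}}}$ ``enters only through the choice of $\Lambda$.'' In fact $\Lambda$ is fixed by the $H^N$-level conditions \eqref{E:SmoothingApproximation}--\eqref{E:IterateIntialValueEstimate}, whereas $\|{^{(0)}\mathring{\mathbf{V}}}\|_{H_{\Vb}^{N+1}}$ enters directly into the constant $C$ in the Gronwall estimate \eqref{E:IterateInequality}, because the inhomogeneous terms \eqref{E:IterateInhomogeneous1}--\eqref{E:IterateInhomogeneousl0} already contain one spatial derivative of ${^{(0)}\mathring{\mathbf{V}}}$ and are then differentiated $N$ more times (cf.\ Remark \ref{R:ExtraDerivativeNeeded} and \eqref{E:bHNEstimate}). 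This does not break your argument, since under either mollification strategy that norm is uniformly controlled, but the dependency is not routed through $\Lambda$.
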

    
    \begin{proof}
        The corollary follows from the proof of Theorem
        \ref{T:LocalExistence}. See in particular Remark \ref{R:ExtraDerivativeNeeded} and Remark \ref{R:LambdaRemark} below.
    \end{proof}

    \begin{corollary}                                       \label{C:UniformNormBound}
        The norms $\mid\mid\mid \mathbf{V}
        \mid\mid\mid_{H_{\Vb}^N,T}$ and $\mid\mid\mid \partial_t \mathbf{V}
        \mid\mid\mid_{H^{N-1},T}$ of the solution from Theorem \ref{T:LocalExistence}
        depend only $\bar{\mathcal{O}}_2,$ $\|{^{(0)}\mathring{\mathbf{V}}}\|_{H_{\Vb}^{N+1}},$
        and $\Lambda.$ Furthermore, there exists a $K>0$ such that any initial data
        $\mathring{\widetilde{\mathbf{V}}}$ belonging to the set $B_{\delta}(\ringV)$
        defined in Corollary \ref{C:ExistenceInterval} launch a unique solution
        $\widetilde{\mathbf{V}}$ that satisfies the uniform bound
        \begin{align}                                           \label{E:UniformNormBound}
            \mid\mid\mid \widetilde{\mathbf{V}}
            \mid\mid\mid_{H_{\Vb}^N,T'},\mid\mid\mid \partial_t
            \widetilde{\mathbf{V}} \mid\mid\mid_{H^{N-1},T'} <
            K(N,\bar{\mathcal{O}}_2,\|{^{(0)}\mathring{\mathbf{V}}}\|_{H_{\Vb}^{N+1}}, \Lambda, \delta),
        \end{align}
        where $T$ and $T'$ are as in Corollary \ref{C:ExistenceInterval}.
 		\end{corollary}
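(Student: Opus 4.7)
The plan is to extract the bounds directly from the energy estimates that drive the iterative construction in the proof of Theorem \ref{T:LocalExistence}, push them through the limit, and convert the spatial bound into a time-derivative bound using the equations themselves.

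First I would establish uniform bounds on the iterates $\mathbf{V}^{(n)}$ in the norm $\mid\mid\mid\cdot\mid\mid\mid_{H_{\Vb}^N,T}$. Each iterate solves the EOV with background and forcing determined by the previous iterate; I would apply the divergence identity \eqref{E:EnergyCurrentDivergence} together with Remark \ref{R:HigherOrderEnergyCurrents} to the higher-order currents $\dot{J}_{\vec{\alpha}}$ for $|\vec{\alpha}|\leq N$, integrate in space, and use the uniform positive-definiteness \eqref{E:UniformPositivity} (which holds because every iterate is arranged to stay in $\bar{\mathcal{O}}_2$) to convert $\sum_{|\vec{\alpha}|\leq N}\|\dot{J}^{0}_{\vec{\alpha}}(t)\|_{L^1}$ into a Gronwall inequality for $\|\mathbf{V}^{(n+1)}(t)-\Vb\|_{H^N}^2$. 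The Sobolev-Moser calculus, applied to the coefficient terms appearing in \eqref{E:EnergyCurrentDivergence} and to the inhomogeneous terms $\mathbf{b},\mathbf{l}$ arising from \eqref{E:Inhomogeneousf}--\eqref{E:Inhomogeneousl4}, controls these by $|\,\cdot\,|_{k,\bar{\mathcal{O}}_2}$-norms of the state-space functions and by $\mid\mid\mid\mathbf{V}^{(n)}\mid\mid\mid_{H_{\Vb}^N,T}$. An induction seeded by $\|{^{(0)}\mathring{\mathbf{V}}}\|_{H_{\Vb}^{N+1}}\leq\Lambda$ then produces a uniform-in-$n$ bound depending only on $\bar{\mathcal{O}}_2$, $\Lambda$, and $T$. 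Weak lower semicontinuity of the $H^N$-norm along the convergence $\mathbf{V}^{(n)}\to\mathbf{V}_{sol}$ established in the proof of Theorem \ref{T:LocalExistence} transfers the bound to $\mathbf{V}_{sol}$.

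For the time-derivative bound I would use the invertibility of $A^0(\mathbf{V}_{sol})$ on $\bar{\mathcal{O}}_2$ (Remark \ref{R:Invertible}) to solve \eqref{E:ENkappa1}--\eqref{E:ENkappa6} algebraically for $\partial_t\mathbf{V}_{sol}$ in the schematic form $\partial_t\mathbf{V}_{sol}=\mathcal{F}(\mathbf{V}_{sol})+\mathcal{G}(\mathbf{V}_{sol})\cdot\nabla^{(1)}\mathbf{V}_{sol}$, with $\mathcal{F}$ and $\mathcal{G}$ smooth on $\bar{\mathcal{O}}_2$. Applying the composition and product estimates of Proposition \ref{P:CompositionProductSobolevMoser} yields $\partial_t\mathbf{V}_{sol}\in L^{\infty}([0,T],H^{N-1})$ with norm controlled by $\mid\mid\mid\mathbf{V}_{sol}\mid\mid\mid_{H_{\Vb}^N,T}$ and by $\bar{\mathcal{O}}_2$-dependent constants, giving the desired bound in terms of $\bar{\mathcal{O}}_2$, $\|{^{(0)}\mathring{\mathbf{V}}}\|_{H_{\Vb}^{N+1}}$, and $\Lambda$.

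For the uniform bound over $B_{\delta}(\ringV)$, I would invoke Corollary \ref{C:ExistenceInterval}, which allows us to hold $\bar{\mathcal{O}}_2$, the mollification procedure, and $\Lambda$ fixed across this $H^N$-neighborhood after possibly shrinking $\delta$. The initial iterate for a perturbed datum $\mathring{\widetilde{\mathbf{V}}}\in B_{\delta}(\ringV)$ can be chosen as a mollification whose $H_{\Vb}^{N+1}$-norm differs from $\Lambda$ by at most $C\delta$, and re-running the induction of paragraph two yields $\mid\mid\mid\widetilde{\mathbf{V}}\mid\mid\mid_{H_{\Vb}^N,T'}\leq K$ with $K$ depending continuously on this initial norm and hence on $\delta$; the $\partial_t$ bound follows from the same algebraic reduction. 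The main obstacle, which is technical rather than conceptual, is verifying that the energy-estimate Gronwall argument closes at the $H^N$ level using only a bound on the $H^{N+1}$-norm of the mollified data (the extra derivative is needed to handle the loss of one derivative caused by the background coefficients in the EOV), and that a single mollification parameter can be selected uniformly over the $H^N$-ball so that $\|{^{(0)}\mathring{\widetilde{\mathbf{V}}}}\|_{H_{\Vb}^{N+1}}$ is uniformly controlled as $\mathring{\widetilde{\mathbf{V}}}$ ranges over $B_{\delta}(\ringV)$.
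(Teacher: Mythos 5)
Your proposal follows essentially the same route as the paper: take the uniform-in-$m$ bounds of Proposition \ref{P:UniformTime} on the iterates, pass them to $\mathbf{V}_{sol}$ by weak convergence in $H_{\Vb}^N$, solve \eqref{E:ENkappa1}--\eqref{E:ENkappa6} algebraically for $\partial_t\mathbf{V}_{sol}$ (via Remark \ref{R:Invertible}) and apply Proposition \ref{P:CompositionProductSobolevMoser} and Remark \ref{R:SobolevCalculusRemark} for the $H^{N-1}$ bound, and use the fixed-mollification device of Corollary \ref{C:ExistenceInterval} and Remark \ref{R:LambdaRemark} for the uniformity over $B_{\delta}(\ringV).$ One small correction: you write as if $\Lambda$ bounded $\|{^{(0)}\mathring{\mathbf{V}}}\|_{H_{\Vb}^{N+1}},$ but in the paper $\Lambda$ is the radius of the $H^N$-ball around ${^{(0)}\mathring{\mathbf{V}}}$ that stays inside $\bar{\mathcal{O}}_2$ (see \eqref{E:WellDefined}), while $\|{^{(0)}\mathring{\mathbf{V}}}\|_{H_{\Vb}^{N+1}}$ is a separate, fixed parameter once the mollification is chosen; keeping these two quantities distinct is precisely why both appear in the list of arguments of $K$ in \eqref{E:UniformNormBound}.
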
    
        
        \begin{proof}
        The estimates for $\mid\mid\mid \mathbf{V}
        \mid\mid\mid_{H_{\Vb}^N,T}$ and $\mid\mid\mid \widetilde{\mathbf{V}}
        \mid\mid\mid_{H_{\Vb}^N,T'}$ follow from Corollary \ref{C:ExistenceInterval},
        Proposition \ref{P:UniformTime}, and the fact that the sequence of
        iterates $\lbrace{^{(m)}\mathring{\mathbf{V}}}(t)\rbrace$ constructed below
        converges strongly in $L_{\Vb}^2$ and weakly in $H_{\Vb}^N$ to $\mathbf{V}(t);$
        consult \cite{aM1984} for the missing details. We then use the EN$_{\kappa}$ equations to solve for the
        time derivatives together with Proposition \ref{P:CompositionProductSobolevMoser} and Remark \ref{R:SobolevCalculusRemark}
        to obtain the estimates for \\
        $\mid\mid\mid \partial_t
        \mathbf{V} \mid\mid\mid_{H^{N-1},T}$ and $\mid\mid\mid \partial_t
        \widetilde{\mathbf{V}} \mid\mid\mid_{H^{N-1},T'}.$
        \end{proof}

    \begin{theorem} (\textbf{Continuous Dependence on Initial Data})                        \label{T:ContinuousDependence}
        Let $\mathring{\mathbf{V}}(\snew)$ be initial data for the EN$_{\kappa}$ system
        \eqref{E:ENkappa1} - \eqref{E:ENkappa9}
        that are subject to the conditions described in Section
        \ref{S:IVP}, and let $\mathbf{V}$ be the solution existing on the time interval
        $[0,T]$ furnished by Theorem \ref{T:LocalExistence}.
        Let $B_{\delta}(\ringV)$ be as in Corollary \ref{C:ExistenceInterval}.
        Let $\lbrace \ringVm \rbrace \subset
        B_{\delta}$ be a sequence of initial data with $\lim_{m \to \infty}
        \|\ringVm - \ringV \|_{H^N} =
        0,$ and let $\Vm$ denote the solution to \eqref{E:ENkappa1} - \eqref{E:ENkappa9} launched by
        $\ringVm.$ Then for all large $m,$
        the solutions $\Vm$ exist on $[0,T],$
        and $\lim_{m \to \infty} \mid\mid\mid \Vm - \mathbf{V} \mid\mid\mid_{H^N,T} = 0.$
    \end{theorem}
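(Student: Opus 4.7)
The plan is to combine the uniform high-norm bounds from Corollary \ref{C:UniformNormBound} with an energy-current estimate on the differences $\dot{\mathbf{V}}^{(m)} \overset{\mbox{\tiny{def}}}{=} \Vm - \mathbf{V}$ to obtain $L^2$ convergence, and then to upgrade to $H^N$ convergence by Gagliardo-Nirenberg interpolation together with a Bona-Smith style mollification argument.

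First I would secure a common interval of existence and uniform bounds. By Corollary \ref{C:ExistenceInterval}, there exist $\delta > 0$ and $T' > 0$ such that every datum in $B_\delta(\ringV)$ launches a solution on $[0,T']$ with values in $\bar{\mathcal{O}}_2$; since $\ringVm \to \ringV$ in $H^N$, we have $\ringVm \in B_\delta(\ringV)$ for all large $m$, and after replacing $T$ by $\min(T,T')$, Corollary \ref{C:UniformNormBound} provides a uniform bound $\mid\mid\mid \Vm \mid\mid\mid_{H_{\Vb}^N, T} + \mid\mid\mid \partial_t \Vm \mid\mid\mid_{H^{N-1}, T} \leq K$ independent of $m$.

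Next I would derive the equation for the difference. Subtracting the EN$_\kappa$ system for $\mathbf{V}$ from that for $\Vm$ and rewriting exhibits $\dot{\mathbf{V}}^{(m)}$ as a solution of the EOV \eqref{E:EOV1} - \eqref{E:WidetildeR} with bgs $\widetilde{\mathbf{V}} = \Vm$ and inhomogeneous terms $(\mathbf{b}^{(m)}, \mathbf{l}^{(m)})$ that vanish when $\dot{\mathbf{V}}^{(m)} = 0$ and depend smoothly on $\Vm, \mathbf{V}, D\mathbf{V}$. Integrating \eqref{E:EnergyCurrentDivergence} for the energy current of $\dot{\mathbf{V}}^{(m)}$ over $[0,t] \times \mathbb{R}^3$, invoking the uniform positivity \eqref{E:UniformPositivity}, and using the Sobolev-Moser composition estimates from the Appendix to bound $\|(\mathbf{b}^{(m)}, \mathbf{l}^{(m)})\|_{L^2} \leq C\|\dot{\mathbf{V}}^{(m)}\|_{L^2}$ with $C$ depending only on $K$ and $\bar{\mathcal{O}}_2$, a Gronwall argument gives $\mid\mid\mid \dot{\mathbf{V}}^{(m)} \mid\mid\mid_{L^2, T} \leq C\|\ringVm - \ringV\|_{L^2} \to 0$. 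Gagliardo-Nirenberg interpolation against the uniform $H^N$ bound then yields $\mid\mid\mid \dot{\mathbf{V}}^{(m)} \mid\mid\mid_{H^s, T} \to 0$ for every $s \in [0, N)$.

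The hard part is promoting $H^s$ convergence for $s < N$ to strong $H^N$ convergence: differentiating the EOV $N$ times and running the higher-order energy-current estimate on $\partial_{\vec{\alpha}}\dot{\mathbf{V}}^{(m)}$ directly produces top-order commutator terms $[\partial_{\vec{\alpha}}, A^\mu(\Vm)]\partial_\mu \mathbf{V}$ that require control of $\nabla^{(N)} \Vm$ beyond the uniform bound. My plan is to use a Bona-Smith type regularization: mollify $\ringV$ and $\ringVm$ at scale $\varepsilon$, invoke Corollaries \ref{C:ExistenceInterval} and \ref{C:UniformNormBound} applied at level $H^{N+1}$ to generate smoother solutions $\mathbf{V}_\varepsilon$ and $\Vm_\varepsilon$ on a common interval, and decompose
\[
\Vm - \mathbf{V} = (\Vm - \Vm_\varepsilon) + (\Vm_\varepsilon - \mathbf{V}_\varepsilon) + (\mathbf{V}_\varepsilon - \mathbf{V}).
\]
The outer two terms are handled by applying the low-norm energy estimate at one higher derivative and using the standard mollifier bound $\|\ringV_\varepsilon - \ringV\|_{H^s} \lesssim \varepsilon^{N-s}\|\ringV\|_{H_{\Vb}^N}$, while the middle term is controlled by the higher-order energy currents of Remark \ref{R:HigherOrderEnergyCurrents} applied to $\partial_{\vec{\alpha}}(\Vm_\varepsilon - \mathbf{V}_\varepsilon)$ for $|\vec{\alpha}| \leq N$, whose right-hand sides depend polynomially on $\varepsilon^{-1}$ and linearly on $\|\ringVm - \ringV\|_{H^N}$. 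Choosing $\varepsilon = \varepsilon_m \to 0$ slowly enough as $m \to \infty$ then yields $\mid\mid\mid \Vm - \mathbf{V} \mid\mid\mid_{H^N, T} \to 0$.
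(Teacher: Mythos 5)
Your plan takes a genuinely different route from the paper. The paper rewrites the linearized systems as abstract evolution equations in a Banach space, introduces the isomorphism $\mbox{\textnormal{S}} = (1-\Delta)^{N/2}$ between $H^N$ and $L^2$, and runs a Kato-style argument: a Duhamel formula after conjugation by $\mbox{\textnormal{S}}$ (Lemma \ref{L:Duhamel}), a commutator estimate (Lemma \ref{L:OperatorBL2Norm}), strong convergence of the evolution operators $\mathcal{U}^m \to \mathcal{U}$ in $\mathcal{L}(L^2)$ (Lemma \ref{L:StrongConvergenceL2HN}), and dominated convergence to kill the integral terms, followed by a continuation argument to extend from $[0,T_*]$ to $[0,T]$. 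Your Bona-Smith mollification argument is an honest alternative; both are standard. What the Kato route buys is that one never needs to touch $H^{N+1}$ quantities: the operator $\mbox{\textnormal{S}}$ transfers the top-order estimate into an $L^2$ problem with a bounded (order-zero) commutator $\mathcal{B}$, so the derivative loss never appears at the front. What Bona-Smith buys is a more hands-on, self-contained argument that stays entirely within the energy-current framework and avoids any abstract semigroup machinery.

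That said, your sketch has a real gap where the Bona-Smith mechanism actually does its work, namely in the outer terms $\Vm - \Vm_\varepsilon$ and $\mathbf{V}_\varepsilon - \mathbf{V}$. Two issues. First, the stated mollifier estimate $\|\ringV_\varepsilon - \ringV\|_{H^s} \lesssim \varepsilon^{N-s}\|\ringV\|_{H_{\Vb}^N}$ is only $O(1)$ at $s = N$ and only $O(\varepsilon)$ at $s = N-1$, which is not enough: the Bona-Smith argument requires the refined statement $\|\ringV_\varepsilon - \ringV\|_{H^{N-k}} = o(\varepsilon^k)$ as $\varepsilon \to 0$ for all $0 \leq k \leq N$ (which holds because $\nabla^{(N)}\ringV$ is merely in $L^2$, so its mollification converges but with no rate), so that the $\varepsilon^{-1}$ coming from $\|\Vm_\varepsilon\|_{H^{N+1}}$ can be beaten, not merely balanced. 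Second, the phrase ``applying the low-norm energy estimate at one higher derivative'' does not describe a well-defined step: an $H^N$-level Lipschitz estimate for $\Vm - \Vm_\varepsilon$ is exactly what is unavailable (this is the one-derivative loss the paper flags in the remark following \eqref{E:ContinuousDependenceInequality1}), and the actual Bona-Smith estimate is an $H^N$ energy inequality for $\Vm - \Vm_\varepsilon$ whose commutator terms are bounded by products of the form $\|\Vm_\varepsilon\|_{H^{N+1}} \cdot \|\Vm - \Vm_\varepsilon\|_{L^\infty}$, with the first factor $\lesssim \varepsilon^{-1}$ and the second $= o(\varepsilon)$ by the refined mollifier estimate propagated through the $H^{N-1}$-Lipschitz bound \eqref{E:ContinuousDependenceInequality1}. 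You would also need to check that these outer-term estimates hold uniformly in $m$, since $\ringVm$ varies along the sequence; the uniform $H^N$ bound from Corollary \ref{C:UniformNormBound} and the $H^N$-convergence $\ringVm \to \ringV$ make this plausible but it needs to be said. The middle term and the $\varepsilon_m \to 0$ diagonalization are fine once the outer terms are under control.
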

    
    \begin{proof}
    	Our proof of Theorem \ref{T:ContinuousDependence} is located in Section \ref{SS:ContinuousDependence}.
    \end{proof}

    \begin{remark}                                                                  \label{R:ContinuousDependenceHolder}
        It is unknown to the author whether or not the continuity
        statement from Theorem \ref{T:ContinuousDependence} can be
        strengthened to one of Lipschitz continuity or H\"{o}lder
        continuity. However, using Burger's equation $\partial_t u + u \partial_x u = 0,$
        Kato \cite{tK1975} provides a counterexample in which the map from the initial data $u_0 \in H^a(\mathbb{R})$ to the 
        solution $u(t) \in C([0,T],H^a)$
        is not H\"{o}lder continuous with any positive
        exponent; such a counterexample is explicitly constructed
        for $a \geq 2.$ On the other hand, inequality
        \eqref{E:ContinuousDependenceInequality1} below shows that
        for the EN$_{\kappa}$ system, the map from the initial
        data to the solution is a Lipschitz-continuous map from
        $H_{\Vb}^N$ into $C([0,T],H_{\Vb}^{N-1}).$
    \end{remark}

    \subsection{A discussion of the structure of the proof of the theorems}                  \label{SSS:ProofDescription}
        We prove local existence by following a standard method
        described in detail in Majda's book \cite{aM1984}:
        we construct a sequence of iterates $\lbrace{^{(m)}\mathbf{V}}(t,\snew)\rbrace$
        that converges to the solution $\mathbf{V}_{sol}(t,\snew).$ To construct the iterates, we
        first define a sequence of $C^{\infty}$ initial data
        $\lbrace{^{(m)}\mathring{\mathbf{V}}}\rbrace$ such that ${^{(m)}\mathring{\mathbf{V}}}(\mathbb{R}^3) \Subset \mathcal{O}_2$
        and $\lim_{m \to \infty}\mid\mid{^{(m)}\mathring{\mathbf{V}}}-\mathring{\mathbf{V}}\mid\mid_{H^N}=0.$
        The advantage of smoothing the data is that all of the iterates are $C^{\infty},$ thus allowing us to
        work with classical derivatives
        during the approximation process. Then beginning with
        ${^{(0)}\mathbf{V}}(t,\snew)\overset{\mbox{\tiny{def}}}{=}{^{(0)}\mathring{\mathbf{V}}}(\snew),$ we inductively
        define ${^{(m+1)}\mathbf{V}}(t,\snew)$
        as the unique solution to the linearization of the EN$_{\kappa}$
        system around ${^{(m)}\mathbf{V}}(t,\snew)$ with initial data
        ${^{(m+1)}\mathbf{V}}(0,\snew)={^{(m+1)}\mathring{\mathbf{V}}}(\snew).$
        As a consequence of the theory of linear\footnote{The exposition on linear theory in \cite{rCdH1966} makes use of the
        symmetric
        hyperbolic setup to obtain energy estimates for the linear
        systems. We may obtain similar energy estimates for the
        linearized EN$_{\kappa}$ equations by using energy currents of the form \eqref{E:EnergyCurrent};
        the proof of Proposition \ref{P:UniformTime} below illustrates the relevant techniques.}  PDEs (consult
        \cite{rCdH1966}), each iterate ${^{(m)}\mathbf{V}}$ is known to possess a
        classical solution on a strip $[0,T_m] \times \mathbb{R}^3,$ on which it
        satisfies, for every real number $N',$ ${^{(m)}\mathbf{V}} \in C^{0}([0,T_m],H_{\Vb}^{N'}).$
        Here, $T_m$ is any real number such that ${^{(m-1)}\mathbf{V}}([0,T_m] \times \mathbb{R}^3) \subset
        \bar{\mathcal{O}}_2,$ which ensures that the sequence of proper energy densities is bounded from below by
        a uniform constant and therefore precludes singularities in energy the
        currents we use during the linearization process.

        In order for the limiting function $\mathbf{V}_{sol}$ to be defined on a strip, it is obviously
        necessary that we show that the sequence of time values $\lbrace T_m \rbrace$ can be
        bounded from below by a positive constant $T_*.$
        To this end, we examine the EOV
        satisfied by ${^{(m)}\mathbf{V}} - {^{(0)}\mathring{\mathbf{V}}}$ and its partial derivatives, and we control the
        growth in $T_*$ of $\mid\mid\mid {^{(m)}\mathbf{V}} - {^{(0)}\mathring{\mathbf{V}}}\mid\mid\mid_{H^N,T_*}$ uniformly in $m$ 
        using energy currents. According to the above paragraph and the Sobolev embedding result $H^2(\mathbb{R}^3) \subset 
        C^0_b(\mathbb{R}^3),$
        it follows that if $\mid\mid\mid {^{(m)}\mathbf{V}} - {^{(0)}\mathring{\mathbf{V}}}\mid\mid\mid_{H^N,T_{*}}$ is small
        enough, uniformly in $m,$ then $T_{*}$ may be selected as a uniform lower
        bound on the $T_m.$ Our detailed proof of the control of the terms
        $\mid\mid\mid {^{(m)}\mathbf{V}} -
        {^{(0)}\mathring{\mathbf{V}}}\mid\mid\mid_{H^N,T_*}$
        is given in Proposition \ref{P:UniformTime} below and
        uses the Sobolev-Moser calculus inequalities, which are refined versions of the fact that
        for $N' > \frac{3}{2},$ $H^{N'}(\mathbb{R}^3)$ is a Banach
        algebra. Their purpose is to control the $L^2$ norms of terms
        of a product form, based on known Sobolev regularity of each factor in the
        product. We state the relevant Sobolev-Moser estimates in the Appendix and give references
        for readers interested in the proofs.

        Our proof of Proposition \ref{P:UniformTime} illustrates
        the relevant techniques for obtaining Sobolev estimates
        from the method of energy currents. Instead of completing the existence proof, which requires
        arguments similar to the ones used in proving this proposition, we refer the reader
        to Majda's local existence proof for symmetric hyperbolic systems \cite{aM1984};
        the only necessary modification to Majda's proof
        is to use the method of energy currents in place of the energy
        principle for symmetric hyperbolic systems.

        In Section \ref{SS:UniqunessandHNMinusOneDependence} we show uniqueness and $H^{N-1}-$Lipschitz-continuous dependence on the 
        initial data. The methods used in this argument are similar to the 
        methods used to prove Proposition \ref{P:UniformTime}, so we provide fewer details.
        We consider the EOV satisfied by the difference of two solutions $\mathbf{V}$ and
        $\widetilde{\mathbf{V}}$ to the EN$_{\kappa}$ system, and then use an appropriately defined energy current to bound the 
        growth of $\mid\mid\mid \mathbf{V} - \widetilde{\mathbf{V}} \mid\mid\mid_{H^{N-1},T}$ by
        a constant times exponential growth in $T.$ We show that the constant
        depends on the initial data and is bounded from above by another constant times $\| \mathbf{V}(0) - 
        \widetilde{\mathbf{V}}(0) \|_{H^{N-1}},$ thus implying uniqueness and $H^{N-1}-$Lipschitz-continuous dependence 
        on the initial data. Our abbreviated proof of Theorem \ref{T:LocalExistence} is complete at the end of this section.

        Our proof of Theorem \ref{T:ContinuousDependence} requires some machinery from the theory of evolution
        equations in a Banach space. The basic method is due to Kato \cite{tK1975}, and
        most of the technical results we use in this section are merely quoted
        from his papers. We find it worthwhile to prove Theorem
        \ref{T:ContinuousDependence} because aside from Kato's work, we have had
        difficulty locating this result in the literature.

    \subsection{An abbreviated proof of Theorem \ref{T:LocalExistence}} \label{SS:AbbreviatedProof}
        As described in Section \ref{SSS:ProofDescription}, we
        produce a sequence of iterates $\lbrace{^{(m)}\mathbf{V}}(t,\snew)\rbrace$ that converges to the
        solution $\mathbf{V}_{sol}(t,\snew).$

        \subsubsection{Smoothing the initial data}
        We begin by smoothing the initial data $\mathring{\mathbf{V}},$ which we assume are of the
        form described in Section \ref{S:IVP}, so that we can work with classical derivatives. Let $\Psi(\snew)$ be a Friedrich's mollifier; i.e. $\Psi \in
        C_c^\infty(\mathbb{R}^3), \  \mbox{supp}(\Psi) \subset \{\snew| \
        |\snew| \leq 1 \}, \Psi \geq 0,$ and $\int \Psi \ d^3\snew = 1.$  For $\epsilon > 0,$
        we set $\Psi_\epsilon(\snew) \overset{\mbox{\tiny{def}}}{=}
        {\epsilon}^{-3} \Psi(\frac{\snew}{\epsilon})$ and
        define $\Psi_\epsilon \mathring{\mathbf{V}}
        \in C^\infty(\mathbb{R}^3)$ by
        \begin{equation}
            \Psi_\epsilon \mathring{\mathbf{V}}(\snew) \overset{\mbox{\tiny{def}}}{=} \int_{\mathbb{R}^3}
            \Psi_\epsilon(\snew- \snew')\mathring{\mathbf{V}}(\snew') \,
            d^3\mathbf{s'}.
        \end{equation}

    The following properties of such a mollification are well-known:

        \begin{align}
            \underset{\epsilon \rightarrow 0^+}{\lim}
                \|\Psi_\epsilon \mathring{\mathbf{V}} - \mathring{\mathbf{V}} \|_{H^N} &=0        \label{E:Mollification1}\\
            \exists \lbrace\epsilon_0 > 0  \land C(\mathring{\mathbf{V}})>0\rbrace
                \owns  0 < \epsilon <
                \epsilon_0 & \Rightarrow \|\Psi_\epsilon \mathring{\mathbf{V}} - \mathring{\mathbf{V}} \|_{L^2}
                \leq \epsilon C(\mathring{\mathbf{V}}) \| \mathring{\mathbf{V}} \|_{H^1}.         \label{E:Mollification2}
        \end{align}
        We will choose below an $\epsilon_0$ that is at least as small as the one
        in \eqref{E:Mollification2}. Once chosen, for a given $m \in \mathbb{N},$ we define

        \begin{align}
            \epsilon_m              &\overset{\mbox{\tiny{def}}}{=} 2^{-m} \epsilon_0                                \label{E:EpsilonDef}\\
            {^{(m)}\mathring{\mathbf{V}}}    &\overset{\mbox{\tiny{def}}}{=} \Psi_{\epsilon_m} \mathring{\mathbf{V}} \label{E:IterateInitialValue1}\\
            {^{(m)}\mathring{\mathbf{W}}}    &\overset{\mbox{\tiny{def}}}{=} \Psi_{\epsilon_m} \mathring{\mathbf{W}}, \label{E:IterateInitialValue2}
        \end{align}
        where $\mathring{\mathbf{W}}$ denotes the first 5 components of $\ringV.$

            By Sobolev embedding, by the assumptions on the initial data
            $\mathring{\mathbf{V}}$, and by the mollification properties above,
            $\exists \lbrace \Lambda>0 \land \epsilon_0 > 0 \rbrace$ (at least as
            small as the $\epsilon_0$ in \eqref{E:Mollification2}) such that

        \begin{align}
            \| \mathring{\mathbf{V}} - {^{(0)} \mathring{\mathbf{V}}} \|_{H^N} &\leq C_{\bar{\mathcal{O}}_2} \frac{\Lambda}{4}          			\label{E:SmoothingApproximation}\\
            \| \mathbf{V} - {^{(0)} \mathring{\mathbf{V}}} \|_{H^N} &\leq \Lambda \Rightarrow \mathbf{V}(\mathbb{R}^3)
                \subset \bar{\mathcal{O}}_2                                                 \label{E:WellDefined}\\
            \|{^{(m)} \mathring{\mathbf{V}}} - {^{(0)} \mathring{\mathbf{V}}}\|_{H^N} & \leq
                C_{\bar{\mathcal{O}}_2} \frac{\Lambda}{2}
                \ \mbox{holds for} \ m \geq 0,                                \label{E:IterateIntialValueEstimate}
        \end{align}
        where $C_{\bar{\mathcal{O}}_2}$ is defined in \eqref{E:UniformPositivity}.

        \begin{remark}                                                                          \label{R:ExtraDerivativeNeeded}
            It is a standard result that if $\epsilon > 0$ and
            $N'$ is any real number, then $\Psi_\epsilon
            \mathring{\mathbf{V}} \in H_{\Vb}^{N'}(\mathbb{R}^3).$ We will make use of this remark below, for
            in the local existence proof, we will need to
            differentiate the equations \eqref{E:IterateInhomogeneous1} - \eqref{E:IterateInhomogeneousl4}
            $N$ times and utilize Sobolev estimates; since several terms from these
            undifferentiated equations already contain one derivative of
            the smoothed initial data, our estimates will involve
            $\|{^{(0)} \mathring{\mathbf{V}}}\|_{H_{\Vb}^{N+1}}.$ See e.g. 
            \eqref{E:bHNEstimate} and \eqref{E:SobolevClaim3}.
        \end{remark}

        \begin{remark}                                                              \label{R:LambdaRemark}
            If we are considering initial data $\mathring{\widetilde{\mathbf{V}}}$ in a small enough $H^N$
            neighborhood $\mathcal{N}$ of the initial data $\ringV,$ we can
            use a \emph{fixed} smoothed function ${^{(0)}
            \mathring{\mathbf{V}}}$ in place of each ${^{(0)}
            \mathring{\widetilde{\mathbf{V}}}}$ in Proposition \ref{P:UniformTime} below, and choose $\Lambda$ to be
            uniform over the neighborhood. For what then enters into the proof of local existence
            for the initial data $\mathring{\widetilde{\mathbf{V}}}$ are the quantities $\|{^{(0)}
            \mathring{\mathbf{V}}}\|_{H_{\Vb}^{N+1}}$ and
            $\|{^{(m)} \mathring{\widetilde{\mathbf{V}}}} - {^{(0)} \mathring{\mathbf{V}}}
            \|_{H^N},$ and the latter quantity is easily
            controlled by the inequality
            \begin{align}                                       \label{E:SmoothedInitialDataInequality}
                \|{^{(m)} \mathring{\widetilde{\mathbf{V}}}} - {^{(0)}
                \mathring{\mathbf{V}}}  \|_{H^N} \leq \|{^{(m)} \mathring{\widetilde{\mathbf{V}}}} -
                \mathring{\widetilde{\mathbf{V}}}\|_{H^N}
                + \|\mathring{\widetilde{\mathbf{V}}} - \ringV\|_{H^N} + \| \ringV - {^{(0)}
                \mathring{\mathbf{V}}}\|_{H^N};
            \end{align}
            once we fix an appropriately chosen smoothed function ${^{(0)}
            \mathring{\mathbf{V}}}$ and a corresponding $\Lambda$ satisfying \eqref{E:SmoothingApproximation}
            and \eqref{E:WellDefined}, we may independently adjust
            the mollification of each $\mathring{\widetilde{\mathbf{V}}}$ belonging to $\mathcal{N}$ so that the right-hand side of
            \eqref{E:SmoothedInitialDataInequality} is $\leq
            C_{\bar{\mathcal{O}}_2} \Lambda/2$ for $m \geq 0.$ This estimate would then enter into
            our proof in inequality \eqref{E:IterateInequality}. We also note that this remark is relevant for Corollary
            \ref{C:ExistenceInterval} above.
        \end{remark}

        \subsubsection{Defining the iterates}       \label{SS:IterateDefinition}
        Consider the iteration scheme described in Section \ref{SSS:ProofDescription}. The components of the iterates are denoted by 
        ${^{(m)}\mathbf{V}}=\left({^{(m)}\Ent},{^{(m)}P},\cdots,{^{(m)}\psi_3} 
        \right),$ and we use the notation ${^{(m)}\mathbf{W}}$ to denote the first five components of ${^{(m)}\mathbf{V}}.$ Linear 
        existence theory implies
        that each iterate ${^{(m+1)}\mathbf{V}}$ is a well-defined, smooth function with
    		$\|{^{(m+1)}\mathbf{V}}(t)- {^{(0)}\mathring{\mathbf{V}}} \|_{H^N} < \infty$ for $0 \leq t \leq T_m.$
        Here, by \eqref{E:WellDefined}, $[0,T_m]$ is any time interval
        for which $\mid\mid\mid {^{(m)}\mathbf{V}}- {^{(0)}\mathring{\mathbf{V}}}
        \mid\mid\mid_{H^N,T_m} \leq \Lambda$ holds.

        \subsubsection{The uniform time estimate} \label{SSS:UniformTime}
        As discussed in Section \ref{SSS:ProofDescription}, we show the
        existence of a fixed $T_*>0$ such that $\mid\mid\mid {^{(m)}\mathbf{V}} - {^{(0)}\mathring{\mathbf{V}}}
        \mid\mid\mid_{H^N,T_*} \leq \Lambda$ for $m \in \mathbb{N},$ thus ensuring that each iterate
        is defined for a uniform amount of time and remains inside
        of $\bar{\mathcal{O}}_2.$ We state a slightly stronger
        version of this result as a proposition:
        \begin{proposition}                                                          \label{P:UniformTime}
            Let $\Lambda$ denote the constant defined in \eqref{E:SmoothingApproximation} - \eqref{E:IterateIntialValueEstimate}. 
            Then there exist $T_* >0$ and $L>0$ such that each of the iterates ${^{(m)}\mathbf{V}}(t,\snew)$ satisfies
                \begin{subequations}
                    \begin{align}
                        \mid\mid\mid {^{(m)}\mathbf{V}} - {^{(0)}\mathring{\mathbf{V}}}\mid\mid\mid_{H^N,T_*} \leq \Lambda        \label{E:UniformTime1a}\\
                        \mid\mid\mid \partial_t\big({^{(m)}\mathbf{V}} \big) \mid\mid\mid_{H^{N-1},T_*} \leq L.       \label{E:UniformTime1b}
                    \end{align}
                \end{subequations}
					\end{proposition}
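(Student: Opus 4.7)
\bigskip

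\noindent\textbf{Proof plan for Proposition \ref{P:UniformTime}.} \ \ The plan is to argue by induction on $m$, using higher-order energy currents in place of the symmetric-hyperbolic energy principle. Assume inductively that $\mid\mid\mid {^{(m)}\mathbf{V}} - {^{(0)}\mathring{\mathbf{V}}}\mid\mid\mid_{H^N,T_*} \leq \Lambda$; by \eqref{E:WellDefined} this keeps the iterate inside $\bar{\mathcal{O}}_2$, so the functions $\widetilde{Q}, \widetilde{R}, \ldots$ evaluated along ${^{(m)}\mathbf{V}}$ are bounded and $C_{\bar{\mathcal{O}}_2}$ from \eqref{E:UniformPositivity} is available. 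The iterate ${^{(m+1)}\mathbf{V}}$ solves the EOV around the bgs ${^{(m)}\mathbf{V}}$ with the inhomogeneous terms $(\mathfrak{F},\mathfrak{G},\mathfrak{H}^{(j)},\mathfrak{L}^{(\nu)})$ of \eqref{E:Inhomogeneousf}--\eqref{E:Inhomogeneousl4} evaluated at ${^{(m)}\mathbf{V}}$. The difference $\dot{\mathbf{V}} \overset{\mbox{\tiny{def}}}{=} {^{(m+1)}\mathbf{V}} - {^{(0)}\mathring{\mathbf{V}}}$ then also satisfies an EOV with the same bgs, but with inhomogeneous terms that additionally contain $A^\mu({^{(m)}\mathbf{V}})\partial_\mu {^{(0)}\mathring{\mathbf{W}}}$ and analogous contributions from the Klein-Gordon block; these pick up one spatial derivative of the mollified data, which is why Remark \ref{R:ExtraDerivativeNeeded} forces $\|{^{(0)}\mathring{\mathbf{V}}}\|_{H^{N+1}_{\Vb}}$ to appear.

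\smallskip

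\noindent For each spatial multi-index $\vec{\alpha}$ with $|\vec{\alpha}| \leq N$ I form the higher-order energy current $\dot{J}_{\vec{\alpha}}$ of Remark \ref{R:HigherOrderEnergyCurrents}, built from $\partial_{\vec{\alpha}} \dot{\mathbf{V}}$ with coefficients taken from the bgs ${^{(m)}\mathbf{V}}$. Integrating the divergence identity \eqref{E:EnergyCurrentDivergence} (with the substitutions of that remark) over the slab $[0,t] \times \mathbb{R}^3$, the boundary contribution at time $t$ dominates $C_{\bar{\mathcal{O}}_2}\|\partial_{\vec{\alpha}}\dot{\mathbf{V}}(t)\|_{L^2}^2$ by \eqref{E:UniformPositivity}. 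The volume term contains (i) products of $\partial {^{(m)}\mathbf{V}}$ with quadratic expressions in $\partial_{\vec{\alpha}}\dot{\mathbf{V}}$ coming from derivatives falling on the coefficients, and (ii) the contributions of $(\mathbf{b}_{\vec{\alpha}},\partial_{\vec{\alpha}}\mathbf{l})$. Both are controlled by the Sobolev--Moser inequalities in the Appendix: the coefficient terms produce $C(\bar{\mathcal{O}}_2)\|\partial {^{(m)}\mathbf{V}}\|_{L^\infty}\|\partial_{\vec{\alpha}}\dot{\mathbf{V}}\|_{L^2}^2$ after a Sobolev embedding $H^{N-1}\hookrightarrow L^\infty$ (valid since $N\geq 3$), while the commutator contributions inside $\mathbf{b}_{\vec{\alpha}}$, together with $\partial_{\vec{\alpha}}\mathbf{l}$, are bounded by $C(\bar{\mathcal{O}}_2)\bigl(1+\|{^{(m)}\mathbf{V}}-\Vb\|_{H^N}\bigr)\bigl(\|\dot{\mathbf{V}}\|_{H^N}+\|{^{(0)}\mathring{\mathbf{V}}}\|_{H^{N+1}}\bigr)$. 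Summing over $|\vec{\alpha}|\leq N$ and using the inductive bound $\|{^{(m)}\mathbf{V}}-{^{(0)}\mathring{\mathbf{V}}}\|_{H^N}\leq \Lambda$ yields a differential inequality of the form
\begin{equation}
\frac{d}{dt}\|\dot{\mathbf{V}}(t)\|_{H^N}^2 \leq C_1\|\dot{\mathbf{V}}(t)\|_{H^N}^2 + C_2, \notag
\end{equation}
with $C_1,C_2$ depending only on $\bar{\mathcal{O}}_2$, $\Lambda$, and $\|{^{(0)}\mathring{\mathbf{V}}}\|_{H^{N+1}_{\Vb}}$, but not on $m$.

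\smallskip

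\noindent Gronwall then gives $\|\dot{\mathbf{V}}(t)\|_{H^N}^2 \leq \bigl(\|\dot{\mathbf{V}}(0)\|_{H^N}^2 + C_2 t\bigr)e^{C_1 t}$. By \eqref{E:IterateIntialValueEstimate} the initial value is at most $(C_{\bar{\mathcal{O}}_2}\Lambda/2)^2$, so choosing $T_*$ small enough (depending only on $\bar{\mathcal{O}}_2$, $\Lambda$, $\|{^{(0)}\mathring{\mathbf{V}}}\|_{H^{N+1}_{\Vb}}$) closes the induction and establishes \eqref{E:UniformTime1a}. For \eqref{E:UniformTime1b} I solve the EN$_\kappa$ equations \eqref{E:ENkappa1}--\eqref{E:ENkappa6} as evaluated at ${^{(m)}\mathbf{V}}$ for $\partial_t {^{(m)}\mathbf{V}}$: invertibility of $A^0$ on $\bar{\mathcal{O}}_2$ (Remark \ref{R:Invertible}) expresses $\partial_t {^{(m)}\mathbf{W}}$ as a smooth function of ${^{(m)}\mathbf{V}}$ and $\nabla {^{(m)}\mathbf{V}}$, while the lower half is already solved for $\partial_t$. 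Applying the composition/product Sobolev--Moser estimates to pass from the $H^N$ bound on ${^{(m)}\mathbf{V}}-\Vb$ to an $H^{N-1}$ bound on $\partial_t{^{(m)}\mathbf{V}}$ produces the constant $L$.

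\smallskip

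\noindent The main obstacle is the bookkeeping step that identifies the inhomogeneous terms $\mathbf{b}_{\vec{\alpha}}$ for the EOV satisfied by $\partial_{\vec{\alpha}}\dot{\mathbf{V}}$ and checks that every commutator term $[\partial_{\vec{\alpha}}, A^\mu({^{(m)}\mathbf{V}})]\partial_\mu \dot{\mathbf{W}}$ is genuinely at most first-order in $\dot{\mathbf{V}}$ with coefficients controllable by $\|{^{(m)}\mathbf{V}}-\Vb\|_{H^N}$; without this, the Sobolev--Moser estimate would fail to close. The additional subtlety is the presence of ${^{(0)}\mathring{\mathbf{V}}}$ inside the raw source, which obligates the one-extra-derivative hypothesis $\|{^{(0)}\mathring{\mathbf{V}}}\|_{H^{N+1}_{\Vb}}<\infty$ flagged in Remark \ref{R:ExtraDerivativeNeeded}. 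Everything else is a routine energy-current computation modeled on the classical proof for symmetric hyperbolic systems in \cite{aM1984}, with \eqref{E:PositiveDefinite}--\eqref{E:UniformPositivity} replacing the standard coercivity of the symmetric part.
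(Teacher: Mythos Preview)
Your overall strategy---induction on $m$, higher-order energy currents for $\dot{\mathbf{V}}={^{(m+1)}\mathbf{V}}-{^{(0)}\mathring{\mathbf{V}}}$, Sobolev--Moser commutator estimates, Gronwall, then solve the equation for $\partial_t$---matches the paper's. There is, however, a structural gap in the induction.

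The divergence formula \eqref{E:EnergyCurrentDivergence} contains coefficients such as $\partial_0(\widetilde{U}_k/\widetilde{U}^0)$ and $\partial_\mu \widetilde{U}^\mu$ that involve the \emph{time} derivative of the bgs $\widetilde{\mathbf{V}}={^{(m)}\mathbf{V}}$. Hence your constant $C_1$ actually depends on $\|\partial_t {^{(m)}\mathbf{V}}\|_{L^\infty}$, not only on $\bar{\mathcal{O}}_2$, $\Lambda$, and $\|{^{(0)}\mathring{\mathbf{V}}}\|_{H^{N+1}_{\Vb}}$ as you assert. Since your induction hypothesis assumes only \eqref{E:UniformTime1a}, which controls spatial but not time derivatives of ${^{(m)}\mathbf{V}}$, the differential inequality does not close uniformly in $m$. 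Relatedly, in your last paragraph ${^{(m)}\mathbf{V}}$ does \emph{not} solve the EN$_\kappa$ equations; it solves the linearization around ${^{(m-1)}\mathbf{V}}$, so the formula for $\partial_t{^{(m)}\mathbf{W}}$ involves $A^\mu({^{(m-1)}\mathbf{V}})$ and $\boldsymbol{\mathfrak{B}}({^{(m-1)}\mathbf{V}})$, not coefficients evaluated at ${^{(m)}\mathbf{V}}$.

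The paper repairs this by carrying \emph{both} \eqref{E:UniformTime1a} and \eqref{E:UniformTime1b} in the induction hypothesis, so the energy constants read $C=C(N,\bar{\mathcal{O}}_2,\|{^{(0)}\mathring{\mathbf{V}}}\|_{H_{\Vb}^{N+1}},\Lambda,L)$ with $L$ still unspecified. One then provisionally assumes the resulting bound is $\leq\Lambda$ (giving \eqref{E:UniformTime1a} at level $m{+}1$), and Lemma~\ref{L:L} shows that \eqref{E:UniformTime1b} at level $m{+}1$ follows with a constant $L=L(N,\bar{\mathcal{O}}_2,\|{^{(0)}\mathring{\mathbf{V}}}\|_{H_{\Vb}^{N}},\Lambda)$ that does \emph{not} itself depend on $L$. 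This value of $L$ is then fixed, and only afterward is $T_*$ chosen small enough to validate the provisional assumption. Without this two-step bootstrap (or an equivalent strong-induction argument that first derives the $\partial_t$ bound for ${^{(m)}\mathbf{V}}$ from \eqref{E:UniformTime1a} at levels $m{-}1$ and $m$), your argument is circular.
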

             
            \noindent {\bf Proof.} We proceed in our proof of Proposition \ref{P:UniformTime} by induction on $m,$ noting that 
            ${^{(0)}\mathbf{V}(t,\snew)} 
            \overset{\mbox{\tiny{def}}}{=} {^{(0)}\mathring{\mathbf{V}}}(\snew)$
            satisfies \eqref{E:UniformTime1a} and \eqref{E:UniformTime1b} with
            any $T_{*} > 0$ and any positive number $L$. We thus assume that
            $^{(m)}\mathbf{V}$ satisfies \eqref{E:UniformTime1a} and \eqref{E:UniformTime1b}
            without first specifying the values of $T_*$ or $L.$ At the end of the
            proof, we will show that we can choose such a $T_*$ and an
            $L,$ both independent of $m,$ such that energy estimates imply the inductive
            step. To obtain the estimates stated in the proposition, it is
            convenient
            to work not with the iterates themselves, but with the difference
            between the iterate and the smoothed initial value. Thus,
            referring to the notation defined in \eqref{E:IterateInitialValue1} and \eqref{E:IterateInitialValue2} ,
            for each $m \in \mathbb{N}$ we define
            \begin{align}                                                       \label{E:DotVmUniformTime}
                \dot{\mathbf{V}}(t,\snew) &\overset{\mbox{\tiny{def}}}{=} {^{(m+1)}\mathbf{V}}(t,\snew) -
                    {^{(0)}\mathring{\mathbf{V}}}(\snew)                                  \\
                \dot{\mathbf{W}}(t,\snew) &\overset{\mbox{\tiny{def}}}{=} {^{(m+1)}\mathbf{W}}(t,\snew)
                    - {^{(0)}\mathring{\mathbf{W}}}(\snew) \\
                \widetilde{\mathbf{V}} &\overset{\mbox{\tiny{def}}}{=}{^{(m)}\mathbf{V}}       \label{E:WidetildeVUniformTime}.
            \end{align}

            We have used the notation $\dot{\mathbf{V}}$ and $\widetilde{\mathbf{V}}$ suggestively: it follows from
            the the definition of the iterates, definition \eqref{E:DotVmUniformTime}, and
            definition \eqref{E:WidetildeVUniformTime} that
            $\dot{\mathbf{V}}$ is a solution to the EOV
            \eqref{E:EOV1} - \eqref{E:WidetildeR} defined by the bgs $\widetilde{\mathbf{V}}$ with initial data
            $\dot{\mathbf{V}}(0,\snew) = {^{(m+1)}\mathring{\mathbf{V}}}(\snew) -
            {^{(0)}\mathring{\mathbf{V}}}(\snew).$ Our notation \eqref{E:DotVmUniformTime} - \eqref{E:WidetildeVUniformTime}
            is therefore consistent with our notation for the EOV introduced in Section \ref{SS:EOV}.
            Recalling also the notation \eqref{E:bdef} and \eqref{E:ldef}
            introduced in Section \ref{SS:EOV}, the inhomogeneous terms in the EOV satisfied by $\dot{\mathbf{V}}$
            are given by $(\mathbf{b},\mathbf{l})=(f,g,\cdots,l^{(4)}),$ where for $j=1,2,3,$

            \begin{align}
                f &= - \widetilde{U}^k \partial_k  [{^{(0)}\mathring{{\Ent}}}]                               \label{E:IterateInhomogeneous1}\\
                g &= -\widetilde{U}^k \partial_k  [{^{(0)}\mathring{P}}] \ - \
                    \widetilde{Q} \partial_k  [{^{(0)}\mathring{U}^k}] \ + \
                    (4\widetilde{P}-3\widetilde{Q})\widetilde{U}^\mu \widetilde{\psi}_{\mu}             \label{E:IterateInhomogeneous2}\\
                h^{(j)} &= -(\widetilde{R} + \widetilde{P}) \widetilde{U}^k \partial_k [{^{(0)}
                    \mathring{U}^j}] \ - \widetilde{\Pi}^{kj} \partial_k [{^{(0)}\mathring{P}}]
                    + (3\widetilde{P} - \widetilde{R})\widetilde{\Pi}^{\mu j}
                    \widetilde{\psi}_{\mu}   	                          \label{E:IterateInhomogeneous3}\\
                l^{(0)} &= \kappa^2 \widetilde{\phi} + \widetilde{R} - 3\widetilde{P} -
                    \partial^k[{^{(0)}\mathring{\psi}_k}]                                               \label{E:IterateInhomogeneousl0}\\
                l^{(j)} &= \partial^j [{^{(0)}\mathring{\psi}_0}]            \label{E:IterateInhomogeneouslj}\\
                l^{(4)} & = \widetilde{\psi}_0.                                                         \label{E:IterateInhomogeneousl4}
            \end{align}

        As explained in Section \ref{SS:EOV}, for each \emph{spatial} derivative multi-index $\vec{\alpha}$ with
        \\ $0 \leq |\vec{\alpha}| \leq N,$ we may differentiate the EOV
        with inhomogeneous terms $(\mathbf{b},\mathbf{l})$
        to which $\dot{\mathbf{V}}$ is a solution, obtaining that $\partial_{\vec{\alpha}}
        \dot{\mathbf{V}}$ is also a solution to the EOV defined
        by the \emph{same} bgs $\widetilde{\mathbf{V}}$ with inhomogeneous terms
        $(\mathbf{b}_{\vec{\alpha}}, \partial_{\vec{\alpha}}\mathbf{l}).$
        The inhomogeneous terms $\mathbf{b}_{\vec{\alpha}}$ are given by
        \begin{align}
            \mathbf{b}_{\vec{\alpha}} \overset{\mbox{\tiny{def}}}{=}
                    A^0 \partial_{\vec{\alpha}} \left((A^0)^{-1}\mathbf{b} \right) +
                    \mathbf{k}_{\vec{\alpha}},                                                                \label{E:balphaDef}
        \end{align}
        where
        \begin{align}
            \mathbf{k}_{\vec{\alpha}} \overset{\mbox{\tiny{def}}}{=} A^0 \left[(A^0)^{-1}A^k \partial_k (\partial_{\vec{\alpha}}\dot{\mathbf{W}})
            -\partial_{\vec{\alpha}}
            \left((A^0)^{-1} A^k \partial_k \dot{\mathbf{W}} \right)\right]                              \label{E:kalphaDef}
        \end{align}
        for $0 \leq |\vec{\alpha}| \leq N.$ Note that we have suppressed the dependence
        of the $A^{\nu}(\cdot)$ on $\widetilde{\mathbf{V}}.$

        As discussed in Section \ref{SS:JdotUniformPositiveDefinite}, we
        will use energy currents to control $\mid\mid\mid\dot{\mathbf{V}}\mid\mid\mid_{H^N,T}.$  We state here as a lemma an important
        differential inequality that allows us to proceed with our desired
        Sobolev estimates. Its proof is based on the key properties 
        of energy currents described in Section \ref{SS:EnergyCurrents} and the divergence theorem.

        \begin{lemma}                                                                   \label{L:DifferentialInequality}
        (See Fig. \ref{Fi:DODCone})
        Suppose $r \geq T>0.$ For $0 \leq t \leq T,$ let \\ $\Sigma_{t,r-t} \overset{\mbox{\tiny{def}}}{=}
        \{x \in \mathcal{M} |x^0=t, x^k x_k \leq r - t \} $ denote the
        Euclidean sphere of radius $r-t$ centered at $(t,0,0,0)$ in the flat
        hypersurface $\{x^0=t\},$ and let \\ $M_{t,r} \overset{\mbox{\tiny{def}}}{=} \{x \in \mathcal{M}| 0
        \leq x^0 \leq t, x^k x_k = r - x^0 \}$ denote the mantle of the
        past directed, truncated light cone with lower base $\Sigma_{0,r}$
        and upper base $\Sigma_{t,r-t}.$ Let $\dot{\mathbf{V}}$ be a solution to the
        EOV \eqref{E:EOV1} - \eqref{E:WidetildeR} defined by the bgs $\widetilde{\mathbf{V}},$
        and assume that \\ $\widetilde{\mathbf{V}}([0,T] \times \mathbb{R}^3) \subset
            \bar{\mathcal{O}}_2.$ Let $\dot{J}$ be the energy current \eqref{E:EnergyCurrent} for the variation
            $\dot{\mathbf{V}}$ defined by the bgs $\widetilde{\mathbf{V}},$ and define $\mathscr{E}(t;r) \overset{\mbox{\tiny{def}}}{=}
        \Big(\underset{\Sigma_{t,r-t}}{\int} \dot{J}^0(t,\snew) \,
        d^3\snew \Big)^{1/2}.$  Then

    \begin{equation}
            2\mathscr{E}(t;r)\frac{d}{dt}\mathscr{E}(t;r) \leq
                \underset{\Sigma_{t,r-t}}{\int} \partial_\mu
                \dot{J}^\mu(t,\snew) \, d^3\snew.
    \end{equation}
    \end{lemma}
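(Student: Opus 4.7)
The plan is to reduce the claim to the spatial divergence theorem together with the positivity property \eqref{E:PositiveDefinite} applied on the boundary of $\mathcal{I}^{*+}_x$. First I observe that $\mathscr{E}^2(t;r) = \int_{|\snew|\leq r-t} \dot{J}^0(t,\snew)\, d^3\snew$, so by the Leibniz rule for a moving domain,
\begin{equation}
\frac{d}{dt}\mathscr{E}^2(t;r) = \int_{|\snew|\leq r-t} \partial_t \dot{J}^0(t,\snew)\, d^3\snew \ - \ \int_{|\snew|=r-t} \dot{J}^0(t,\snew)\, dS, \notag
\end{equation}
where the boundary term carries a minus sign because the radius $r-t$ is decreasing. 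Using the identity $\partial_t \dot{J}^0 = \partial_\mu \dot{J}^\mu - \partial_k \dot{J}^k$ and the spatial divergence theorem on $\{|\snew|\leq r-t\}$, the spatial derivative piece becomes $\int_{|\snew|=r-t} \hat{n}_k \dot{J}^k\, dS$, where $\hat{n}_k = s^k/|\snew|$ is the outward Euclidean unit normal.

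Next, I combine the two boundary integrals and introduce the covector $\xi_\mu \overset{\mbox{\tiny{def}}}{=} (1,\hat{n}_1,\hat{n}_2,\hat{n}_3)$ on the mantle. Note that $\xi$ is precisely the conormal to the past light cone $M_{t,r}$, and one checks $\Mg^{\mu\nu}\xi_\mu\xi_\nu = -1+|\hat{n}|^2 = 0$, so $\xi$ lies on the sheet $C^*_{x,l}$ with $\xi_0 > 0$. The combined boundary term is then $\int_{|\snew|=r-t}\xi_\mu \dot{J}^\mu\, dS$, and the identity obtained so far reads
\begin{equation}
\frac{d}{dt}\mathscr{E}^2(t;r) = \int_{\Sigma_{t,r-t}} \partial_\mu \dot{J}^\mu(t,\snew)\, d^3\snew \ - \ \int_{|\snew|=r-t} \xi_\mu \dot{J}^\mu\, dS. \notag
\end{equation}

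To finish, I must show the boundary term is non-negative. Since $\widetilde{\mathbf{V}}(t,\snew)\in \bar{\mathcal{O}}_2$, the quadratic form $\xi_\mu \dot{J}^\mu(\dot{\mathbf{V}},\dot{\mathbf{V}})$ is strictly positive for every $\xi \in \mathcal{I}^{*+}_x$ by \eqref{E:PositiveDefinite}. The covector $\xi=(1,\hat{n})$ lies in the closure $\overline{\mathcal{I}^{*+}_x}$ (approach from the interior by $\xi^{(\epsilon)}=(1+\epsilon,\hat{n})$), so by continuity of the quadratic form in $\xi$ we obtain $\xi_\mu \dot{J}^\mu(\dot{\mathbf{V}},\dot{\mathbf{V}})\geq 0$ pointwise on $M_{t,r}$. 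Consequently the boundary integral is non-negative, yielding
\begin{equation}
\frac{d}{dt}\mathscr{E}^2(t;r) \leq \int_{\Sigma_{t,r-t}} \partial_\mu \dot{J}^\mu(t,\snew)\, d^3\snew, \notag
\end{equation}
and since $\frac{d}{dt}\mathscr{E}^2 = 2\mathscr{E}\frac{d\mathscr{E}}{dt}$, the lemma follows.

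The main obstacle is the positivity of $\xi_\mu \dot{J}^\mu$ on the boundary of the inner characteristic core rather than the interior: \eqref{E:PositiveDefinite} is stated as a strict inequality only for $\xi \in \mathcal{I}^{*+}_x$. The resolution is the continuity-in-$\xi$ argument above, which uses nothing beyond the explicit polynomial dependence of the form on $\xi$ and the fact that the light cone sheet $C^*_{x,l}$ with $\xi_0>0$ sits in the closure of $\mathcal{I}^{*+}_x$. Everything else is a standard divergence-theorem/Leibniz computation; the geometric content is entirely encoded in the identification of the conormal to the light-cone mantle as a null, future-directed covector, which aligns exactly with the dominant-energy-type positivity built into the energy current \eqref{E:EnergyCurrent}.
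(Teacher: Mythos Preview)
Your proof is correct and follows essentially the same approach as the paper: both arguments reduce to the divergence theorem plus the observation that the conormal to the light-cone mantle is a null, future-directed covector on which $\xi_\mu \dot{J}^\mu \geq 0$ by continuity from \eqref{E:PositiveDefinite}. The only cosmetic difference is that the paper applies the spacetime divergence theorem on the truncated cone and then differentiates in $t$, whereas you differentiate first via the Leibniz rule and then apply the spatial divergence theorem; these are equivalent computations.
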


    \begin{remark}
        We note that our use of $t$ in the statement of Lemma \ref{L:DifferentialInequality} as a constant value taken on by the 
        generic spacetime coordinate $x^0$ is inconsistent with our usual notational convention for spacetime
        points defined in Section \ref{SS:SpacetimeConventions}, in which $t$ and $x_0$ are both used
        in the same manner as generic coordinate variables.
    \end{remark}

        \begin{proof}
            By the divergence theorem, we have that

            \begin{align}                                                       \notag
                 &\mathscr{E}^2(t;r) = \underset{\Sigma_{t,r-t}}{\int}
                    \dot{J}^0(t,\snew) \, d^3\snew = \underset{\Sigma_{0,r}}{\int} \dot{J}^0(0,\snew)
                 \,  d^3\snew\ \\
                 & - \ \underset{M_{t,r}}{\int} \langle \hat{n}(x),\dot{J}(x)\rangle_E \, d\mathcal{H}(x)
                 \ + \ \int_{t'=0}^{t'=t} \Big(\underset{\Sigma_{t',r-t'}}{\int}
                    \partial_\mu \dot{J}^\mu(t',\snew) \, d^3\snew \Big) \,
                    d{t'}. \label{E:DifferentialInequality}
            \end{align}

            \begin{figure}[htp]
            		\centering
                \includegraphics{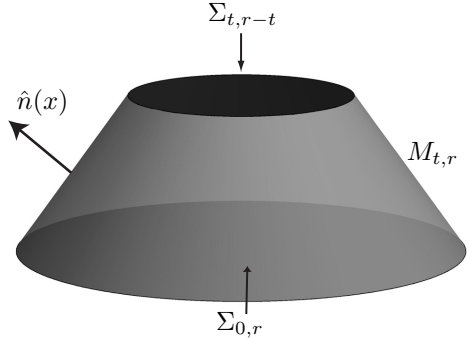}
                \caption{The Surfaces of Integration in Lemma \ref{L:DifferentialInequality}} \label{Fi:DODCone}
             \end{figure}

            Here, $\hat{n}(x)$ is the Euclidean outer normal at $x \in M_{t,r}$ to the
            mantle of truncated cone, $\langle\hat{n}(x),\dot{J}(x)\rangle_E$ denotes
            the Euclidean inner product of $\hat{n}(x)$ and $\dot{J}(x)$ as
            vectors in $\mathbb{R}^4,$ and $\mathcal{H}$ is the Hausdorff measure
            on the mantle of the cone. For each normal vector
            $\hat{n}(x),$ let $_{\hat{n}(x)}\xi$ denote the co-vector
            belonging to $T^*_x \mathcal{M}$ such that $_{\hat{n}(x)}\xi(X) =
            {_{\hat{n}(x)}\xi_{\mu}} X^{\mu} = \langle\hat{n}(x),X\rangle_E$ holds for every
            $X \in T_x \mathcal{M}.$ By the positivity condition
            \eqref{E:PositiveDefinite}, co-vectors $\xi$ belonging to
            $\mathcal{I}^{*+}_x$ satisfy $\xi_{\mu} \dot{J}^{\mu} (\dot{\mathbf{V}},\dot{\mathbf{V}}) > 0$ for all non-zero 
            variations
            $\dot{\mathbf{V}}.$ Since for each $x \in M_{t,r},$ the co-vector $_{\hat{n}(x)}\xi$ belongs to the
            boundary of $\mathcal{I}^{*+}_x$, which is the positive component of the cone $C^*_{x,l},$ continuity
            in the variable $\xi$ implies that $\langle\hat{n}(x),\dot{J}(x)\rangle_E \ =
            {_{\hat{n}(x)}\xi_{\mu}} \dot{J}^{\mu}(\dot{\mathbf{V}},\dot{\mathbf{V}})
            \geq 0$ holds for $x \in M_{t,r}.$ Furthermore, if $t_1 < t_2$, then $M_{t_1,r} \subset M_{t_2,r}.$ From
            these facts it follows that $-\underset{M_{t,r}}{\int}\langle\hat{n}(x),\dot{J}(x)\rangle_E
            \,d\mathcal{H}(x)$ is a decreasing function of $t$ on $[0,T].$  Lemma \ref{L:DifferentialInequality} now follows
            from differentiating each side of \eqref{E:DifferentialInequality} with respect to
            $t$ and accounting for this decreasing term. Fig. \ref{Fi:DODCone} illustrates the
            setup in $\mathbb{R}^{1+2},$ where the vertical direction represents positive values of $t.$
        \end{proof}

    Returning to the proof of the proposition and recalling that
    we are using definitions \eqref{E:DotVmUniformTime} and \eqref{E:WidetildeVUniformTime} to
    define $\dot{\mathbf{V}}$ and $\widetilde{\mathbf{V}},$ we let $\dot{J}_{\vec{\alpha}}$ denote the energy current for the variation
    $\partial_{\vec{\alpha}} \dot{\mathbf{V}}$ defined by the bgs $\widetilde{\mathbf{V}}.$
    For notational convenience, we allow $\vec{\alpha}$ to take on the value $\vec{\mathbf{0}},$
    in which case $\dot{J}_{\vec{\mathbf{0}}}$ is defined to be the energy current in the variation
    $ \dot{\mathbf{V}}$ defined by the bgs $\widetilde{\mathbf{V}}.$

    As in Lemma \ref{L:DifferentialInequality}, we define for any $T_*>0$ and $r>T_*$ the
    following functions of $t$ on $[0,T_*]:$
    \begin{align}
        \mathscr{E}_{\vec{\alpha}}(t;r)&\overset{\mbox{\tiny{def}}}{=}\Big(\underset{\Sigma_{t,r-t}}{\int}
            {\dot{J}_{\vec{\alpha}}}^0(t,\snew) \, d^3\snew\Big)^{1/2} \notag \\
        E(t;r;N) &\overset{\mbox{\tiny{def}}}{=} \Big(\sum_{0 \leq |\vec{\alpha}| \leq N}
            \mathscr{E}_{\vec{\alpha}}^2(t;r)\Big)^{\frac{1}{2}} 
            = \Big( \sum_{0 \leq |\vec{\alpha}| \leq N} \underset{\Sigma_{t,r-t}}{\int}
            {\dot{J}_{\vec{\alpha}}}^0(t,\snew) \, d^3 \snew \Big)^{\frac{1}{2}} .
    \end{align}
    Then with $C_{\bar{\mathcal{O}}_2}$ defined in \eqref{E:UniformPositivity}, we have that
    \begin{align}
        C_{\bar{\mathcal{O}}_2}E^2(t;r;N) &\leq \|\dot{\mathbf{V}}(t)\|_{H^N(\Sigma_{t,r-t})}^2
         \leq C_{\bar{\mathcal{O}}_2}^{-1} E^2(t;r;N) \label{E:NormRelationship}.
    \end{align}
    Additionally, by Lemma \ref{L:DifferentialInequality}, we have the following inequality
    for $0 \leq t \leq T_*:$

    \begin{align}
        2E(t;r;N) \frac{d}{dt}E(t;r;N)                                  \label{E:DifferentialInequality2}
            \leq \sum_{0 \leq |\vec{\alpha}| \leq N}{} \underset{\Sigma_{t,r-t}}{\int} \partial_\mu
            \left({\dot{J}_{\vec{\alpha}}}^\mu(t,\snew)\right) \, d^3\snew.
    \end{align}

    The technically cumbersome aspect of the proof of Proposition \ref{P:UniformTime} is bounding the
    right-hand side of \eqref{E:DifferentialInequality2} by a
    constant times $E(t;r;N) + E^2(t;r;N),$
    which then allows us to use Gronwall's inequality to exponentially bound from above the
    growth of $E(t;r;N)$ in $t.$ We prove some of the technical points in
    lemmas \ref{L:EnergyCurrentDivergenceL1SobolevEstimate} and \ref{L:L} below, so as to not disrupt the
    main argument. The keys to proofs of lemmas \ref{L:EnergyCurrentDivergenceL1SobolevEstimate} and \ref{L:L}
    are Sobolev-Moser calculus inequalities, special versions of which are stated in the Appendix.
    In the following argument, $C=C(N,\bar{\mathcal{O}}_2,\|{^{(0)}\mathring{\mathbf{V}}}\|_{H_{\Vb}^{N+1}},\Lambda,L),$
    even though we have not yet chosen $L.$ By Lemma \ref{L:EnergyCurrentDivergenceL1SobolevEstimate}, we have that
    \begin{align}                                                                                                   \label{E:SobolevMoserBound}
        \sum_{0 \leq |\vec{\alpha}| \leq N}{} \underset{\Sigma_{t,r-t}}{\int}
            \partial_\mu \left({\dot{J}_{\vec{\alpha}}}^\mu(t,\snew)\right)
            \, d^3\snew & \leq C \cdot \big[\|\dot{\mathbf{V}}(t) \|_{H^N(\Sigma_{t,r-t})} + \|\dot{\mathbf{V}}(t) \|_{H^N(\Sigma_{t,r-t})}^2\big] \notag \\
        & \leq C \cdot \big[C^{-1/2}_{\bar{\mathcal{O}}_2}E(t;r;N) + C^{-1}_{\bar{\mathcal{O}}_2}
            E^2(t;r;N)\big],
    \end{align}
    where in the second inequality we have used
    \eqref{E:NormRelationship}. Combining \eqref{E:DifferentialInequality2} with
    \eqref{E:SobolevMoserBound}, and applying Gronwall's inequality, we have for $0
    \leq t \leq T_*$ that
    \begin{align}                                                                           \label{E:EquivalentNormBound}
        E(t;r;N)  \leq  \big[E(0;r;N)
            + C\cdot(2C_{\bar{\mathcal{O}}_2})^{-1/2}t
            \big] \cdot \big[\mbox{exp}\big(C\cdot(2C_{\bar{\mathcal{O}}_2})^{-1}t\big)\big],
    \end{align}
    and consequently by \eqref{E:NormRelationship}, that
    \begin{align}                                                                                                    \label{E:ConicalDotVEstimate}
        \|\dot{\mathbf{V}}(t)\|_{H^N(\Sigma_{t,r-t})} &\leq {C_{\bar{\mathcal{O}}_2}^{-1}}
            \big[\|\dot{\mathbf{V}}(0)\|_{H^N(\Sigma_{0,r})} + Ct\big] \cdot
            \big[\mbox{exp}\big(Ct\big)\big].
    \end{align}
    Letting $r \to \infty,$ taking the
    $\sup$ over $t \in [0,T_*],$ and using \eqref{E:IterateIntialValueEstimate}, we
    have that

    \begin{align}                                                                                               \label{E:IterateInequality}
        &\mid\mid\mid\dot{\mathbf{V}}\mid\mid\mid_{H^N,T_*} \leq
            C^{-1}_{\bar{\mathcal{O}}_2} \big[\|\dot{\mathbf{V}}(0)\|_{H^N} + CT_*\big] \cdot \mbox{exp}\big(CT_*\big) \notag \\
        &\leq \big[\Lambda/2+ C(N,\bar{\mathcal{O}}_2,\|{^{(0)}\mathring{\mathbf{V}}}\|_{H_{\Vb}^{N+1}},\Lambda,L)\cdot T_* \big]
            \cdot \mbox{exp}\big(C(N,\bar{\mathcal{O}}_2,\|{^{(0)}\mathring{\mathbf{V}}}\|_{H_{\Vb}^{N+1}},\Lambda,L) \cdot T_*\big).
    \end{align}

    To make a viable choice of $L,$
    \begin{align}
        \mbox{we first \emph{assume} that right-hand side of \eqref{E:IterateInequality} is} \ \leq\Lambda,    \label{E:Assumption}
    \end{align}
    which implies the inductive step \eqref{E:UniformTime1a} for ${^{(m+1)}\mathbf{V}}$.
    Using assumption \eqref{E:Assumption} as a hypothesis, Lemma \ref{L:L} implies that there exists
    an $L(N,\bar{\mathcal{O}}_2,\|{^{(0)}\mathring{\mathbf{V}}}\|_{H_{\Vb}^{N}},\Lambda) >0$ such that
    \begin{align}                                                                               \label{E:AprioriPartialtHNMinusOneBound}
            \mid\mid\mid \partial_t\big({^{(m+1)}\mathbf{V}} \big) \mid\mid\mid_{H^{N-1},T_*} \leq L.
        \end{align}
    For this fixed choice of $L,$ we can implicitly solve for a $T_* >
    0$ such that the right-hand side of
    inequality \eqref{E:IterateInequality} is in fact $\leq \Lambda,$ thus justifying the assumption \eqref{E:Assumption}
    and the conclusion \eqref{E:AprioriPartialtHNMinusOneBound},
    thereby closing the induction argument. This completes the proof of Proposition \ref{P:UniformTime}.
    \begin{flushright}
    $\Box$
    \end{flushright}

		\subsection{Proofs of Lemma \ref{L:EnergyCurrentDivergenceL1SobolevEstimate} and Lemma \ref{L:L}}
    We now state and prove the two technical lemmas quoted in the
    proof of Proposition \ref{P:UniformTime}.

    \begin{lemma}                                                               \label{L:EnergyCurrentDivergenceL1SobolevEstimate}
        Assume the hypotheses and notation of Proposition
        \ref{P:UniformTime}. In addition, assume that \\ $\|\partial_t \widetilde{\mathbf{V}}(t)\|_{H^{N-1}} \leq L.$ Then
        \begin{align} \label{E:EnergyCurrentDivergenceL1Estimate}
            \|\partial_\mu
            {\dot{J}_{\vec{\alpha}}}^\mu(t)\|_{L^1(\Sigma_{t,r-t})} \leq
            C(N,\bar{\mathcal{O}}_2,\|{^{(0)}\mathring{\mathbf{V}}}\|_{H_{\Vb}^{N+1}},\Lambda,L)
            \cdot \left(\|\dot{\mathbf{V}}(t)\|_{H^N} + \|\dot{\mathbf{V}}(t)\|_{H^N}^2\right).                     
        \end{align}
 		\end{lemma}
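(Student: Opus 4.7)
The plan is to start from the explicit divergence formula \eqref{E:EnergyCurrentDivergence} and its higher-order analogue given in Remark \ref{R:HigherOrderEnergyCurrents}, which expresses $\partial_\mu \dot{J}^\mu_{\vec{\alpha}}$ as a sum of two structural types of terms: (i) quadratic expressions in $\partial_{\vec{\alpha}}\dot{\mathbf{V}}$ whose coefficients are first-order spacetime derivatives of smooth composite functions of the bgs $\widetilde{\mathbf{V}}$ (namely $\widetilde{U}^\mu$, $\widetilde{Q}$, $\widetilde{R}+\widetilde{P}$, and $\widetilde{U}_k/\widetilde{U}^0$), and (ii) bilinear terms pairing a component of $\partial_{\vec{\alpha}}\dot{\mathbf{V}}$ against a component of the inhomogeneous data $(\mathbf{b}_{\vec{\alpha}}, \partial_{\vec{\alpha}}\mathbf{l})$. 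I would integrate over $\Sigma_{t,r-t}$ and bound term-by-term with Cauchy--Schwarz, reducing matters to products of $L^2$ norms to which the Sobolev--Moser calculus inequalities from the Appendix (together with Remark \ref{R:SobolevCalculusRemark}) apply.

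For the type (i) terms, the coefficients are $C^\infty$ functions of $\widetilde{\mathbf{V}}$ evaluated on the compact set $\bar{\mathcal{O}}_2$, so the composition Moser estimate yields an $H^{N-1}$ bound on each such coefficient provided $\|D\widetilde{\mathbf{V}}(t)\|_{H^{N-1}}$ is controlled; the spatial part of this bound follows from the inductive hypothesis \eqref{E:UniformTime1a}, while the time derivative is furnished precisely by the extra assumption $\|\partial_t \widetilde{\mathbf{V}}(t)\|_{H^{N-1}} \leq L$ of the lemma. Combined with the Sobolev embedding $H^{N-2} \subset L^\infty$ (valid since $N \geq 3$) applied to low-order derivatives of $\dot{\mathbf{V}}$, an application of Cauchy--Schwarz and the product Moser inequality delivers the quadratic contribution $C \|\dot{\mathbf{V}}(t)\|_{H^N}^2$.

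For the type (ii) terms, the $\partial_{\vec{\alpha}}\mathbf{l}$ portion is straightforward: the undifferentiated $\mathbf{l}$ in \eqref{E:IterateInhomogeneousl0}--\eqref{E:IterateInhomogeneousl4} is a smooth function of $\widetilde{\mathbf{V}}$ plus up to one spatial derivative of the fixed smoothed data ${^{(0)}\mathring{\mathbf{V}}}$, so the Moser composition estimate produces an $L^2$ bound involving $\|{^{(0)}\mathring{\mathbf{V}}}\|_{H_{\Vb}^{N+1}}$ after differentiating up to $N$ times, consistent with Remark \ref{R:ExtraDerivativeNeeded}. Similarly, the first summand $A^0 \partial_{\vec{\alpha}}((A^0)^{-1}\mathbf{b})$ of $\mathbf{b}_{\vec{\alpha}}$ in \eqref{E:balphaDef} is handled by expanding via Leibniz and applying the product Moser inequality factor-by-factor to the components of \eqref{E:IterateInhomogeneous1}--\eqref{E:IterateInhomogeneous3}, where the explicit $\partial_k[{^{(0)}\mathring{\mathbf{V}}}]$ terms again contribute the $\|{^{(0)}\mathring{\mathbf{V}}}\|_{H_{\Vb}^{N+1}}$ dependence, while the terms of the form (smooth function of $\widetilde{\mathbf{V}}$) produce factors controlled by the inductive $H^N$ bound on $\widetilde{\mathbf{V}}$.

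The main obstacle is the commutator term $\mathbf{k}_{\vec{\alpha}}$ from \eqref{E:kalphaDef}. A naive count of derivatives makes $\mathbf{k}_{\vec{\alpha}}$ appear to contain $|\vec{\alpha}|+1$ derivatives of $\dot{\mathbf{W}}$, which is one too many; however, the two bracketed expressions are designed so that the top-order derivative cancels exactly, leaving only terms in which at most $|\vec{\alpha}|$ spatial derivatives fall on $\dot{\mathbf{W}}$ while the remaining derivatives are distributed onto the composite coefficient $(A^0)^{-1}A^k$. To make this rigorous I would invoke the standard commutator Moser estimate (the Kato--Ponce/Moser version stated in the Appendix), which yields $\|\mathbf{k}_{\vec{\alpha}}\|_{L^2} \leq C \|\dot{\mathbf{V}}\|_{H^N}$ with constant depending on $\bar{\mathcal{O}}_2$, $\|\widetilde{\mathbf{V}}\|_{H^N}$, and $\|\partial_t \widetilde{\mathbf{V}}\|_{H^{N-1}}$. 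Pairing this against $\|\partial_{\vec{\alpha}}\dot{\mathbf{V}}\|_{L^2} \leq \|\dot{\mathbf{V}}\|_{H^N}$ via Cauchy--Schwarz gives the remaining linear contribution, and summing all contributions for $|\vec{\alpha}| \leq N$ yields \eqref{E:EnergyCurrentDivergenceL1Estimate}.
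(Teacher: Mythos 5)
Your proposal tracks the paper's proof closely and is essentially the same argument: decompose $\partial_\mu \dot{J}^\mu_{\vec{\alpha}}$ via \eqref{E:EnergyCurrentDivergence} and Remark \ref{R:HigherOrderEnergyCurrents} into purely quadratic terms in $\partial_{\vec{\alpha}}\dot{\mathbf{V}}$ and terms linear in $\partial_{\vec{\alpha}}\dot{\mathbf{V}}$ paired against $(\mathbf{b}_{\vec{\alpha}},\partial_{\vec{\alpha}}\mathbf{l})$; control the quadratic coefficients in $L^\infty$ using the inductive bound and the extra hypothesis $\|\partial_t \widetilde{\mathbf{V}}\|_{H^{N-1}} \leq L$; Cauchy--Schwarz the linear part; then establish the $L^2$ bound \eqref{E:ComponentsinL2} on the inhomogeneous data by Proposition \ref{P:CompositionProductSobolevMoser} for $A^0\partial_{\vec{\alpha}}((A^0)^{-1}\mathbf{b})$, the commutator estimate (Proposition \ref{P:SobolevMissingDerivativeProposition}) for $\mathbf{k}_{\vec{\alpha}}$, and Proposition \ref{P:SobolevTaylor} with \eqref{E:phiBar} for $l^{(0)}$.

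One small correction is worth flagging. You invoke ``the Sobolev embedding $H^{N-2} \subset L^\infty$ (valid since $N \geq 3$) applied to low-order derivatives of $\dot{\mathbf{V}}$.'' This is false at the boundary of the allowed range, since $H^1(\mathbb{R}^3) \not\subset L^\infty$; but it is also unnecessary. The divergence identity \eqref{E:EnergyCurrentDivergence} is \emph{exactly} quadratic in the single variation $\partial_{\vec{\alpha}}\dot{\mathbf{V}}$ --- no mixed products of $\partial_{\vec{\alpha}}\dot{\mathbf{V}}$ with lower-order derivatives of $\dot{\mathbf{V}}$ appear in the quadratic part --- so the $L^1$ norm of that part is controlled by $\|\text{coeff}\|_{L^\infty}\,\|\partial_{\vec{\alpha}}\dot{\mathbf{V}}\|_{L^2}^2$ without any embedding applied to the variation itself. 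Sobolev embedding is needed only in the form $H^{N-1}(\mathbb{R}^3)\subset L^\infty$ applied to $D\widetilde{\mathbf{V}}$ (to bound the coefficients), which holds for $N\geq 3$ and is precisely why the hypothesis on $\partial_t\widetilde{\mathbf{V}}$ is assumed.
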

    \begin{proof}
        We use here the definitions \eqref{E:DotVmUniformTime} and \eqref{E:WidetildeVUniformTime}
        from Proposition \ref{P:UniformTime}. Recall that $\partial_{\vec{\alpha}}
        \dot{\mathbf{V}}$ is a solution to the EOV defined by the
        bgs $\widetilde{\mathbf{V}}$ with inhomogeneous terms $(\mathbf{b}_{\vec{\alpha}}, \partial_{\vec{\alpha}}\mathbf{l}),$
        and that $\dot{J}_{\vec{\alpha}}$ is the energy current
        for $\partial_{\vec{\alpha}} \dot{\mathbf{V}}$ defined by the
        bgs $\widetilde{\mathbf{V}}.$ Furthermore,
        \begin{align}
            \|\widetilde{\mathbf{V}} - {^{(0)}\mathring{\mathbf{V}}}\|_{H^N} \leq \Lambda                     \label{E:InductionAssumption}
        \end{align}
        holds by the induction assumption from the proposition.

        By \eqref{E:EnergyCurrentDivergence} and Remark \ref{R:HigherOrderEnergyCurrents},
        the expression for $\partial_\mu {\dot{J}_{\vec{\alpha}}}^\mu$ consists of terms that
        are either precisely linear or precisely quadratic in
        the components of the variation $\partial_{\vec{\alpha}}
        \dot{\mathbf{V}}.$ The coefficients of the quadratic variation
        terms are smooth functions with arguments $\widetilde{\mathbf{V}}$ and
        $D\widetilde{\mathbf{V}}.$ Examining the particular form of
        these coefficients and using the fact that $\widetilde{\mathbf{V}}([0,T]\times\mathbb{R}^3) \subset \bar{\mathcal{O}}_2,$
        we see that their $L^{\infty}$ norm is bounded by $C(\bar{\mathcal{O}}_2)
        \|D\widetilde{\mathbf{V}}\|_{L^{\infty}}.$ By
        assumption, $\|\widetilde{\mathbf{V}} - {^{(0)}\mathring{\mathbf{V}}}\|_{H^N} \leq
        \Lambda$ and $\|\partial_t \widetilde{\mathbf{V}}(t)\|_{H^{N-1}}
        \leq L.$ Therefore, by Sobolev embedding, $\|D\widetilde{\mathbf{V}}\|_{L^{\infty}} \leq
        C(N,\|{^{(0)}\mathring{\mathbf{V}}}\|_{H_{\Vb}^{N}},\Lambda,L).$
        These facts imply that the $L^1(\Sigma_{t,r-t})$ norm of the terms involving the quadratic
        variations is bounded from above by $C(N,\bar{\mathcal{O}}_2,\|{^{(0)}\mathring{\mathbf{V}}}\|_{H_{\Vb}^{N}},
        \Lambda,L)\|\partial_{\vec{\alpha}} \dot{\mathbf{V}}\|_{L^2(\Sigma_{t,r-t})}^2.$

        The coefficients of the linear variation terms are
        linear combinations of products of the components of
        $(\mathbf{b}_{\vec{\alpha}}, \partial_{\vec{\alpha}}\mathbf{l}),$
        where $\mathbf{b}_{\vec{\alpha}}$ is defined in \eqref{E:balphaDef},
        with smooth functions, the arguments of which are the
        components of $\widetilde{\mathbf{V}}.$ Since \\ $\widetilde{\mathbf{V}}([0,T]\times\mathbb{R}^3) \subset \bar{\mathcal{O}}_2,$
        the smooth functions of $\widetilde{\mathbf{V}}$ are bounded in $L^{\infty}$ by
        $C(\bar{\mathcal{O}}_2).$ Therefore, by the
        Cauchy-Schwarz integral inequality for $L^2,$ the
        $L^1(\Sigma_{t,r-t})$ norm of the terms depending
        linearly on the variations is bounded from above by
        $C(\bar{\mathcal{O}}_2)\|(\mathbf{b}_{\vec{\alpha}}, \partial_{\vec{\alpha}}\mathbf{l})\|_{L^2}
        \|\partial_{\vec{\alpha}}
        \dot{\mathbf{V}}\|_{L^2(\Sigma_{t,r-t})}.$ To complete the proof of
        \eqref{E:EnergyCurrentDivergenceL1Estimate}, it remains to show
        that for $0 \leq |{\vec{\alpha}}| \leq N,$ we have that
        \begin{align}       \label{E:ComponentsinL2}
            \|(\mathbf{b}_{\vec{\alpha}}, \partial_{\vec{\alpha}}\mathbf{l})\|_{L^2} \leq
            C(N,\bar{\mathcal{O}}_2,\|{^{(0)}\mathring{\mathbf{V}}}\|_{H_{\Vb}^{N+1}},\Lambda) \cdot \big(1 + \|\dot{\mathbf{W}}\|_{H^N}\big).
        \end{align}
        The proof of \eqref{E:ComponentsinL2} will follow easily
        from the propositions given in the Appendix.

        Concerning ourselves with the $\|\mathbf{b}_{\vec{\alpha}}\|_{L^2}$ estimate first, we claim that the term $A^0 \partial_{\vec{\alpha}}
        \left((A^0)^{-1}\mathbf{b} \right)$ from
        \eqref{E:balphaDef} satisfies
        \begin{align}                                                                           \label{E:SobolevClaim}
            \| A^0 \partial_{\vec{\alpha}} \left((A^0)^{-1}\mathbf{b} \right) \|_{L^2} \leq
            C(N,\bar{\mathcal{O}}_2,\|{^{(0)}\mathring{\mathbf{V}}}\|_{H_{\Vb}^{N+1}},\Lambda).
        \end{align}
        We repeat for clarity that $\mathbf{b}=(f,g,h^1,h^2,h^3),$
        where the scalar-valued quantities $f,g,h^1,h^2,h^3$ are
        defined in \eqref{E:IterateInhomogeneous1} - \eqref{E:IterateInhomogeneous3}.
        Since $\|A^0(\widetilde{\mathbf{V}})\|_{L^{\infty}} \leq
        C(\bar{\mathcal{O}}_2),$ to prove \eqref{E:SobolevClaim}, it suffices to control the
        $L^2$ norm of $\partial_{\vec{\alpha}} \big((A^0)^{-1}\mathbf{b} \big).$
        Using Proposition \ref{P:CompositionProductSobolevMoser} and Remark \ref{R:SobolevCalculusRemark},
        with $(A^0)^{-1}$ playing the role of $F$ in the proposition and $\mathbf{b}$
        playing the role of $G,$ we have that
        \begin{align} \label{E:A0bHNEstimate}
            \|\partial_{\vec{\alpha}} \big((A^0)^{-1}\mathbf{b} \big)\|_{L^2} \leq \|(A^0)^{-1}\mathbf{b}\|_{H^N} \leq
            C(N,\bar{\mathcal{O}}_2,\|{^{(0)}\mathring{\mathbf{V}}}\|_{H_{\Vb}^{N}},\Lambda)\|\mathbf{b}\|_{H^N}.
        \end{align}
        Furthermore, Proposition \ref{P:CompositionProductSobolevMoser} and Remark \ref{R:SobolevCalculusRemark} imply that
        \begin{align}                                           \label{E:bHNEstimate}
            \|\mathbf{b}\|_{H^N} \leq
            C(N,\bar{\mathcal{O}}_2,\|{^{(0)}\mathring{\mathbf{V}}}\|_{H_{\Vb}^{N+1}},\Lambda).
        \end{align}
        Combining \eqref{E:A0bHNEstimate} with \eqref{E:bHNEstimate} proves
        \eqref{E:SobolevClaim}.

        We next claim that the $\mathbf{k}_{\vec{\alpha}}$ from
        \eqref{E:kalphaDef} satisfy
        \begin{align}                                                               \label{E:SobolevClaim2}
            \|\mathbf{k}_{\vec{\alpha}}\|_{L^2} \leq
            C(N,\bar{\mathcal{O}}_2,\|{^{(0)}\mathring{\mathbf{V}}}\|_{H_{\Vb}^{N}},\Lambda)\|\dot{\mathbf{W}}\|_{H^N}.
        \end{align}
        Again, since $\|A^0(\widetilde{\mathbf{V}})\|_{L^{\infty}} \leq
        C(\bar{\mathcal{O}}_2),$ to prove \eqref{E:SobolevClaim2}, it suffices to control the $L^2$ norm of
        $(A^0)^{-1}A^k \partial_k (\partial_{\vec{\alpha}}\dot{\mathbf{W}}) -\partial_{\vec{\alpha}}
        \left((A^0)^{-1} A^k \partial_k \dot{\mathbf{W}} \right).$ By
        Proposition \ref{P:SobolevMissingDerivativeProposition} and Remark \ref{R:SobolevMissingDerivativeRemark},
        with $(A^0)^{-1}A^k$ playing the role of $F$
        in the proposition, and $\partial_k \dot{\mathbf{W}}$ playing the
        role of $G,$ we have that
        \begin{align}
            &\|(A^0)^{-1}A^k \partial_k (\partial_{\vec{\alpha}}\dot{\mathbf{W}}) -\partial_{\vec{\alpha}}
                \big((A^0)^{-1} A^k \partial_k \dot{\mathbf{W}} \big)\|_{L^2} \\
            &\leq C(N,\bar{\mathcal{O}}_2,\|{^{(0)}\mathring{\mathbf{V}}}\|_{H_{\Vb}^{N}},\Lambda) \|\nabla^{(1)} \dot{\mathbf{W}}\|_{H^{N-1}}, \notag
        \end{align}
        from which \eqref{E:SobolevClaim2} immediately follows.

        To finish the proof of \eqref{E:ComponentsinL2}, we will show that
        \begin{align}         \label{E:SobolevClaim3}
            \|l^{(z)}\|_{H^N} \leq C(N,\bar{\mathcal{O}}_2,\|{^{(0)}\mathring{\mathbf{V}}}\|_{H_{\Vb}^{N+1}},\Lambda)
            \qquad (z=0,1,2,3,4).
        \end{align}

        For $l^{(1)},l^{(2)},l^{(3)},l^{(4)}$ defined in
        \eqref{E:IterateInhomogeneouslj} and \eqref{E:IterateInhomogeneousl4}, the claim is trivial.
        To estimate the component $l^{(0)},$ defined in
        \eqref{E:IterateInhomogeneousl0},
        we first rewrite
        \begin{align} \label{E:Rewritel0}
            l^{(0)} = \kappa^2 (\widetilde{\phi} - \bar{\phi}) + (\widetilde{R} -
            \bar{R}) - 3(\widetilde{P} - \bar{P}) - \partial^k[{^{(0)}\mathring{\psi}_k}],
        \end{align}
        where $\bar{P} \overset{\mbox{\tiny{def}}}{=} e^{4\bar{\phi}}
        \bar{p}$ and $\bar{R} \overset{\mbox{\tiny{def}}}{=} e^{4\bar{\phi}}
        \mathscr{R}(e^{-4\bar{\phi}}\bar{P},\bar{{\Ent}}),$
        the function $\mathscr{R}$ is defined in \eqref{E:R}, and $\bar{p}$
        and $\bar{{\Ent}}$ are constants defined in Section \ref{S:IVP}. In equation \eqref{E:Rewritel0}, we
        have made use of \eqref{E:phiBar}, which is the assumption that $\kappa^2 \bar{\phi} + \bar{R} - 3\bar{P}=0.$ Since
        \begin{align}
            &\|\kappa^2 (\widetilde{\phi} - \bar{\phi})\|_{H^N} + 3\|(\widetilde{P} - \bar{P})\|_{H^N}
            + \|\partial^k[{^{(0)}\mathring{\psi}_k}]\|_{H^N} \leq
            C(\|{^{(0)}\mathring{\mathbf{V}}}\|_{H_{\Vb}^{N+1}},\Lambda),
        \end{align}
        we only need to show that
        \begin{align}
            \|\widetilde{R} - \bar{R}\|_{H^N} \leq
            C(N,\bar{\mathcal{O}}_2,\|{^{(0)}\mathring{\mathbf{V}}}\|_{H_{\Vb}^{N}},\Lambda).                                                \label{E:SobolevforR}
        \end{align}
        This follows immediately from definition \eqref{E:WidetildeR}, Proposition \ref{P:SobolevTaylor}, and Remark 
        \ref{R:SobolevTaylorCalculusRemark}.
        
   			Inequality \eqref{E:ComponentsinL2} now follows from combining \eqref{E:balphaDef}, \eqref{E:SobolevClaim}, 
   			\eqref{E:SobolevClaim2}, and \eqref{E:SobolevClaim3}; this completes the proof of
        \eqref{E:EnergyCurrentDivergenceL1Estimate}.
    \end{proof}

    \begin{lemma}                                                                                           \label{L:L}
         Assume the hypotheses and notation of Proposition
        \ref{P:UniformTime}. Also assume the induction hypothesis $\mid\mid\mid {^{(m)}\mathbf{V}} -
        {^{(0)}\mathring{\mathbf{V}}}\mid\mid\mid_{H^N,T_*} \leq \Lambda$ from Proposition
        \ref{P:UniformTime}. Assume further that $\mid\mid\mid{^{(m+1)}\mathbf{V}} -
        {^{(0)}\mathring{\mathbf{V}}}\mid\mid\mid_{H^N,T_*} \leq
        \Lambda.$ Then
        \begin{align}
            \mid\mid\mid \partial_t\big({^{(m+1)}\mathbf{V}} \big) \mid\mid\mid_{H^{N-1},T_*} \leq
            L(N,\bar{\mathcal{O}}_2,\|{^{(0)}\mathring{\mathbf{V}}}\|_{H_{\Vb}^{N}},\Lambda).                              \label{E:LCombinedEsitmate}
        \end{align}
        \end{lemma}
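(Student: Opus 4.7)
The plan is to use the equations of variation themselves to solve algebraically for $\partial_t {^{(m+1)}\mathbf{V}}$ in terms of spatial derivatives of $^{(m+1)}\mathbf{V}$ and the inhomogeneous terms, and then to invoke the Sobolev-Moser calculus to pass to $H^{N-1}$ norms. Since $^{(0)}\mathring{\mathbf{V}}(\snew)$ is independent of $t$, we have $\partial_t {^{(m+1)}\mathbf{V}} = \partial_t \dot{\mathbf{V}}$, where $\dot{\mathbf{V}} \overset{\mbox{\tiny{def}}}{=} {^{(m+1)}\mathbf{V}} - {^{(0)}\mathring{\mathbf{V}}}$ satisfies the EOV defined by the bgs $\widetilde{\mathbf{V}} = {^{(m)}\mathbf{V}}$ with inhomogeneous terms $(\mathbf{b},\mathbf{l})$ as in \eqref{E:IterateInhomogeneous1} - \eqref{E:IterateInhomogeneousl4}. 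The two induction hypotheses give in particular that $\widetilde{\mathbf{V}}(t,\snew) \in \bar{\mathcal{O}}_2$ and $\mid\mid\mid \dot{\mathbf{V}} \mid\mid\mid_{H^N,T_*} \leq \Lambda$, together with the analogous bound for the previous iterate.

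For the upper half of the system, the matrix formulation \eqref{E:MatrixDef} together with Remark \ref{R:Invertible} (and the fact that $\widetilde{\mathbf{V}} \in \bar{\mathcal{O}}_2$ yields a uniform lower bound on $|\det A^0|$) allows me to solve
\begin{equation*}
    \partial_t \dot{\mathbf{W}} = -\bigl(A^0(\widetilde{\mathbf{V}})\bigr)^{-1} A^k(\widetilde{\mathbf{V}})\, \partial_k \dot{\mathbf{W}} \;+\; \bigl(A^0(\widetilde{\mathbf{V}})\bigr)^{-1} \mathbf{b}.
\end{equation*}
Proposition \ref{P:CompositionProductSobolevMoser} and Remark \ref{R:SobolevCalculusRemark}, applied with $(A^0)^{-1}A^k$ viewed as a smooth function of $\widetilde{\mathbf{V}}$, bound the first term in $H^{N-1}$ by $C(N,\bar{\mathcal{O}}_2,\|{^{(0)}\mathring{\mathbf{V}}}\|_{H_{\Vb}^{N}},\Lambda)\cdot\|\dot{\mathbf{W}}\|_{H^N}$, which is in turn $\leq C \cdot (\|{^{(0)}\mathring{\mathbf{W}}}\|_{H^N} + \Lambda)$. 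For the second term, essentially the argument already carried out for \eqref{E:bHNEstimate} in the previous lemma, applied at one less level of regularity, gives $\|\mathbf{b}\|_{H^{N-1}} \leq C(N,\bar{\mathcal{O}}_2,\|{^{(0)}\mathring{\mathbf{V}}}\|_{H_{\Vb}^{N}},\Lambda)$; crucially, only one spatial derivative of $^{(0)}\mathring{\mathbf{V}}$ enters $\mathbf{b}$, so the $H^{N-1}$ bound uses $\|{^{(0)}\mathring{\mathbf{V}}}\|_{H_{\Vb}^{N}}$ and not $\|{^{(0)}\mathring{\mathbf{V}}}\|_{H_{\Vb}^{N+1}}$, which is exactly what the conclusion \eqref{E:LCombinedEsitmate} requires.

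For the lower half \eqref{E:EOV4} - \eqref{E:EOV6}, I read off directly
\begin{equation*}
    \partial_t \dot{\psi}_0 = \partial^j \dot{\psi}_j - l^{(0)}, \qquad \partial_t \dot{\psi}_j = \partial_j \dot{\psi}_0 + l^{(j)}, \qquad \partial_t \dot{\phi} = l^{(4)}.
\end{equation*}
The terms involving $\partial^j \dot{\psi}_j$ and $\partial_j \dot{\psi}_0$ are controlled in $H^{N-1}$ by $\|\dot{\mathbf{V}}\|_{H^N} \leq \Lambda + \|{^{(0)}\mathring{\mathbf{V}}}\|_{H_{\Vb}^{N}}$, while the $l^{(z)}$ are controlled in $H^{N-1}$ (in fact in $H^N$) by the same reasoning as in \eqref{E:SobolevClaim3}, rewriting $l^{(0)}$ using \eqref{E:phiBar} and appealing to Proposition \ref{P:SobolevTaylor} and Remark \ref{R:SobolevTaylorCalculusRemark} for the composition $\widetilde{R} - \bar{R}$; again only $\|{^{(0)}\mathring{\mathbf{V}}}\|_{H_{\Vb}^{N}}$ is needed since $l^{(0)}$ involves at most one derivative of $^{(0)}\mathring{\psi}_k$.

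Combining the upper- and lower-half estimates yields \eqref{E:LCombinedEsitmate} with the stated dependence of $L$. I do not expect any genuine obstacle here: the argument is purely algebraic manipulation of the EOV followed by Sobolev-Moser bookkeeping very similar to that of Lemma \ref{L:EnergyCurrentDivergenceL1SobolevEstimate}. The only subtle point is ensuring that no term forces the appearance of $\|{^{(0)}\mathring{\mathbf{V}}}\|_{H_{\Vb}^{N+1}}$, which is achieved because one time derivative is traded for one spatial derivative of the equations, reducing the required regularity by exactly one.
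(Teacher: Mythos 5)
Your proof is correct and takes essentially the same approach as the paper's: invert $A^0$ (Remark \ref{R:Invertible}) to solve algebraically for $\partial_t$ of the upper half, apply Propositions \ref{P:CompositionProductSobolevMoser} and \ref{P:SobolevTaylor} at regularity level $N-1$, and treat the lower half as in \eqref{E:SobolevClaim3}. The only cosmetic difference is that you organize the identity around $\dot{\mathbf{W}} = {^{(m+1)}\mathbf{W}} - {^{(0)}\mathring{\mathbf{W}}}$ with the $A^k\partial_k {^{(0)}\mathring{\mathbf{W}}}$ contribution placed in $\mathbf{b}$, whereas the paper works directly with ${^{(m+1)}\mathbf{W}}$ and $\mathbf{b}=\boldsymbol{\mathfrak{B}}({^{(m)}\mathbf{V}})$; the two bookkeepings are algebraically identical and give the same dependence on $\|{^{(0)}\mathring{\mathbf{V}}}\|_{H_{\Vb}^{N}}$.
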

        
        \begin{proof}
            By Remark \ref{R:Invertible}, we may solve for
            $\partial_t({^{(m+1)}\mathbf{W}}):$
            \begin{align}
                \partial_t\big({^{(m+1)}\mathbf{W}}\big)=
                \big(A^0({^{(m)} \mathbf{V}})\big)^{-1} \big(\mathbf{b}
                -A^k({^{(m)} \mathbf{V}})\partial_k({^{(m+1)}\mathbf{W}})\big),                              \label{E:IterateTimeDerivative}
            \end{align}
            where the function $\mathbf{b}$ denotes the
            inhomogeneous terms from the linearized EN$_{\kappa}$ equations
            satisfied by ${^{(m+1)}\mathbf{W}};$ i.e.,
            $\mathbf{b}{=}\boldsymbol{\mathfrak{B}}({^{(m)}\mathbf{V}}),$ where
            \begin{align}                                                                                                                   \label{E:BFunctiondef}
                \boldsymbol{\mathfrak{B}}(\cdot) \overset{\mbox{\tiny{def}}}{=}
                \left(\mathfrak{F}(\cdot),\mathfrak{G}(\cdot),
                \mathfrak{H}^{(1)}(\cdot),\mathfrak{H}^{(2)}(\cdot),\mathfrak{H}^{(3)}(\cdot)\right)
            \end{align}
            is an array-valued function, the scalar-valued functions $\mathfrak{F},\mathfrak{G},\cdots,\mathfrak{H}^{(3)}$ are defined
            in \eqref{E:Inhomogeneousf} - \eqref{E:Inhomogeneoushj}, and the $A^{\mu}(\cdot)$ are defined
            in \eqref{E:MatrixDef}.

            Using the hypotheses of the lemma, we apply Proposition
            \ref{P:CompositionProductSobolevMoser} and Remark \ref{R:SobolevCalculusRemark}
            to the right-hand side of \eqref{E:IterateTimeDerivative}, concluding that
            \begin{align}                                                           \label{E:LEstimatePart1}
                \mid\mid\mid \partial_t\big({^{(m+1)}\mathbf{W}} \big) \mid\mid\mid_{H^{N-1},T_*}
                \leq L(N,\bar{\mathcal{O}}_2,\|{^{(0)}\mathring{\mathbf{V}}}\|_{H_{\Vb}^{N}},\Lambda).
            \end{align}
            Likewise, an argument similar to the one used to prove \eqref{E:SobolevClaim3} gives that
            \begin{align}   \label{E:LEstimatePart2}
                \mid\mid\mid \partial_t\big({^{(m+1)}\phi},{^{(m+1)}\psi_0},\cdots,{^{(m+1)}\psi_3} \big) \mid\mid\mid_{H^{N-1},T_*}
                \leq L(N,\bar{\mathcal{O}}_2,\|{^{(0)}\mathring{\mathbf{V}}}\|_{H_{\Vb}^{N}},\Lambda).
            \end{align}
            Combining \eqref{E:LEstimatePart1} and \eqref{E:LEstimatePart2} proves \eqref{E:LCombinedEsitmate}.
        \end{proof}

        \subsubsection{Uniqueness and $H^{N-1}-$Lipschitz-continuous dependence on initial data.} \label{SS:UniqunessandHNMinusOneDependence}
        We now prove the prove the uniqueness of $H_{\bar{\mathbf{V}}}^{N}$ solutions to
        the EN$_{\kappa}$ system and show that the solution is an $H^{N-1}-$Lipschitz-continuous function 
        of the initial data. Let $\mathring{\mathbf{V}}$ denote initial data that launch a solution $\mathbf{V}$ of the EN$_{\kappa}$
        system as furnished by the existence aspect of Theorem \ref{T:LocalExistence}.
        Let $\delta,B_{\delta}(\ringV),T',$ and $K(N,\bar{\mathcal{O}}_2,\|{^{(0)}\mathring{\mathbf{V}}}\|_{H_{\Vb}^{N+1}}, \Lambda, 	
        \delta)$ be as in corollaries
        \ref{C:ExistenceInterval} and \ref{C:UniformNormBound}. Assume that the initial data $\mathring{\widetilde{\mathbf{V}}}$ belong to $B_{\delta},$
        and let $\widetilde{\mathbf{V}}$ be a solution of the
        EN$_{\kappa}$ system with initial data $\mathring{\widetilde{\mathbf{V}}}$
        existing on the interval $[0,T']$ as furnished by Corollary \ref{C:ExistenceInterval}. We now define
        \begin{align}                                                               \label{E:dotVUnique}
            \dot{\mathbf{V}}\overset{\mbox{\tiny{def}}}{=}\widetilde{\mathbf{V}} - \mathbf{V}.
        \end{align}
        It follows from definition \eqref{E:dotVUnique} that $\dot{\mathbf{V}}$ is a solution to the EOV
        \eqref{E:EOV1} - \eqref{E:WidetildeR} defined by the bgs $\widetilde{\mathbf{V}}$
        with inhomogeneous terms given by (for $j=1,2,3$)

        \begin{align}
            f &= (U^{\mu} - \widetilde{U}^{\mu}) \partial_{\mu} {\Ent}          \label{E:UniquenessInhomogeneousf}\\
            g &= (U^{\mu} - \widetilde{U}^{\mu}) \partial_{\mu} P
                + \big[Q U_k/U^0 - \widetilde{Q}
                \widetilde{U}_k/\widetilde{U}^0\big] \partial_0 U^k                  \label{E:UniquenessInhomogeneousg}\\
             & \ \ + (Q - \widetilde{Q}) \partial_k U^k +
             ( 4\widetilde{P} - 3\widetilde{Q} ) \widetilde{U}^{\mu} \widetilde{\psi}_{\mu}
             - (4P - 3Q) U^{\mu} \psi_{\mu}                   \notag \\
            h^{(j)} &= \big[(R+P)U^{\mu} - (\widetilde{R} + \widetilde{P})\widetilde{U}^{\mu}\big]\partial_{\mu}U^j
                + (\Pi^{\mu j} - \widetilde{\Pi}^{\mu j}) \partial_{\mu} P \\
                & \ \ + (3\widetilde{P}-\widetilde{R})
                \widetilde{\Pi}^{\mu j} \widetilde{\psi}_{\mu}
                - (3P - R) \Pi^{\mu j} \psi_{\mu}                                     \label{E:UniquenessInhomogeneoush}\\
            l^{(0)} & = {\kappa}^2(\widetilde{\phi} - \phi) + (\widetilde{R} - 3 \widetilde{P}) - (R - 3P) \label{E:UniquenessInhomogeneousl0}\\
            l^{(j)} & = 0                                      \label{E:UniquenessInhomogeneouslj}\\
            l^{(4)} & = \widetilde{\psi}_0 - \psi_0,                                           \label{E:UniquenessInhomogeneousl4}
    \end{align}
    and we denote them using the abbreviated notation $\mathbf{b}$ and $\mathbf{l}$ defined in \eqref{E:bdef} and \eqref{E:ldef}.

            By combining Proposition \ref{P:CompositionProductSobolevMoser}, Remark \ref{R:SobolevCalculusRemark},
            Proposition \ref{P:SobolevTaylor}, and Remark \ref{R:SobolevTaylorCalculusRemark}
            (noting the particular manner in which the inhomogeneous terms depend
            on the difference of functions of $\mathbf{V}$ and $\widetilde{\mathbf{V}}$), we have that
            \begin{align} \label{E:InhomogeneousContinuousDependenceL2Estimate}
                \mid\mid\mid(\mathbf{b},\mathbf{l})\mid\mid\mid_{H^{N-1},T'} \leq C(N,\bar{\mathcal{O}}_2,K)
                \mid\mid\mid\dot{\mathbf{V}}\mid\mid\mid_{H_{\Vb}^{N-1},T'}.
            \end{align}
            Without providing details, we reason as in our proof of Proposition \ref{P:UniformTime}, using
            \eqref{E:InhomogeneousContinuousDependenceL2Estimate} in place of
            \eqref{E:bHNEstimate} and \eqref{E:SobolevClaim3} to arrive at the following bound:

            \begin{align} \label{E:ContinuousDependenceInequality1}
                \mid\mid\mid\dot{\mathbf{V}}\mid\mid\mid_{H_{\Vb}^{N-1},T'} &\leq
        C^{-1}_{\bar{\mathcal{O}}_2}\|\dot{\mathbf{V}}(0)\|_{H^{N-1}} \cdot
        \mbox{exp}\big(C(N,\bar{\mathcal{O}}_2,K) \cdot T'\big),
       \end{align}
        where $\dot{\mathbf{V}}(0) \overset{\mbox{\tiny{def}}}{=} \mathring{\widetilde{\mathbf{V}}} - \ringV.$

      We now observe that \eqref{E:ContinuousDependenceInequality1}
      implies both the uniqueness statement in Theorem \ref{T:LocalExistence} and the $H^{N-1}-$Lipschitz-continuous dependence on 
      the initial data mentioned in Remark \ref{R:ContinuousDependenceHolder}.

      \begin{remark}
        We cannot obtain an estimate analogous to \eqref{E:ContinuousDependenceInequality1}
        by using the $H_{\Vb}^N$ norm in place of the $H_{\Vb}^{N-1}$ norm; the
        inhomogeneous terms \eqref{E:UniquenessInhomogeneousf} -
        \eqref{E:UniquenessInhomogeneousl4} already contain one
        derivative of $\mathbf{V},$ and therefore cannot be bounded in the $H_{\Vb}^N$
        norm. However, for $N'<N,$ we can obtain an estimate for the $H_{\Vb}^{N'}$ norm
        by combining Proposition \ref{P:SobIterpolation}, \eqref{E:ContinuousDependenceInequality1} and the uniform bound
        provide by the constant $K.$
        The inequality we obtain is
        \begin{align}                   \label{E:ContinuousDependenceInequality4}
        \mid\mid\mid\dot{\mathbf{V}}\mid\mid\mid_{H_{\Vb}^{N'},T'} &\leq
        C \|\dot{\mathbf{V}}(0)\|^{1 - N'/N}_{H^{N'}} \cdot
        \mbox{exp}\left(CT'\right),
        \end{align}
        where in \eqref{E:ContinuousDependenceInequality4}, $C=C(N',N,\bar{\mathcal{O}}_2,K).$
        \end{remark}

      \begin{remark}                                                                          \label{R:DOI}
        The estimate \eqref{E:ContinuousDependenceInequality1} is a limiting version of the ``conical" estimate
        \begin{align} \label{E:ContinuousDependenceInequality3}
            \|\dot{\mathbf{V}}(t)\|_{H^{N-1}(\Sigma_{t,r-t})} &\leq
            C^{-1}_{\bar{\mathcal{O}}_2}\|\dot{\mathbf{V}}(0)\|_{H^{N-1}(\Sigma_{0,r})} \cdot \mbox{exp}\left(Ct\right),
        \end{align}
        where we are using notation defined in Lemma \ref{L:DifferentialInequality}.
        A proof of \eqref{E:ContinuousDependenceInequality3} can be
        constructed using arguments similar to the ones used in our proof of \eqref{E:ConicalDotVEstimate}.
        Inequality \eqref{E:ContinuousDependenceInequality3} shows that two solutions that agree on $\Sigma_{0,r}$ also
        agree on $\Sigma_{t,r-t}.$ By translating the cone from Lemma
        \ref{L:DifferentialInequality} so that its
        lower base is centered at the spacetime point $x,$ we may produce a translated version of the inequality.
        Thus, we observe that a \emph{domain of dependence} for $x \in \mathcal{M}$ is
        given by the solid backward light cone in $\mathcal{M}$ with vertex at $x;$ i.e., the past
        (relative to $x$) behavior of a solution to the EOV outside
        of this cone does not influence behavior of the solution at $x.$
        Similarly, a \emph{domain of influence} of $x$ is the solid
        forward light cone with vertex at $x;$ the behavior of a
        solution at $x$ does not influence the future (relative to $x$)
        behavior of the solution outside of this cone, a fact which justifies our claim made in Section \ref{SS:Speeds of Propagation}
        that the fastest speed of propagation in the EN$_{\kappa}$ system is the speed of light. In \cite{dC2000}, Christodoulou gives an 
        advanced discussion of these and related topics for hyperbolic PDEs derivable from a Lagrangian.
    \end{remark}

    This completes our abbreviated proof of Theorem $\ref{T:LocalExistence}.$ 
    \begin{flushright}
    $\Box$
    \end{flushright}

    \subsection{Proof of Theorem \ref{T:ContinuousDependence}.} \label{SS:ContinuousDependence}
    We now provide a detailed proof of Theorem \ref{T:ContinuousDependence}.
    
    \subsubsection{The setup}
    Let $\lbrace\ringVm\rbrace$ be the sequence of initial data from the hypotheses of Theorem \ref{T:ContinuousDependence} converging
    in $H_{\Vb}^N$ to $\ringV$ . By corollaries \ref{C:ExistenceInterval} and \ref{C:UniformNormBound}, for all large $m,$
    the initial data $\ringVm$ and $\mathring{\mathbf{V}}$
    launch unique solutions $\Vm$ and $\mathbf{V}$ respectively to \eqref{E:ENkappa1} -
    \eqref{E:ENkappa9} that exist on a common interval $[0,T']$ and that have the property \\ $\mathbf{V}([0,T']\times\mathbb{R}^3)\subset
    \bar{\mathcal{O}}_2, \Vm([0,T']\times\mathbb{R}^3) \subset
    \bar{\mathcal{O}}_2.$  Furthermore, for all large $m,$ with
    \\ $K=K(N,\bar{\mathcal{O}}_2,\|{^{(0)}\mathring{\mathbf{V}}}\|_{H_{\Vb}^{N+1}},\Lambda,\delta),$
    we have the uniform (in $m$) bounds
    \begin{align}                                                           \label{E:UniformSobolev}
            \mid\mid\mid \mathbf{V} \mid\mid\mid_{H_{\Vb}^N,T},\mid\mid\mid \partial_t
            \mathbf{V} \mid\mid\mid_{H^{N-1},T},\mid\mid\mid \Vm \mid\mid\mid_{H_{\Vb}^N,T'},\mid\mid\mid \partial_t
            \Vm \mid\mid\mid_{H^{N-1},T'} \ < \ K,
    \end{align}
    where $[0,T]$ is the interval of existence for $\mathbf{V}$ furnished by Theorem \ref{T:LocalExistence}. In this section, we will show 
    that for all large $m, \Vm$ exists on $[0,T]$ and that
    \begin{align} \label{E:HNContinuousDependence}
        \lim_{m \to \infty} \mid\mid\mid \Vm - \mathbf{V}
        \mid\mid\mid_{H^N,T} = 0.
    \end{align}

    The proof we give here is inspired by a similar proof given by Kato in \cite{tK1975}.
    We use results and terminology from the theory of abstract evolution equations in Banach
    spaces, an approach that streamlines the argument. We also freely use results from the
    theory of integration in Banach spaces; a detailed discussion of this theory may be found in
    \cite{kY1980}. We begin by rewriting the linearization of the EN$_{\kappa}$
    system around $\mathbf{V}$ and $\Vm$ as abstract evolution equations in the affine Banach space
    $H_{\bar{\mathbf{V}}}^N(\mathbb{R}^3).$ In this form, the linearized systems
    are respectively written as
    \begin{align}
        \partial_t \mathbf{Z} + \mathcal{A}(\mathbf{V}) \mathbf{Z} =
        \mathbf{f}(\mathbf{V}) \label{E:AbstractEvolutionV} \\
        \partial_t \mathbf{Z} + \mathcal{A}(\Vm) \mathbf{Z} = \mathbf{f}(\Vm),
        \label{E:AbstractEvolutionVm}
    \end{align}
    where $\mathbf{f}(\cdot)$ is a smooth function on $\mathcal{O},$
    and $\mathbf{f}(\bar{\mathbf{V}})=\mathbf{0}.$
    Here, the symbol $\mathbf{Z}$ stands for all 10 components of
    a solution to a linearized system, and the operator
    $\mathcal{A}(\cdot)$ is a first order spatial differential operator with
    coefficients that depend smoothly on its arguments. We state
    for clarity that the first 5 components of the inhomogeneous terms $\mathbf{f}(\Vm)$ are
    given by $\left(A^0(\Vm)\right)^{-1}\cdot(\mathfrak{F}(\Vm),\mathfrak{G}(\Vm),
    \cdots,\mathfrak{H}^{(3)}(\Vm))^{Transpose},$ where the
    matrix-valued function $A^0(\cdot)$ is defined in
    \eqref{E:A0Def} and the scalar-valued functions $\mathfrak{F},\mathfrak{G},\cdots \mathfrak{H}^3$ are defined
    in \eqref{E:Inhomogeneousf} - \eqref{E:Inhomogeneoushj}.

    We will make use of the pseudodifferential operator
    \begin{align}
        \mbox{\textnormal{S}}\overset{\mbox{\tiny{def}}}{=}(1-\Delta)^{N/2},
    \end{align}
    which is an isomorphism between $H^N$ and $L^2;$ i.e., $\mbox{\textnormal{S}} \in \mathcal{L}(H^N,L^2)$
    and \\ $\mbox{\textnormal{S}}^{-1} \in \mathcal{L}(L^2, H^N).$

    \subsubsection{Technical estimates}
    In this section, we provide some technical lemmas that will be needed 
    in our proof of Theorem \ref{T:ContinuousDependence}. For certain function spaces $X,$ there exist evolution operators
    \begin{align}
        \mathcal{U}(t,t'), \ \mathcal{U}^m(t,t') : X \rightarrow X
    \end{align}
    defined on $\triangle_{T'} \overset{\mbox{\tiny{def}}}{=} \lbrace 0 \leq t' \leq t \leq T' \rbrace$
    that map solutions (belonging to the space $X$) of the
    corresponding \emph{homogeneous} version of the linearized systems \eqref{E:AbstractEvolutionV} and \eqref{E:AbstractEvolutionVm}
    at time $t'$ to solutions at
    time $t.$ The relevant spaces in our discussion are $X=L^2$ and $X=H^N.$
    In the following three lemmas, we describe the properties of the operators $\mathcal{U}(t,t')$
    and $\mathcal{U}^m(t,t').$ Complete proofs are given in \cite{tK1970},
    \cite{tK1973}, and \cite{tK1975}; rather than repeating them,
    we instead attempt to provide some insight into how the proofs relate
    to the methods described in this paper.

    \begin{lemma} \label{L:BoundedLinearOperators}
        $\mathcal{U}(\cdot,\cdot)$ and $\mathcal{U}^m(\cdot,\cdot)$ (for
        $m \geq 0$) are strongly-continuous maps from $\triangle_{T'}$ into
        $\mathcal{L}(L^2)\cap\mathcal{L}(H^N).$ Furthermore, there exists a $C(K) > 0$ such that
        $\mid\mid\mid \mathcal{U} \mid\mid\mid_{L^2,\triangle_{T'}}, \mid\mid\mid
        \mathcal{U}^m \mid\mid\mid_{L^2,\triangle_{T'}}, \mid\mid\mid \mathcal{U}
        \mid\mid\mid_{H^N,\triangle_{T'}}, \mid\mid\mid
        \mathcal{U}^m \mid\mid\mid_{H^N,\triangle_{T'}} < C(K).$
    \end{lemma}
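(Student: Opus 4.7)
The plan is to construct $\mathcal{U}(t,t')$ and $\mathcal{U}^m(t,t')$ by solving the homogeneous versions of \eqref{E:AbstractEvolutionV} and \eqref{E:AbstractEvolutionVm}, then to read off the uniform operator-norm bounds from energy-current estimates of the same type that drove Proposition \ref{P:UniformTime}. The crucial preliminary observation is that, since the bgs $\mathbf{V}$ and $\Vm$ take values in $\bar{\mathcal{O}}_2$ and obey the uniform bounds \eqref{E:UniformSobolev} of size $K$, the Sobolev embedding $H^{N-1}\subset L^\infty$ (valid because $N\geq 3$) yields the uniform pointwise control $\|D\mathbf{V}\|_{L^\infty},\|D\Vm\|_{L^\infty}\leq C(K)$ on the coefficients entering the divergence formulas for the associated energy currents.

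For fixed $\mathbf{Z}_0\in L^2\cap H^N$ and $t'\in[0,T']$, I would define $\mathbf{Z}(t)\overset{\mbox{\tiny{def}}}{=}\mathcal{U}(t,t')\mathbf{Z}_0$ as the solution of the homogeneous version of \eqref{E:AbstractEvolutionV} with $\mathbf{Z}(t')=\mathbf{Z}_0$. For the $\mathcal{L}(L^2)$ bound, I would form the energy current \eqref{E:EnergyCurrent} with variation $\mathbf{Z}$ and bgs $\mathbf{V}$: the uniform positivity \eqref{E:UniformPositivity} makes $\|\dot{J}^0(t)\|_{L^1}$ comparable to $\|\mathbf{Z}(t)\|_{L^2}^2$ with constants depending only on $\bar{\mathcal{O}}_2$, while the divergence identity \eqref{E:EnergyCurrentDivergence} loses every inhomogeneous term (the equation is homogeneous) and leaves only the quadratic-in-$\mathbf{Z}$ pieces, bounded pointwise by $C(K)|\mathbf{Z}|^2$. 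Gronwall's inequality applied on $[t',t]$ yields $\|\mathbf{Z}(t)\|_{L^2}\leq C(K)\|\mathbf{Z}_0\|_{L^2}$ uniformly in $(t,t')\in\triangle_{T'}$. The $\mathcal{L}(H^N)$ bound follows by the same procedure applied to the higher-order currents $\dot{J}_{\vec{\alpha}}$ of Remark \ref{R:HigherOrderEnergyCurrents} summed over $|\vec{\alpha}|\leq N$; the commutator contributions $\mathbf{k}_{\vec{\alpha}}$ from \eqref{E:kalphaDef} are controlled by Sobolev-Moser inequalities exactly as in Lemma \ref{L:EnergyCurrentDivergenceL1SobolevEstimate}, again using only the uniform bound $K$. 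Replacing $\mathbf{V}$ by $\Vm$ changes nothing, so the bounds for $\mathcal{U}^m$ are identical.

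What remains is the strong continuity of $(t,t')\mapsto\mathcal{U}(t,t')\mathbf{Z}_0$ (and the analogous statement for $\mathcal{U}^m$) from $\triangle_{T'}$ into $L^2$ and into $H^N$. Continuity in $t$ with $t'$ fixed is built into the construction of solutions to the homogeneous linear system, which yields $\mathbf{Z}\in C^0([t',T'],L^2)\cap C^0([t',T'],H^N)$. Continuity in $t'$ with $t$ fixed then follows from the evolution identity $\mathcal{U}(t,t_1')=\mathcal{U}(t,t_2')\mathcal{U}(t_2',t_1')$ (valid whenever $t\geq t_2'\geq t_1'$) combined with the uniform operator bounds and the $t$-continuity statement applied to $\mathcal{U}(t_2',t_1')\mathbf{Z}_0$. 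The main obstacle I expect is upgrading these two separate continuity statements to joint continuity on $\triangle_{T'}$; the standard remedy is to first establish joint continuity on a dense subspace (for example, on $H^{N+1}$ data, whose trajectories are $C^1$ in $t$ by \eqref{E:IterateTimeDerivative}-type reasoning), and then to pass to general $\mathbf{Z}_0\in X$ by an $\epsilon/3$-density argument that exploits the uniform operator bound $C(K)$ just established.
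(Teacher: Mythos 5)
Your proposal is correct, and it faithfully fleshes out what the paper only gestures at: the paper does not actually prove Lemma \ref{L:BoundedLinearOperators}, but cites Kato (\cite{tK1970}, \cite{tK1973}, \cite{tK1975}) for the complete proofs and offers a single remark noting that the lemma is ``essentially a consequence of the fact that the uniform bound \eqref{E:UniformSobolev} \ldots allows for uniform Sobolev estimates to be made on the $L^2$ (or $H^N$) norm of $L^2$ (or $H^N$) solutions to the linearized equations.'' Your sketch is exactly the energy-current realization of that remark: defining $\mathcal{U}(t,t')\mathbf{Z}_0$ as the homogeneous solution, using \eqref{E:UniformPositivity} to make $\|\dot{J}^0\|_{L^1}$ comparable to $\|\mathbf{Z}\|_{L^2}^2$, observing that with zero inhomogeneous terms the divergence \eqref{E:EnergyCurrentDivergence} collapses to the quadratic pieces with coefficients in $L^\infty$ via \eqref{E:UniformSobolev} and Sobolev embedding, and closing with Gronwall; then promoting to $H^N$ with the higher-order currents of Remark \ref{R:HigherOrderEnergyCurrents} and the commutator bound \eqref{E:SobolevClaim2} (which is precisely the relevant special case of \eqref{E:balphaDef} when $\mathbf{b}=\mathbf{0}$ and $\partial_{\vec{\alpha}}\mathbf{l}=\mathbf{0}$). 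Your treatment of strong joint continuity on $\triangle_{T'}$ — $t$-continuity from linear theory, $t'$-continuity from the cocycle identity and the uniform operator bound, joint continuity via an $\epsilon/3$ density argument on smoother data — is the standard route and is the one Kato actually follows; it is the one genuine addition beyond the paper's remark, which does not touch the continuity claim at all. The one caveat worth flagging is that strong $H^N$-continuity for data only in $H^N$ is the delicate point in Kato's theory (it is not automatic for non-autonomous hyperbolic systems), and your phrase ``built into the construction'' glosses over this; your density argument is where the work is actually hiding, and it is the right mechanism. Overall: correct, same underlying method, just more explicit than the paper chooses to be.
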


    \begin{remark}
        Lemma \ref{L:BoundedLinearOperators} is essentially a
        consequence of the fact that the uniform bound \eqref{E:UniformSobolev} for
        $\mathbf{V}$ and $\Vm$ allows for uniform Sobolev
        estimates to be made on the $L^2$ (or $H^N$) norm of $L^2$ (or $H^N$) solutions to the
        linearized equations.
    \end{remark}

    \begin{remark}                                          \label{R:HNInhomogeneousTerms}
        By the regularity result of $\mathbf{V}$ furnished by Theorem \ref{T:LocalExistence}, Corollary
        \ref{C:SobolevCorollary}, Remark \ref{R:SobolevCalculusRemark}, and \eqref{E:phiBar}, the right-hand side
        \eqref{E:AbstractEvolutionV} is an element of
        $C^0([0,T],H^N).$ Given a function $\mathring{\mathbf{Z}} \in H_{\Vb}^N,$
        it follows from Lemma \ref{L:BoundedLinearOperators} and
        standard linear theory (via Duhamel's principle) that there exists a unique solution
        $\mathbf{Z} \in C^0([0,T],H_{\Vb}^N)$ to \eqref{E:AbstractEvolutionV} with initial data
        equal to $\mathring{\mathbf{Z}}.$ An 
        analogous result holds for solutions to \eqref{E:AbstractEvolutionVm}.
    \end{remark}

    \begin{lemma} \label{L:StrongConvergenceL2HN}
        $\mathcal{U}^m(t,t')$ converges to $\mathcal{U}(t,t')$
        strongly in $\mathcal{L}(L^2)$ as $m \to \infty.$ Furthermore, the strong
        convergence is uniform on $\triangle_{T'}.$
    \end{lemma}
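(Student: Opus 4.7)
The plan is to establish convergence first on the dense subspace $H_{\Vb}^N \subset L^2$ via a Duhamel-type comparison argument, and then extend to all of $L^2$ using the uniform operator bound from Lemma \ref{L:BoundedLinearOperators}.

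Fix $\mathring{\mathbf{Z}} \in H_{\Vb}^N$ and set $\mathbf{Z}(t) \overset{\mbox{\tiny{def}}}{=} \mathcal{U}(t,t')\mathring{\mathbf{Z}}$ and $\mathbf{Z}^m(t) \overset{\mbox{\tiny{def}}}{=} \mathcal{U}^m(t,t')\mathring{\mathbf{Z}}$. By Remark \ref{R:HNInhomogeneousTerms} both lie in $C^0([t',T'],H_{\Vb}^N)$, and their difference $\mathbf{Y}^m \overset{\mbox{\tiny{def}}}{=} \mathbf{Z}^m - \mathbf{Z}$ satisfies the forced problem
\begin{align}
\partial_t \mathbf{Y}^m + \mathcal{A}(\Vm)\mathbf{Y}^m = \bigl(\mathcal{A}(\mathbf{V}) - \mathcal{A}(\Vm)\bigr)\mathbf{Z}, \qquad \mathbf{Y}^m(t') = 0. \notag
\end{align}
Duhamel's principle then gives $\mathbf{Y}^m(t) = \int_{t'}^{t} \mathcal{U}^m(t,s)\bigl(\mathcal{A}(\mathbf{V}) - \mathcal{A}(\Vm)\bigr)\mathbf{Z}(s)\,ds$.

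Since $\mathcal{A}(\cdot)$ is a first-order spatial differential operator whose coefficients are smooth functions of arguments confined to $\bar{\mathcal{O}}_2$, the coefficient difference is Lipschitz-controlled by $\mathbf{V} - \Vm$, and I would estimate
\begin{align}
\bigl\|(\mathcal{A}(\mathbf{V}) - \mathcal{A}(\Vm))\mathbf{Z}(s)\bigr\|_{L^2} \leq C(\bar{\mathcal{O}}_2)\,\|\mathbf{V}(s) - \Vm(s)\|_{L^\infty}\,\|\mathbf{Z}(s)\|_{H^1}. \notag
\end{align}
The Lipschitz estimate \eqref{E:ContinuousDependenceInequality1} together with Sobolev embedding (valid because $N-1 \geq 2 > 3/2$) yields $\mid\mid\mid \mathbf{V} - \Vm \mid\mid\mid_{L^\infty,T'} \leq C(K)\|\ringVm - \ringV\|_{H^{N-1}} \to 0$ as $m \to \infty$. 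Combined with the uniform bound $\mid\mid\mid \mathcal{U}^m \mid\mid\mid_{L^2,\triangle_{T'}} \leq C(K)$ from Lemma \ref{L:BoundedLinearOperators} and the inequality $\|\mathbf{Z}(s)\|_{H^1} \leq C(K)\|\mathring{\mathbf{Z}}\|_{H^N}$, this produces
\begin{align}
\sup_{(t,t') \in \triangle_{T'}} \|\mathbf{Y}^m(t)\|_{L^2} \leq C(K)\,T'\,\|\mathring{\mathbf{Z}}\|_{H^N}\,\mid\mid\mid \mathbf{V} - \Vm \mid\mid\mid_{L^\infty,T'} \longrightarrow 0, \notag
\end{align}
so $\mathcal{U}^m(t,t')\mathring{\mathbf{Z}} \to \mathcal{U}(t,t')\mathring{\mathbf{Z}}$ in $L^2$ uniformly on $\triangle_{T'}$.

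To extend to general $\mathring{\mathbf{Z}} \in L^2$, I would approximate $\mathring{\mathbf{Z}}$ in $L^2$ by $\mathring{\mathbf{Z}}_\epsilon \in H_{\Vb}^N$ and invoke the triangle inequality
\begin{align}
\bigl\|(\mathcal{U}^m(t,t') - \mathcal{U}(t,t'))\mathring{\mathbf{Z}}\bigr\|_{L^2} \leq 2C(K)\,\|\mathring{\mathbf{Z}} - \mathring{\mathbf{Z}}_\epsilon\|_{L^2} + \bigl\|(\mathcal{U}^m(t,t') - \mathcal{U}(t,t'))\mathring{\mathbf{Z}}_\epsilon\bigr\|_{L^2}. \notag
\end{align}
Given $\eta > 0$, choosing $\epsilon$ small makes the first term $\leq \eta/2$ uniformly in $(t,t',m)$, and then the previous step makes the second term $\leq \eta/2$ for all sufficiently large $m$, uniformly in $(t,t')$. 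The main obstacle — and precisely the reason the density argument is indispensable — is the one-derivative loss hidden in $(\mathcal{A}(\mathbf{V}) - \mathcal{A}(\Vm))\mathbf{Z}$: the forcing term is only controllable in $L^2$ when $\mathbf{Z}$ already possesses one spatial derivative in $L^2$, so the Duhamel comparison cannot be applied directly to merely $L^2$ initial data.
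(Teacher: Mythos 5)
Your argument is correct and matches the sketch in the remark following the lemma (the paper itself defers the complete proof to Kato's papers \cite{tK1970}, \cite{tK1973}, \cite{tK1975}): density of $H^N$ in $L^2,$ a Duhamel/energy comparison controlling the one-derivative-loss forcing $(\mathcal{A}(\mathbf{V}) - \mathcal{A}(\Vm))\mathbf{Z},$ and the uniform operator bounds of Lemma \ref{L:BoundedLinearOperators}, with the $L^{\infty}$ convergence of $\Vm$ to $\mathbf{V}$ coming from the Lipschitz estimate \eqref{E:ContinuousDependenceInequality1} and Sobolev embedding. One bookkeeping slip: the homogeneous linearized equation is genuinely linear, so $\mathcal{U}$ and $\mathcal{U}^m$ act on the vector space $H^N$ rather than the affine space $H_{\Vb}^N$ -- your dense approximating data should be taken in $H^N,$ and Remark \ref{R:HNInhomogeneousTerms} (which you cite for regularity) actually concerns the inhomogeneous equation; the regularity you need here is just the $H^N$ boundedness from Lemma \ref{L:BoundedLinearOperators}.
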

    \begin{remark}
        By smoothing the initial data, a solution $\mathbf{Z} \in C^0([0,T],L^2)$ to either $\partial_t \mathbf{Z} + 
        \mathcal{A}(\mathbf{V}) \mathbf{Z} = 0$ or $\partial_t \mathbf{Z} + \mathcal{A}(\Vm) \mathbf{Z} = \mathbf{f}(\Vm)$ can be realized 
        as the limit (in the norm $\mid\mid\mid \cdot \mid\mid\mid_{L^2,T}$) of a sequence
        $\lbrace \mathbf{Z}^k \rbrace \subset C^0([0,T],H^N).$ Therefore, to prove Lemma \ref{L:StrongConvergenceL2HN},
        one only needs to check that given initial data $\mathring{\mathbf{Z}} \in H^N,$ we have that
        \begin{align}                                                                                                                           \label{E:StrongConvergence}
            \lim_{m \to \infty} \mid\mid\mid \big(\mathcal{U}^m(\cdot,0) - \mathcal{U}(\cdot,0)\big)
            \mathring{\mathbf{Z}}\mid\mid\mid_{L^2,T}=0.
        \end{align}
         Based on Lemma \ref{L:BoundedLinearOperators} and \eqref{E:ContinuousDependenceInequality4},
         which can be used to show that for $N' < N$ we have $\Vm \rightarrow \mathbf{V}$ in 
         $C^0([0,T'],H_{\Vb}^{N'})\cap C_b^1([0,T'] \times \mathbb{R}^3),$
         \eqref{E:StrongConvergence} follows from the method of energy currents.
    \end{remark}

    \begin{lemma} \label{L:OperatorBL2Norm}
        There exist operator-valued functions $\mathcal{B}:[0,T] \rightarrow \mathcal{L}(L^2), \\ \Bm :[0,T'] \rightarrow 
        \mathcal{L}(L^2)$
        such that
        \begin{align}
            \mbox{\textnormal{S}}\mathcal{A}(\mathbf{V}(t)) \mbox{\textnormal{S}}^{-1} = \mathcal{A}(\mathbf{V}(t)) + \mathcal{B}(t) 							\\
            \mbox{\textnormal{S}} \mathcal{A}(\Vm(t)) \mbox{\textnormal{S}}^{-1} = \mathcal{A}(\Vm(t)) + \Bm(t).
        \end{align}

        Furthermore, for all $t'$ with $0 \leq t' \leq T',$ $\mathcal{B}$ and $\Bm$ satisfy the estimates
        \begin{align}
            \mid\mid\mid \Bm - \mathcal{B} \mid\mid\mid_{L^2,t'} & \leq C(K) 
            \mid\mid\mid \Vm - \mathbf{V} \mid\mid\mid_{H^N,t'} \\
           \mbox{and} \ \mid\mid\mid \mathcal{B} \mid\mid\mid_{L^2,T}, \ \mid\mid\mid \Bm \mid\mid\mid_{L^2,T'} &\leq C(K).
        \end{align}
    \end{lemma}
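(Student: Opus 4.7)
The plan is to identify $\mathcal{B}(t)$ and $\Bm(t)$ explicitly as the $\mbox{\textnormal{S}}^{-1}$-conjugated commutators
\begin{align*}
\mathcal{B}(t) = [\mbox{\textnormal{S}},\mathcal{A}(\mathbf{V}(t))]\mbox{\textnormal{S}}^{-1}, \qquad \Bm(t) = [\mbox{\textnormal{S}},\mathcal{A}(\Vm(t))]\mbox{\textnormal{S}}^{-1},
\end{align*}
which are forced by the algebraic identity $\mbox{\textnormal{S}} \mathcal{A} \mbox{\textnormal{S}}^{-1} - \mathcal{A} = [\mbox{\textnormal{S}},\mathcal{A}]\mbox{\textnormal{S}}^{-1}.$ Heuristically, the principal symbols cancel in the commutator, so $[\mbox{\textnormal{S}},\mathcal{A}]$ is of order $N$ rather than $N+1,$ and composing with $\mbox{\textnormal{S}}^{-1}$ produces an operator of order $0.$ A helpful simplification is that the ``lower-half" equations \eqref{E:EOV4} - \eqref{E:EOV6} are constant-coefficient and therefore commute with $\mbox{\textnormal{S}},$ contributing nothing to $\mathcal{B};$ only the upper-half coefficients $\bigl(A^0(\mathbf{V})\bigr)^{-1} A^k(\mathbf{V}),$ which by Remark \ref{R:Invertible} are smooth functions of $\mathbf{V}$ on $\bar{\mathcal{O}}_2,$ enter.

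Next, I would invoke the standard Kato-Ponce commutator estimate applied componentwise: for any scalar function $a$ and any $u \in L^2,$
\begin{align*}
\bigl\|[\mbox{\textnormal{S}}, a]\mbox{\textnormal{S}}^{-1} u \bigr\|_{L^2} \leq C\bigl(\|\nabla a\|_{L^\infty} \|\mbox{\textnormal{S}}^{-1}u\|_{H^{N-1}} + \|a\|_{H^N} \|\mbox{\textnormal{S}}^{-1}u\|_{L^\infty}\bigr).
\end{align*}
Since $N \geq 3,$ the Sobolev embedding $H^2(\mathbb{R}^3) \hookrightarrow L^\infty$ gives both $\|\mbox{\textnormal{S}}^{-1}u\|_{L^\infty} \leq C\|u\|_{L^2}$ and $\|\nabla a\|_{L^\infty} \leq C\|a\|_{H^N},$ while $\|\mbox{\textnormal{S}}^{-1}u\|_{H^{N-1}} \leq \|u\|_{L^2}.$ Taking $a$ equal to each scalar coefficient of $\mathcal{A}(\mathbf{V}(t))$ and using Proposition \ref{P:CompositionProductSobolevMoser} with Remark \ref{R:SobolevCalculusRemark} together with the uniform bound \eqref{E:UniformSobolev} to estimate $\|a\|_{H^N} \leq C(K),$ yields $\mid\mid\mid \mathcal{B} \mid\mid\mid_{L^2,T} \leq C(K);$ the argument for $\Bm$ is identical.

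For the Lipschitz-type bound on the difference, I would write
\begin{align*}
\Bm(t) - \mathcal{B}(t) = \bigl[\mbox{\textnormal{S}}, \mathcal{A}(\Vm(t)) - \mathcal{A}(\mathbf{V}(t))\bigr]\mbox{\textnormal{S}}^{-1},
\end{align*}
whose coefficients take the form $F(\Vm(t)) - F(\mathbf{V}(t))$ for smooth $F$ on $\bar{\mathcal{O}}_2.$ Proposition \ref{P:SobolevTaylor} with Remark \ref{R:SobolevTaylorCalculusRemark} controls the $H^N$ norm of each such difference by $C(K)\|\Vm(t) - \mathbf{V}(t)\|_{H^N},$ and the $L^\infty$ norm of its gradient by the same quantity via Sobolev embedding. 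Feeding these bounds into the Kato-Ponce estimate above and taking the sup over $[0,t']$ delivers the desired inequality.

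The main obstacle will be keeping all constants dependent only on the structural data $(N, \bar{\mathcal{O}}_2, K)$ and not on any further derivatives of $\mathbf{V}.$ This is precisely where the hypothesis $N \geq 3$ is indispensable: it provides just enough Sobolev embedding margin to absorb the $L^\infty$ factors produced by the sharp Kato-Ponce inequality into the $H^N$ norms of the coefficients, and to close the Moser-type composition estimate for $F(\mathbf{V}) - F(\Vb)$ in $H^N.$
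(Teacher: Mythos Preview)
Your proposal is correct and is essentially the approach the paper has in mind: the paper does not give a proof of this lemma at all, but merely remarks that ``a modern treatment of commutator estimates that can be used to prove Lemma \ref{L:OperatorBL2Norm} is located in \cite{MT1991},'' i.e., Taylor's \emph{Pseudodifferential Operators and Nonlinear PDE}. Your explicit identification $\mathcal{B}(t)=[\mbox{\textnormal{S}},\mathcal{A}(\mathbf{V}(t))]\mbox{\textnormal{S}}^{-1}$ together with the Kato--Ponce estimate is precisely the content of that reference, so you have supplied what the paper only cites.

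One small point to tighten: since $\mathcal{A}$ is a first-order differential operator rather than a pure multiplication operator, the commutator decomposes as $[\mbox{\textnormal{S}},a_k\partial_k]=[\mbox{\textnormal{S}},a_k]\partial_k$ (using that $\mbox{\textnormal{S}}$ commutes with constant-coefficient derivatives), and the Kato--Ponce inequality should be applied with $v=\partial_k\mbox{\textnormal{S}}^{-1}u$ rather than $\mbox{\textnormal{S}}^{-1}u$ itself. The bounds $\|\partial_k\mbox{\textnormal{S}}^{-1}u\|_{H^{N-1}}\leq\|u\|_{L^2}$ and, for $N\geq 3$, $\|\partial_k\mbox{\textnormal{S}}^{-1}u\|_{L^\infty}\leq C\|u\|_{L^2}$ then close exactly as you indicate.
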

    
    \begin{remark}
    	A modern treatment of commutator estimates that can be used to prove Lemma \ref{L:OperatorBL2Norm} is located
    	in \cite{MT1991}.
    \end{remark}

		Our proof of Theorem \ref{T:ContinuousDependence} also requires the following lemma, whose
		simple proof is based on Duhamel's principle:

    \begin{lemma}               \label{L:Duhamel}
        Let $\mathring{\mathbf{Z}} \in H_{\Vb}^N,$ 
        and let $\mathbf{Z} \in C^0([0,T],H_{\Vb}^N)$ denote the
        unique solution to \eqref{E:AbstractEvolutionV} with initial data
        equal to $\mathring{\mathbf{Z}}$ as furnished by Remark \ref{R:HNInhomogeneousTerms}. Then
        for \\ $t \in [0,T],$ $\mbox{\textnormal{S}}(\mathbf{Z} - \Vb)$
        satisfies the Duhamel formula
        \begin{align} \label{E:Duhamel}
            \mbox{\textnormal{S}}(\mathbf{Z}(t)-\Vb) =
            \mathcal{U}(t,0)\mbox{\textnormal{S}}(\mathring{\mathbf{Z}} - \Vb) & - \int_0^t
            \mathcal{U}(t,t') \mathcal{B}(t')
            \mbox{\textnormal{S}}(\mathbf{Z}(t') - \Vb) \, dt' \notag \\
            &+ \int_0^t \mathcal{U}(t,t') \mbox{\textnormal{S}}\mathbf{f}(\mathbf{V}(t')) \, dt'. 
        \end{align}
        An analogous result holds for the linearization of the EN$_{\kappa}$ system around $\Vm.$
    \end{lemma}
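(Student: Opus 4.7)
The plan is to derive an evolution equation in $L^2$ for the quantity $\mbox{\textnormal{S}}(\mathbf{Z}-\Vb)$ that is driven purely by the generator $\mathcal{A}(\mathbf{V})$, and then invoke the standard Duhamel principle associated with the evolution family $\mathcal{U}(t,t')$ provided by Lemma \ref{L:BoundedLinearOperators}. All corrections to this clean evolution will be absorbed into an $L^2$ forcing term built out of $\mathcal{B}(t)$, $\mbox{\textnormal{S}}\mathbf{f}(\mathbf{V})$, and $\mbox{\textnormal{S}}(\mathbf{Z}-\Vb)$.

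First, I would subtract the constant state. Because $\mathcal{A}(\mathbf{V})$ is a purely first-order spatial differential operator with smooth coefficients, $\mathcal{A}(\mathbf{V})\Vb = 0$, and of course $\partial_t \Vb = 0$. Consequently the equation \eqref{E:AbstractEvolutionV} rewrites as
\begin{equation}
\partial_t(\mathbf{Z}-\Vb) + \mathcal{A}(\mathbf{V})(\mathbf{Z}-\Vb) = \mathbf{f}(\mathbf{V}), \notag
\end{equation}
which is meaningful as an identity in $C^0([0,T],H^{N-1})$ by the regularity of $\mathbf{Z}$ from Remark \ref{R:HNInhomogeneousTerms}.

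Next I would apply $\mbox{\textnormal{S}}$ to both sides and use the conjugation identity $\mbox{\textnormal{S}}\mathcal{A}(\mathbf{V})\mbox{\textnormal{S}}^{-1} = \mathcal{A}(\mathbf{V}) + \mathcal{B}$ from Lemma \ref{L:OperatorBL2Norm}. Writing $\mathbf{Y}(t) \overset{\mbox{\tiny{def}}}{=} \mbox{\textnormal{S}}(\mathbf{Z}(t)-\Vb) \in C^0([0,T],L^2)$ and noting that $\mbox{\textnormal{S}}$ and $\partial_t$ commute, this yields
\begin{equation}
\partial_t \mathbf{Y} + \mathcal{A}(\mathbf{V}) \mathbf{Y} = \mbox{\textnormal{S}}\mathbf{f}(\mathbf{V}) - \mathcal{B}(t)\mathbf{Y}, \notag
\end{equation}
with initial data $\mathbf{Y}(0) = \mbox{\textnormal{S}}(\mathring{\mathbf{Z}}-\Vb) \in L^2$. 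The right-hand side lies in $C^0([0,T],L^2)$: the first term because $\mathbf{f}(\mathbf{V}) \in C^0([0,T],H^N)$ by Remark \ref{R:HNInhomogeneousTerms} and $\mbox{\textnormal{S}} \in \mathcal{L}(H^N,L^2)$, and the second because $\mathcal{B}(t) \in \mathcal{L}(L^2)$ by Lemma \ref{L:OperatorBL2Norm} and $\mathbf{Y} \in C^0([0,T],L^2)$.

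Finally, I would apply Duhamel's principle in $L^2$ to the evolution governed by $\partial_t + \mathcal{A}(\mathbf{V})$, whose forward evolution family $\mathcal{U}(t,t')$ on $L^2$ is the one supplied by Lemma \ref{L:BoundedLinearOperators}. This immediately yields the claimed identity \eqref{E:Duhamel}. The only step requiring a bit of care is the justification of the Duhamel representation at the $L^2$ level: one argues in the standard way by checking that $t \mapsto \mathcal{U}(t,0)\mathbf{Y}(0) + \int_0^t \mathcal{U}(t,t')[\mbox{\textnormal{S}}\mathbf{f}(\mathbf{V}(t')) - \mathcal{B}(t')\mathbf{Y}(t')]\,dt'$ and $\mathbf{Y}(t)$ both solve the same $L^2$ Cauchy problem, and appealing to the uniqueness supplied (via Lemma \ref{L:BoundedLinearOperators}) by the energy estimates in $L^2$ for the homogeneous equation; I expect this to be the main, but entirely standard, point to verify. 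The analogous formula for the linearization around $\mathbf{V}^m$ follows by replacing $\mathbf{V} \to \mathbf{V}^m$, $\mathcal{U} \to \mathcal{U}^m$, and $\mathcal{B} \to \mathcal{B}^m$ throughout.
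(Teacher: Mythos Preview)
Your proposal is correct and follows essentially the same route as the paper: derive the $L^2$ evolution equation for $\mbox{\textnormal{S}}(\mathbf{Z}-\Vb)$ via the conjugation identity of Lemma \ref{L:OperatorBL2Norm}, then invoke Duhamel's principle for the evolution family $\mathcal{U}(t,t')$. The paper's proof is terser but structurally identical; your additional remarks on regularity of the forcing and on uniqueness at the $L^2$ level are appropriate elaborations.
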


    \begin{proof} We apply $\mbox{\textnormal{S}}$ to each side of the equation
    satisfied by $\mathbf{Z} - \Vb$ and use Lemma \ref{L:OperatorBL2Norm} to arrive
    at the equation
        \begin{align}                                                       \label{E:SZEquation}
            \partial_t \big[\mbox{\textnormal{S}} (\mathbf{Z} - \Vb) \big] + \mathcal{A}(\mathbf{V})
            \mbox{\textnormal{S}} (\mathbf{Z} - \Vb) =
            - \mathcal{B}(\mathbf{V})\mbox{\textnormal{S}}(\mathbf{Z} - \Vb)
            + \mbox{\textnormal{S}}\mathbf{f}(\mathbf{V}).
        \end{align}
        Thus, $\mbox{\textnormal{S}}(\mathbf{Z} - \Vb)$ is a solution to
        the same linear equation that $\mathbf{Z}$ solves, except
        the inhomogeneous terms for
        $\mbox{\textnormal{S}}(\mathbf{Z} - \Vb)$ are given by the
        right-hand side of \eqref{E:SZEquation} and the initial
        data are given by $\mbox{\textnormal{S}}(\mathring{\mathbf{Z}} - \Vb).$
        Equation \eqref{E:Duhamel} now follows from Duhamel's principle. Note that
        $\mbox{\textnormal{S}}\mathbf{f}(\mathbf{V})$ is well-defined since
        $\mathbf{f}(\bar{\mathbf{V}})=\mathbf{0},$ and we can therefore apply Proposition \ref{P:SobolevTaylor} and Remark
        \ref{R:SobolevTaylorCalculusRemark} to conclude that $\mathbf{f}(\mathbf{V}) \in H^N.$
    \end{proof}

    \subsubsection{Proof of Theorem \ref{T:ContinuousDependence}}
    We will now demonstrate
    \eqref{E:HNContinuousDependence} by providing a proof of the equivalent
    statement
    \begin{align} \label{E:HNContinuousDependenceEquivalent}
        \lim_{m \to \infty} \mid\mid\mid \mbox{\textnormal{S}}\left(\Vm -
        \mathbf{V} \right) \mid\mid\mid_{L^2,T} = 0.
    \end{align}
    Lemma \ref{L:Duhamel} implies the following equality, valid
    for $0 \leq t \leq T':$
    \begin{align} \label{E:DuhmaelDifference}
       &\mbox{\textnormal{S}}\big(\Vm(t) - \mathbf{V}(t) \big) =
        \mathcal{U}^m(t,0)\mbox{\textnormal{S}}\big(\ringVm - \ringV \big)
        + \big(\mathcal{U}^m(t,0) - \mathcal{U}(t,0)\big)\mbox{\textnormal{S}}(\ringV - \Vb) \notag \\
        &+ \int_0^t \mathcal{U}(t,t') \mathcal{B}(t') \mbox{\textnormal{S}}(\mathbf{V}(t') - \Vb) \, dt'-
        \int_0^t \mathcal{U}^m(t,t') \Bm(t')
        \mbox{\textnormal{S}}(\Vm(t') - \Vb)\, dt' \notag \\
        &+ \int_0^t \mathcal{U}^m(t,t')\mbox{\textnormal{S}}\mathbf{f}(\Vm(t')) \, dt' -
        \int_0^t \mathcal{U}(t,t') \mbox{\textnormal{S}}\mathbf{f}(\mathbf{V}(t')) \, dt'.
    \end{align}

    By Lemma \ref{L:BoundedLinearOperators}, we have that
    \begin{align} \label{E:InititalDataEvolutionHNBound}
       \mid\mid\mid \mathcal{U}^m(t,0)\mbox{\textnormal{S}}\big(\ringVm - \ringV
        \big)\mid\mid\mid_{L^2,T'} \leq C(\mbox{\textnormal{S}},K) \mid\mid \ringVm - \ringV \mid\mid_{H^N}.
    \end{align}
    Furthermore, if we define
    \begin{align} \label{E:SupnormOperatorDifference}
    	\sup_{t \in [0,T']} \|\big(\mathcal{U}^m(t,0) - \mathcal{U}(t,0)\big)\mbox{\textnormal{S}}(\ringV - \Vb)\|_{L^2} 
    	\overset{\mbox{\tiny{def}}}{=} d_m,
    \end{align}
    then Lemma \ref{L:StrongConvergenceL2HN} implies that 
    \begin{align} \label{E:dmConvergestoZero}
    	\lim_{m \to \infty} d_m = 0.
    \end{align}

    We now rewrite the second line of \eqref{E:DuhmaelDifference} as
    \begin{align} \label{E:DuhamelDifferenceRewritten}
        &\int_0^t (\mathcal{U}(t,t') - \mathcal{U}^m(t,t'))
        \mathcal{B}(t')\mbox{\textnormal{S}}(\mathbf{V}(t') - \Vb)
        \,dt' \notag \\
        + &\int_0^t \mathcal{U}^m(t,t')
        \left(\mathcal{B}(t')-\Bm(t')\right)\mbox{\textnormal{S}}(\mathbf{V}(t') - \Vb) \,dt' \notag \\
        + &\int_0^t \mathcal{U}^m(t,t')
        \Bm(t')\mbox{\textnormal{S}}\left(\mathbf{V}(t') - \Vm(t') \right)
        \,dt'. 
    \end{align}

    By \eqref{E:UniformSobolev}, Lemma \ref{L:BoundedLinearOperators} and Lemma \ref{L:OperatorBL2Norm}, for 
    $0 \leq t \leq T_* \leq T',$ the $L^2$ norms of the second and third integrals in \eqref{E:DuhamelDifferenceRewritten} are each 
    bounded from above by \\ $C(\mbox{\textnormal{S}},K)T_*\mid\mid\mid \Vm - \mathbf{V}\mid\mid\mid_{H^N,T*}.$ 

    We similarly split the third line of
    \eqref{E:DuhmaelDifference} into two terms and use \eqref{E:UniformSobolev}, Lemma \ref{L:BoundedLinearOperators}, Proposition 
    \ref{P:SobolevTaylor}, and Remark \ref{R:SobolevTaylorCalculusRemark} to bound the $L^2$ 
    norm of one of them, namely $\int_0^t \mathcal{U}^m(t,t') \mbox{\textnormal{S}}\big(\mathbf{f}(\Vm(t') - 
    \mathbf{f}(\mathbf{V}(t'))\big) \, dt',$ from above by \\
    $C(\mbox{\textnormal{S}},K)T_*\mid\mid\mid\Vm - \mathbf{V}\mid\mid\mid_{H^N,T*}.$

    Combining these estimates
    with \eqref{E:InititalDataEvolutionHNBound} and \eqref{E:SupnormOperatorDifference}, we take the $L^2$
    norm of each side of \eqref{E:DuhmaelDifference} followed by the $\sup$ over $t \in [0,T_*]$ to arrive at
    the inequality
    \begin{align}
        \mid\mid\mid\mbox{\textnormal{S}}\left(\Vm - \mathbf{V}
        \right)\mid\mid\mid_{L^2,T_*} &\leq C(\mbox{\textnormal{S}},K) \|\ringVm - \ringV\|_{H^N}  
        + d_m \notag \\
        &+ C(\mbox{\textnormal{S}},K) T_*\mid\mid\mid \Vm -
        \mathbf{V}\mid\mid\mid_{H^N,T_*} \notag \\
        &+ \int_0^{T_*} \sup_{t \in [0,T_*]}\|(\mathcal{U}(t,t') - \mathcal{U}^m(t,t'))
        \mathcal{B}(t')\mbox{\textnormal{S}}(\mathbf{V}(t') - \Vb)\|_{L^2} \,dt' \notag \\
        &+ \int_0^{T_*} \sup_{t \in [0,T_*]}\|(\mathcal{U}^m(t,t') - \mathcal{U}(t,t'))
        \mbox{\textnormal{S}}\mathbf{f}(\mathbf{V}(t'))\|_{L^2} \, dt'. 
    \end{align}

     We now choose $T_*$ small enough so that
     \begin{align} \label{E:TStarSmall}
        C(\mbox{\textnormal{S}},K)T_*\mid\mid\mid \Vm - \mathbf{V}\mid\mid\mid_{H^N,T_*}
        \leq \frac{1}{2}\mid\mid\mid\mbox{\textnormal{S}}\left(\Vm - \mathbf{V}
        \right)\mid\mid\mid_{L^2,T_*},
     \end{align}
     from which it follows that
     \begin{align} 
        \mid\mid\mid\mbox{\textnormal{S}}\left(\Vm - \mathbf{V}
        \right)&\mid\mid\mid_{L^2,T_*} \ \leq 2C(\mbox{\textnormal{S}},K) \|\ringVm - \ringV\|_{H^N} + 2d_m \notag \\ 
        &+ 2\int_0^{T_*} \sup_{t \in [0,T_*]} \|\left(\mathcal{U}^m(t,t') - \mathcal{U}(t,t')\right)
        \mathcal{B}(t')\mbox{\textnormal{S}}(\mathbf{V}(t') - \Vb)\|_{L^2} \,dt' \notag \\ 
        &+ 2\int_0^{T_*} \sup_{t \in [0,T_*]} \|\left(\mathcal{U}^m(t,t') - \mathcal{U}(t,t')\right)\mbox{\textnormal{S}}\big(\mathbf{f}({\mathbf{V}}(t')) - \mathbf{f}(\bar{\mathbf{V}})\big)\|_{L^2}
        \,dt',  \label{E:HNContinuousEstimate}
    \end{align}
    \noindent where in \eqref{E:HNContinuousEstimate}, we have used the fact that $\mathbf{f}(\bar{\mathbf{V}}) = \mathbf{0}.$

    By \eqref{E:UniformSobolev}, Lemma \ref{L:BoundedLinearOperators}, Lemma
    \ref{L:OperatorBL2Norm}, Proposition
    \ref{P:CompositionProductSobolevMoser}, Remark \ref{R:SobolevCalculusRemark}, Proposition \ref{P:SobolevTaylor}, and Remark
    \ref{R:SobolevTaylorCalculusRemark}, the two integrands in
    \eqref{E:HNContinuousEstimate}, viewed as functions of $t',$ are uniformly bounded by
    $C(K)$ on $[0,T_*].$ Furthermore, by Lemma
    \ref{L:StrongConvergenceL2HN}, the integrands converge to $0$ pointwise in $t'$ as $m \to
    \infty.$ Therefore, by the dominated convergence theorem, the
    two integrals in \eqref{E:HNContinuousEstimate} converges to $0$ as $m \to \infty.$
    Recalling that by hypothesis we have that $\lim_{m \to \infty}\|\ringVm -
    \ringV\|_{H^N} = 0,$ and also using \eqref{E:dmConvergestoZero}, we conclude that
    \begin{align}
        \lim_{m \to \infty}&\mid\mid\mid\mbox{\textnormal{S}}\left(\Vm - \mathbf{V}
        \right)\mid\mid\mid_{L^2,T_*}=0.
    \end{align}

    To extend this argument to the interval $[0,2T_*],$ let
    $\epsilon > 0$ and choose $m_0$ large enough so that $m \geq m_0$
    implies that $\mid\mid\mid \Vm - \mathbf{V}
    \mid\mid\mid_{H^N,T_*}< {\epsilon} / \big(4C(\mbox{\textnormal{S}},K)\big).$ Starting from
    time $T_*,$ we may argue as above to show that
    \begin{align}
        \limsup_{m \to \infty} \sup_{t \in [T_*,2T_*]}\|{\textnormal{S}}\left(\Vm - \mathbf{V}
        \right)\|_{H^N} \leq \frac{1}{2}\epsilon.
    \end{align}
    Thus, we can choose $m_1 \geq m_0$
    such that $\mid\mid\mid {\textnormal{S}}\left(\Vm - \mathbf{V}
    \right) \mid\mid\mid_{L^2,2T_*} \leq \epsilon$ when $m \geq m_1.$  Continuing in this
    manner, we may inductively extend this argument to the interval
    $[0,T'].$ We state for emphasis that the size of $T_*$ required to satisfy
    the inequality \eqref{E:TStarSmall} depends only on $C(\mbox{\textnormal{S}},K).$ Consequently, the length of the time
    interval of extension $T_*$ may be chosen to be the same at each step in the
    induction.

    We now show that this argument can be extended to the entire
    interval $[0,T]$ on which $\mathbf{V}$ exists. Define
    \begin{align}
    T_{max} \overset{\mbox{\tiny{def}}}{=}  \sup \lbrace T' \ | \
        &\mathbf{V} \ \mbox{and the} \ \Vm \ \mbox{exist on the interval} \notag \\
        &[0,T'] \ \mbox{for all large} \ m \ \mbox{and}
    \ \lim_{m \to \infty} \mid\mid\mid \Vm - \mathbf{V} \mid\mid\mid_{H^N,T'} = 0 \rbrace. \notag
    \end{align}
    We will show that the assumption $T_{max} < T$ leads to a
    contradiction.

    By Theorem \ref{T:LocalExistence} and Corollary \ref{C:ExistenceInterval}, for each $t \in
    [0,T],$ there exist an $H^N$ neighborhood $B_{\delta_t}(\mathbf{V}(t))$ of $\mathbf{V}(t)$ with positive radius
    $\delta_t$ and a $\Delta_t>0$ such that initial data
    belonging to $B_{\delta_t}(\mathbf{V}(t))$ launch a unique solution\footnote{This solution may escape
    $\bar{\mathcal{O}}_2,$ but this is not a difficulty since $\bar{\mathcal{O}}_2 
    \Subset \mathcal{O};$ the solution still has some ``room" left to evolve and remain in a compactly supported
    convex subset of $\mathcal{O}.$} that exists
    on the interval $[t,t+ \Delta_t]$ (the term ``initial" here refers to the time $t$). By
    continuity, $\mathbf{V}([0,T])$ is a compact subset of
    $H_{\Vb}^N.$ Therefore, there exist $\delta > 0$
    and $\Delta > 0$ such that initial data belonging to $B_{\delta}(\mathbf{V}(t))$
    launch a unique solution that exists on the interval $[t,t+
    \Delta].$ Furthermore, by Corollary \ref{C:UniformNormBound}, there exists a $C_{uniform}>0$ such that for any 
    ``initial" data $\mathring{\widetilde{\mathbf{V}}}$ contained in the ball
    $B_{\delta}(\mathbf{V}(t)),$ the corresponding solution $\widetilde{\mathbf{V}}$ to the 
    EN$_{\kappa}$ system satisfies the bounds
    \begin{align}
			\mid\mid\mid \widetilde{\mathbf{V}} \mid\mid\mid_{H_{\widetilde{\mathbf{V}}}^N,[t,t+\Delta]} &\leq C_{uniform} \notag \\
			\mid\mid\mid \partial_t \widetilde{\mathbf{V}} \mid\mid\mid_{H^{N-1},[t,t+\Delta]} &\leq C_{uniform}. \label{E:UniformSobolevBounds}
		\end{align} 
    We emphasize that $\delta$ and $\Delta$ \emph{are
    independent of $t$ belonging to $[0,T],$} and that $C_{uniform}$ is independent of the data. Note that as a consequence
    of this reasoning, it follows that $\mathbf{V}$ exists on the interval $[0,T + \Delta].$

    The contradiction is now easily obtained: assume that $T_{max} < T.$ Then according to the
    above paragraph, initial data belonging to $B_{\delta}(\mathbf{V}(T_{max} - \frac{1}{2}
    \Delta))$ launch a solution that exists on the interval
    $[T_{max} - \frac{1}{2} \Delta,T_{max} +
    \frac{1}{2}\Delta].$ Furthermore, for all large $m,$ $\Vm(T_{max} - \frac{1}{2}\Delta)$
    is contained in $B_{\delta}(\mathbf{V}(T_{max} - \frac{1}{2} \Delta)).$
    Therefore, for all large $m,$ $\Vm$ can be extended to a
    solution that exists on $[0,T_{max} + \frac{1}{2}\Delta].$ In addition, using \eqref{E:UniformSobolevBounds},
    we have that for all large $m,$ $\mid\mid\mid \Vm \mid\mid\mid_{H_{\widetilde{\mathbf{V}}}^N,T_{max} + \frac{1}{2}\Delta} \leq 
    C_{uniform},$ and $\mid\mid\mid \partial_t \Vm \mid\mid\mid_{H^{N-1},T_{max}+\frac{1}{2}\Delta} \leq C_{uniform}.$
   	Therefore, we can repeat the entire argument given in Section \ref{SS:ContinuousDependence}, substituting
   	$T_{max} + \frac{1}{2}\Delta$ in \eqref{E:UniformSobolev} in place of $T'$ and $T,$ and $C_{uniform}$ in 
   	place of $K,$ to show that $\lim_{m \to \infty} 
   	\mid\mid\mid \Vm - \mathbf{V} \mid\mid\mid_{H^N,T_{max} + \frac{1}{2}\Delta}=0.$ This contradicts the definition of $T_{max}$ and 
   	completes the proof of Theorem \ref{T:ContinuousDependence}. \qquad $\Box$

\section*{Acknowledgments}

This work would not have been possible without many hours of
discussion with and encouragement from Michael Kiessling and A.
Shadi Tahvildar-Zadeh. I would also like to thank Demetrios
Christodoulou for his generous correspondence concerning my
inquiries into his work, and the anonymous referee for some helpful
comments that aided my revision of the first draft. Work supported by NSF Grant DMS-0406951. Any opinions,
conclusions, or recommendations expressed in this material are those of the author
and do not necessarily reflect the views of the NSF.

\setcounter{section}{0}
\setcounter{subsection}{0}
\setcounter{subsubsection}{0}
\setcounter{equation}{0}   
\renewcommand{\thesection}{\Alph{section}}
\renewcommand{\theequation}{\Alph{section}.\arabic{equation}}
\renewcommand{\theproposition}{\Alph{section}.\arabic{proposition}}
\renewcommand{\thecorollary}{\Alph{section}.\arabic{corollary}}
\renewcommand{\thetheorem}{\Alph{section}.\arabic{theorem}}
\renewcommand{\theremark}{\Alph{section}.\arabic{remark}}
\renewcommand{\thelemma}{\Alph{section}.\arabic{lemma}}

\section{Appendix} \label{S:Appendix}

    In this Appendix, we use notation that is as consistent as possible with
    our use of notation in the body of the paper. To conserve space, we refer the reader
    to the literature instead of providing proofs: propositions \ref{P:CompositionProductSobolevMoser} and
    \ref{P:SobolevTaylor} are similar to propositions proved in
    chapter 6 of \cite{lH1997}, while Proposition
    \ref{P:SobolevMissingDerivativeProposition} is proved in
    \cite{sKaM1981}. The corollaries and remarks below are straightforward extensions
    of the propositions. With the exception of Proposition \ref{P:SobIterpolation}, which is a standard Sobolev interpolation
    inequality, the proofs of the propositions given in the literature are commonly based on the following version of the
    Gagliardo-Nirenberg inequality \cite{lN1959}, together with repeated use of H\"{o}lder's inequality and/or Sobolev embedding:
\begin{lemma}                                                                                           \label{L:GN}
    If $i,k \in \mathbb{N}$ with $0 \leq i \leq k,$ and $\mathbf{V}$ is a
    scalar-valued or array-valued function on $\mathbb{R}^d$ satisfying $\mathbf{V} \in
    L^{\infty}(\mathbb{R}^d)$ and $\|\nabla^{(k)} \mathbf{V}\|_{L^2(\mathbb{R}^d)} < \infty,$ then
    \begin{align}
        \| \nabla^{(i)} \mathbf{V} \|_{L^{2k/i}} \leq C(k) \|\mathbf{V}\|_{L^{\infty}}^{1 -
        \frac{i}{k}}\|\nabla^{(k)} \mathbf{V}\|_{L^2}^{\frac{i}{k}}.
    \end{align}
\end{lemma}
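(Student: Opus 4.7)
This is a standard Gagliardo--Nirenberg interpolation estimate. The plan is to deduce it from a ``log-convexity'' three-term inequality, first assuming $\mathbf{V} \in C_c^{\infty}(\mathbb{R}^d)$ and extending to the stated class by a routine mollification-and-cutoff argument at the end.

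The heart of the argument is to establish, for each $1 \leq i \leq k-1$ and each multi-index $\alpha$ with $|\alpha| = i,$ the three-term bound
\[
\|\partial^{\alpha} \mathbf{V}\|_{L^{2k/i}}^{2} \leq C(k) \, \|\nabla^{(i-1)} \mathbf{V}\|_{L^{2k/(i-1)}} \, \|\nabla^{(i+1)} \mathbf{V}\|_{L^{2k/(i+1)}},
\]
where $\|\nabla^{(0)} \mathbf{V}\|_{L^{\infty}}$ is understood as $\|\mathbf{V}\|_{L^{\infty}}.$ To produce this, I would factor $\partial^{\alpha} = \partial_j \partial^{\alpha'}$ with $|\alpha'| = i-1$ and integrate by parts once:
\[
\int (\partial^{\alpha} \mathbf{V})^{2k/i} \, dx = -\left(\frac{2k}{i} - 1\right) \int (\partial^{\alpha'} \mathbf{V}) \, (\partial^{\alpha} \mathbf{V})^{2k/i - 2} \, (\partial_j \partial^{\alpha} \mathbf{V}) \, dx.
\]
Applying H\"older's inequality to the right-hand side with exponents $2k/(i-1),$ $k/(k-i),$ and $2k/(i+1)$ (whose reciprocals sum to $1,$ as a direct check confirms), and using the elementary identity $\|(\partial^{\alpha} \mathbf{V})^{2k/i - 2}\|_{L^{k/(k-i)}} = \|\partial^{\alpha} \mathbf{V}\|_{L^{2k/i}}^{2k/i - 2},$ I would absorb this last factor into the left-hand side and sum over $|\alpha| = i$ to arrive at the three-term bound.

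With this bound in hand, I would set $N_i := \|\nabla^{(i)} \mathbf{V}\|_{L^{2k/i}}$ for $0 \leq i \leq k$ and $a_i := \log N_i.$ The three-term bound reads $a_i \leq \frac{1}{2}(a_{i-1} + a_{i+1}) + C'(k)$ for $1 \leq i \leq k-1.$ A discrete convexity argument --- comparing $a_i$ to the affine interpolant between $a_0$ and $a_k,$ or equivalently iterating the three-term bound along the index $i$ --- then yields $a_i \leq (1 - i/k) \, a_0 + (i/k) \, a_k + C''(k),$ which exponentiates to the stated inequality.

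The main obstacle I anticipate is keeping the constants under control through the combinatorial iteration so that the final $C(k)$ depends only on $k,$ together with treating the boundary indices $i=1$ and $i=k-1$: at these indices one factor on the right of the three-term bound is already the endpoint norm $\|\mathbf{V}\|_{L^{\infty}}$ or $\|\nabla^{(k)} \mathbf{V}\|_{L^{2}},$ so the recursion terminates cleanly, but one must still verify that the absorbed factor $\|\partial^{\alpha}\mathbf{V}\|_{L^{2k/i}}^{2k/i-2}$ does not degenerate in the extremal cases. The final approximation step extending from $C_c^{\infty}$ to admissible $\mathbf{V}$ is routine provided the constants from the mollifier and cutoff steps are taken uniform in the regularization scale.
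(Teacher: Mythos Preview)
The paper does not supply its own proof of this lemma; it is quoted as a known tool and attributed to Nirenberg's original paper \cite{lN1959}, serving only as input to the Sobolev--Moser propositions that follow. Your outline --- one integration by parts to produce the three-term log-convexity bound $N_i^2 \leq C N_{i-1} N_{i+1}$ and then a discrete-convexity iteration in $\log N_i$ --- is precisely the classical argument from that reference, so there is nothing substantive to compare.

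The sketch is sound. Two small points you would want to make explicit in a full write-up: first, since $2k/i$ is generally not an integer and $\mathbf{V}$ may be array-valued, the integrand should be $|\partial^{\alpha}\mathbf{V}|^{2k/i}$ rather than $(\partial^{\alpha}\mathbf{V})^{2k/i}$, and the integration by parts then produces an extra sign factor $|\partial^{\alpha}\mathbf{V}|^{2k/i-2}\,\partial^{\alpha}\mathbf{V}$ that is harmless for the subsequent H\"older step; second, at $i=1$ the left-most H\"older exponent $2k/(i-1)$ is formally infinite, which is exactly what your convention $\|\nabla^{(0)}\mathbf{V}\|_{L^{\infty}} = \|\mathbf{V}\|_{L^{\infty}}$ handles, so the recursion does indeed terminate at both ends as you say.
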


\begin{proposition}                                                                                                \label{P:CompositionProductSobolevMoser}
    Let $\mathcal{O}_2 \subset \mathbb{R}^{n}$ be a bounded open set, and let $j,d \in \mathbb{N}$ with $ j > \frac{d}{2}.$
    Let $\mathbf{V}:\mathbb{R}^d \rightarrow \mathbb{R}^n$ be an element of $H^j(\mathbb{R}^d),$ and assume that
    $\mathbf{V}(\mathbb{R}^d) \subset \bar{\mathcal{O}}_2.$ Let $F \in C_b^j(\bar{\mathcal{O}}_2)$
    be a $q \times q$ matrix-valued function,
    and let $G \in H^j(\mathbb{R}^d)$ be a $q \times q$ ($q \times 1)$ matrix-valued (array-valued) function. Then the $q \times q$ 	
    $(q \times 1)$ matrix-valued (array-valued) function $(F \circ \mathbf{V})G$ is an element of $H^j(\mathbb{R}^d)$ and
    \begin{align} \label{E:CompositionProductSobolevMoser}
        \|(F \circ \mathbf{V})G\|_{H^j(\mathbb{R}^d)} \leq C(j,d)|F|_{j,\bar{\mathcal{O}}_2}(1 +
        \|\mathbf{V}\|_{H^j(\mathbb{R}^d)}^j)\|G\|_{H^j(\mathbb{R}^d)}.
    \end{align}
	\end{proposition}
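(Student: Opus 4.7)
The plan is to bound $\|\partial_{\vec{\alpha}}\big[(F\circ \mathbf{V})G\big]\|_{L^2}$ for every multi-index $\vec{\alpha}$ with $|\vec{\alpha}|\leq j$ and then sum. First I would apply the Leibniz rule to write $\partial_{\vec{\alpha}}\big[(F\circ \mathbf{V})G\big]$ as a finite sum of terms of the form $\big(\partial_{\vec{\beta}}(F\circ \mathbf{V})\big)\,\partial_{\vec{\gamma}}G$ with $|\vec{\beta}|+|\vec{\gamma}|=|\vec{\alpha}|\leq j$. The $\vec{\beta}=\vec{0}$ term is $(F\circ \mathbf{V})\,\partial_{\vec{\alpha}}G$, whose $L^2$ norm is immediately controlled by $|F|_{0,\bar{\mathcal{O}}_2}\|G\|_{H^j}$ since $\mathbf{V}(\mathbb{R}^d)\subset\bar{\mathcal{O}}_2$. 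The main technical point concerns the terms with $|\vec{\beta}|\geq 1$.

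For those, I would expand $\partial_{\vec{\beta}}(F\circ \mathbf{V})$ via the Faà di Bruno formula as a finite sum, each summand having the schematic form $\big((\partial^{(r)}F)\circ \mathbf{V}\big)\cdot\prod_{\ell=1}^{r}\partial_{\vec{\beta}_\ell}\mathbf{V}$ with $r\leq |\vec{\beta}|$ and $\sum_\ell|\vec{\beta}_\ell|=|\vec{\beta}|$. The factor involving $F$ is bounded in $L^\infty$ by $|F|_{j,\bar{\mathcal{O}}_2}$. It therefore suffices to show that, for every such partition,
\begin{equation}
\Big\|\prod_{\ell=1}^{r}\partial_{\vec{\beta}_\ell}\mathbf{V}\cdot \partial_{\vec{\gamma}}G\Big\|_{L^2}
\leq C(j,d)\,\|\mathbf{V}\|_{H^j}^{r}\,\|G\|_{H^j}.
\end{equation}
I would choose the Hölder exponents $p_\ell=\frac{2j}{|\vec{\beta}_\ell|}$ for the $\mathbf{V}$-factors and $p_0=\frac{2j}{|\vec{\gamma}|}$ for the $G$-factor (with $1/p_\ell$ interpreted as $0$ when the corresponding index is $0$, allowing the use of $L^\infty$); the exponent identity $\sum 1/p_\ell = (|\vec{\beta}|+|\vec{\gamma}|)/(2j)\leq 1/2$ lets us place the product in $L^2$ using Hölder together with the Sobolev embedding $H^j\hookrightarrow L^\infty$ (which holds because $j>d/2$) to absorb any leftover integrability.

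Applying Lemma \ref{L:GN} in its stated form then gives
\begin{equation}
\|\partial_{\vec{\beta}_\ell}\mathbf{V}\|_{L^{2j/|\vec{\beta}_\ell|}}
\leq C(j)\,\|\mathbf{V}\|_{L^\infty}^{1-|\vec{\beta}_\ell|/j}\,\|\nabla^{(j)}\mathbf{V}\|_{L^2}^{|\vec{\beta}_\ell|/j},
\end{equation}
and the analogous inequality for $G$. Multiplying these bounds over $\ell$ and over the $G$ factor, and using that $\sum_\ell|\vec{\beta}_\ell|+|\vec{\gamma}|=|\vec{\alpha}|\leq j$ together with $\|\mathbf{V}\|_{L^\infty}\leq C(j,d)\|\mathbf{V}\|_{H^j}$ (Sobolev embedding), the exponents of $\|\mathbf{V}\|_{H^j}$ sum to at most $r\leq j$, yielding the factor $\|\mathbf{V}\|_{H^j}^{r}\leq 1+\|\mathbf{V}\|_{H^j}^{j}$ (after absorbing smaller powers into $1+(\cdot)^j$). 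Summing over the finitely many multi-indices $\vec{\alpha}$, $\vec{\beta}$, $\vec{\gamma}$ and over the Faà di Bruno partitions produces \eqref{E:CompositionProductSobolevMoser}.

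The main obstacle is purely bookkeeping: matching the Hölder exponents to the Gagliardo–Nirenberg exponents so that the $\mathbf{V}$-powers collapse into the single factor $(1+\|\mathbf{V}\|_{H^j}^{j})$, and handling the degenerate cases $|\vec{\beta}_\ell|=0$ or $|\vec{\gamma}|=0$ (where one uses $L^\infty$ bounds via Sobolev embedding instead of $L^{2j/0}$). Once the exponent arithmetic is written out, each step is either Hölder, Lemma \ref{L:GN}, or the hypothesis $\mathbf{V}(\mathbb{R}^d)\subset\bar{\mathcal{O}}_2$; the result follows as in the standard Moser-type estimates of \cite{lH1997}.
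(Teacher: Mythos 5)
The paper does not actually prove this proposition; it cites chapter 6 of \cite{lH1997} and indicates that the proof is ``commonly based on'' Lemma \ref{L:GN} together with H\"older's inequality and Sobolev embedding. Your Leibniz/Fa\`a di Bruno decomposition, your use of Lemma \ref{L:GN}, and your power counting for $\|\mathbf{V}\|_{H^j}$ are all in that spirit and essentially the right scaffolding.

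There is, however, a genuine gap in the H\"older exponent bookkeeping. You set $p_\ell = 2j/|\vec{\beta}_\ell|$ and $p_0 = 2j/|\vec{\gamma}|$, so that $\sum 1/p_\ell + 1/p_0 = |\vec{\alpha}|/(2j)$. When $|\vec{\alpha}| < j$ this is \emph{strictly} less than $1/2$, and H\"older then only places the product in $L^{2j/|\vec{\alpha}|}$ with $2j/|\vec{\alpha}| > 2$. On $\mathbb{R}^d$ there is no embedding $L^q \hookrightarrow L^2$ for $q>2$, and the additional information that each factor is bounded (via $H^j\hookrightarrow L^\infty$) only moves you to \emph{larger} Lebesgue exponents $q' \geq q$, never down to $L^2$; so ``absorbing the leftover integrability via Sobolev embedding'' does not close the estimate. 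The standard fix is to apply Lemma \ref{L:GN} with $k = |\vec{\alpha}|$ instead of $k=j$ (this is legitimate since $|\vec{\beta}_\ell|,|\vec{\gamma}| \leq |\vec{\alpha}|$): then $p_\ell = 2|\vec{\alpha}|/|\vec{\beta}_\ell|$ and $p_0 = 2|\vec{\alpha}|/|\vec{\gamma}|$ sum to exactly $1/2$, H\"older lands in $L^2$, and one finishes with $\|\nabla^{(|\vec{\alpha}|)}\mathbf{V}\|_{L^2} \leq \|\mathbf{V}\|_{H^j}$ and $\|\nabla^{(|\vec{\alpha}|)}G\|_{L^2} \leq \|G\|_{H^j}$. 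The resulting exponent of $\|\mathbf{V}\|_{H^j}$ is still $r \leq |\vec{\alpha}| \leq j$, so the bound $1 + \|\mathbf{V}\|_{H^j}^j$ is recovered and the rest of your argument goes through unchanged.
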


\begin{corollary}                                                                                               \label{C:SobolevCorollary}
    Assume the hypotheses of Proposition
    \ref{P:CompositionProductSobolevMoser} with the following
    changes: \\ $\mathbf{V},G \in C^0([0,T],H^j(\mathbb{R}^d)).$
    Then the $q \times q$ $(q \times 1)$ matrix-valued (array-valued) function $(F \circ \mathbf{V})G$ is an element of
    $C^0([0,T],H^j(\mathbb{R}^d)).$
\end{corollary}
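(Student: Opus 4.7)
The plan is to establish continuity of the map $t \mapsto (F \circ \mathbf{V}(t))G(t)$ into $H^j(\mathbb{R}^d)$ at an arbitrary fixed $t_0 \in [0,T]$, using the standard decomposition
\begin{align*}
(F \circ \mathbf{V}(t))G(t) - (F \circ \mathbf{V}(t_0))G(t_0)
&= (F \circ \mathbf{V}(t))\bigl(G(t) - G(t_0)\bigr) \\
&\quad + \bigl((F \circ \mathbf{V}(t)) - (F \circ \mathbf{V}(t_0))\bigr)G(t_0),
\end{align*}
and controlling each piece in $H^j$ separately.

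For the first piece, I would invoke Proposition \ref{P:CompositionProductSobolevMoser} with $G$ replaced by $G(t) - G(t_0)$, yielding
\[\|(F \circ \mathbf{V}(t))(G(t) - G(t_0))\|_{H^j} \leq C(j,d)\,|F|_{j,\bar{\mathcal{O}}_2}\bigl(1 + \|\mathbf{V}(t)\|_{H^j}^j\bigr)\|G(t) - G(t_0)\|_{H^j}.\]
Since $\mathbf{V} \in C^0([0,T], H^j)$ has uniformly bounded $H^j$ norm on $[0,T]$, and $\|G(t) - G(t_0)\|_{H^j} \to 0$ as $t \to t_0$, this piece tends to $0$.

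The second piece is more delicate, since one must control the $H^j$ norm of $F \circ \mathbf{V}(t) - F \circ \mathbf{V}(t_0)$ multiplied by $G(t_0)$ as $\mathbf{V}(t) \to \mathbf{V}(t_0)$ in $H^j$. My plan is a three-step argument: (i) extend $F$ to some $\widetilde{F} \in C_b^j(\mathbb{R}^n)$ via Whitney's extension theorem, so that compositions are defined globally; (ii) approximate $\widetilde{F}$ in the $C^j$ norm on compact sets by mollified functions $F_k \in C^\infty_b(\mathbb{R}^n)$ having $|F_k|_{j+1} < \infty$, so that the approximation error, after Proposition \ref{P:CompositionProductSobolevMoser} is applied to $(F - F_k) \circ \mathbf{V}(t)$ and $(F - F_k) \circ \mathbf{V}(t_0)$ times $G(t_0)$, can be made uniformly small in $t$; (iii) for each fixed $F_k$, use the fundamental theorem of calculus
\[F_k(\mathbf{V}(t)(x)) - F_k(\mathbf{V}(t_0)(x)) = \int_0^1 \nabla F_k\bigl(s\mathbf{V}(t)(x) + (1-s)\mathbf{V}(t_0)(x)\bigr)\,ds \cdot \bigl(\mathbf{V}(t)(x) - \mathbf{V}(t_0)(x)\bigr)\]
together with Minkowski's inequality in the $s$-integral and a trilinear Moser-type estimate (for which the bound $|F_k|_{j+1} < \infty$ provides the needed regularity of $F_k$) to show that $\|((F_k \circ \mathbf{V}(t)) - (F_k \circ \mathbf{V}(t_0)))G(t_0)\|_{H^j} \to 0$ as $t \to t_0$, since $\|\mathbf{V}(t) - \mathbf{V}(t_0)\|_{H^j} \to 0$ by hypothesis.

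The main obstacle is the trilinear estimate needed in step (iii), because Proposition \ref{P:CompositionProductSobolevMoser} as stated handles only a product of two factors. Adapting its proof (which is an iteration of Leibniz's rule and Lemma \ref{L:GN}) to control a product of a composition with two other $H^j$ factors is routine but technical; the cleanest course is to extract a short auxiliary product lemma of the same flavor as Proposition \ref{P:CompositionProductSobolevMoser} and cite it. With such a lemma in hand, the corollary follows at once from the decomposition above.
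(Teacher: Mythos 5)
The paper gives no proof of this corollary, calling it a ``straightforward extension'' of Proposition \ref{P:CompositionProductSobolevMoser}. Your decomposition and your treatment of the first term are exactly right, and your overall plan is sound, but your handling of the second term is substantially more elaborate than the paper envisions, and one of your steps re-derives machinery already in the Appendix.

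Specifically, in step (iii) you propose to prove the continuity of $t \mapsto F_k\circ\mathbf{V}(t)$ in $H^j$ from scratch via the fundamental theorem of calculus, Minkowski's inequality, and a new trilinear Moser estimate. But Proposition \ref{P:SobolevTaylor} \emph{is} precisely that FTC estimate (its proof uses the same convexity of $\bar{\mathcal{O}}_2$ and the same integral identity), and it yields
\begin{align*}
\|(F_k\circ\mathbf{V}(t)) - (F_k\circ\mathbf{V}(t_0))\|_{H^j} \leq C\,|F_k|_{j+1,\bar{\mathcal{O}}_2}\,\|\mathbf{V}(t)-\mathbf{V}(t_0)\|_{H^j}
\end{align*}
directly, with the constant uniformly bounded since $\mathbf{V}([0,T])$ is compact in $H^j$. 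You then need only the bilinear Moser product estimate $\|uv\|_{H^j} \leq C\|u\|_{H^j}\|v\|_{H^j}$ for $j > d/2$ (the Banach-algebra property, which follows from Lemma \ref{L:GN} and Leibniz just as Proposition \ref{P:CompositionProductSobolevMoser} does) to multiply by $G(t_0)$. No trilinear lemma is required. Extracting one as an ``auxiliary lemma'' is not wrong, but it duplicates Proposition \ref{P:SobolevTaylor} and obscures what the Appendix already supplies.

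Your steps (i)--(ii) (Whitney extension plus mollification) are a legitimate way to pass from $F\in C_b^j$ to smooth $F_k$ with $|F_k|_{j+1}<\infty$, and they do address a genuine regularity mismatch: Proposition \ref{P:SobolevTaylor} states $F\in C_b^j$ in its hypothesis but invokes $|F|_{j+1,\bar{\mathcal{O}}_2}$ in its conclusion, which is vacuous unless $F$ is one degree smoother. If one reads the corollary under the same implicit $C_b^{j+1}$ assumption that makes Proposition \ref{P:SobolevTaylor} nonvacuous (and in the paper's actual application in Remark \ref{R:HNInhomogeneousTerms}, $F$ is $C^\infty$), then your steps (i) and (ii) can be dropped entirely and the corollary follows in two lines from propositions \ref{P:CompositionProductSobolevMoser} and \ref{P:SobolevTaylor} plus the bilinear Moser bound. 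If one insists on $F$ being only $C_b^j$, your mollification argument is the right patch, though Whitney's theorem is heavy machinery here: the paper notes $\bar{\mathcal{O}}_2$ may be taken to be a cube, for which an elementary reflection extension suffices.
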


\begin{remark}                                                                                          \label{R:SobolevCalculusRemark}
    We often make use of a slight modification of Proposition \ref{P:CompositionProductSobolevMoser}
    in which the assumption $\mathbf{V} \in H^j(\mathbb{R}^d)$ is replaced with the assumption
    $\mathbf{V} \in H_{\Vb}^j(\mathbb{R}^d),$ where $\bar{\mathbf{V}} \in \mathbb{R}^n$ is a
    constant array. Under this modified assumption, the conclusion of
    Proposition \ref{P:CompositionProductSobolevMoser}
    is modified as follows:
    \begin{align}                                                                                       \label{E:ModifiedSobolevEstimate}
        \|(F \circ \mathbf{V})G\|_{H^j} \leq C(j,d)|F|_{j,\bar{\mathcal{O}}_2}(1 +
        \|\mathbf{V}\|_{H_{\Vb}^j}^j)\|G\|_{H^j}.
    \end{align}
    A similar modification can be made to Corollary \ref{C:SobolevCorollary}.
\end{remark}

\begin{proposition}                                                                                             \label{P:SobolevTaylor}
    Let $\mathcal{O}_2 \subset \mathbb{R}^{n}$ be a bounded open set with convex closure $\bar{\mathcal{O}}_2$, and let $j,d \in 	
    \mathbb{N}$ with $j > \frac{d}{2}.$ Let $F \in C_b^j(\bar{\mathcal{O}}_2)$ be a  scalar or array-valued
    function. Let $\mathbf{V}, \widetilde{\mathbf{V}}: \mathbb{R}^d \rightarrow
    \mathbb{R}^n,$ and assume that $\mathbf{V}, \widetilde{\mathbf{V}}
    \in H^j(\mathbb{R}^d).$ Assume further that 
    $\bar{\mathbf{V}} \in \bar{\mathcal{O}}_2$ and $\mathbf{V}(\mathbb{R}^d), \widetilde{\mathbf{V}}(\mathbb{R}^d) \subset 
    \bar{\mathcal{O}}_2.$ Then $F \circ \mathbf{V} - F \circ \widetilde{\mathbf{V}} \in H^j(\mathbb{R}^d)$ and
    \begin{align}
        \|F \circ \mathbf{V} - F \circ \widetilde{\mathbf{V}}\|_{H^j}
        \leq C(j,d,\|\mathbf{V}\|_{H^j},\|\widetilde{\mathbf{V}}\|_{H^j})|F|_{j+1,\bar{\mathcal{O}}_2} \|\mathbf{V} - 
        \widetilde{\mathbf{V}}\|_{H^j}.
    \end{align}
\end{proposition}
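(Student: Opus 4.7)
The plan is to prove Proposition \ref{P:SobolevTaylor} by reducing it to Proposition \ref{P:CompositionProductSobolevMoser} via the fundamental theorem of calculus. Specifically, because $\bar{\mathcal{O}}_2$ is convex and both $\mathbf{V}(\mathbb{R}^d), \widetilde{\mathbf{V}}(\mathbb{R}^d) \subset \bar{\mathcal{O}}_2,$ the line segment joining $\widetilde{\mathbf{V}}(z)$ and $\mathbf{V}(z)$ lies in $\bar{\mathcal{O}}_2$ for every $z \in \mathbb{R}^d,$ so I may legitimately write
\begin{align}
F \circ \mathbf{V} - F \circ \widetilde{\mathbf{V}} = \int_0^1 DF\bigl(\widetilde{\mathbf{V}} + s(\mathbf{V} - \widetilde{\mathbf{V}})\bigr) \cdot (\mathbf{V} - \widetilde{\mathbf{V}}) \, ds,                   \notag
\end{align}
where $DF$ denotes the Jacobian (treated as a matrix, or a row array when $F$ is scalar). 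This identity is the crucial structural step that converts an a priori nonlinear difference into an integral of an object with a product structure to which Proposition \ref{P:CompositionProductSobolevMoser} applies.

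Next, for each fixed $s \in [0,1],$ I would view the integrand as the product of the composition $DF \circ \mathbf{V}_s$ with the array $\mathbf{V} - \widetilde{\mathbf{V}},$ where $\mathbf{V}_s \overset{\mbox{\tiny{def}}}{=} \widetilde{\mathbf{V}} + s(\mathbf{V} - \widetilde{\mathbf{V}}).$ Since $\mathbf{V}, \widetilde{\mathbf{V}} \in H^j$ and $\mathbf{V}_s(\mathbb{R}^d) \subset \bar{\mathcal{O}}_2,$ Proposition \ref{P:CompositionProductSobolevMoser} (applied with $DF$ in place of $F$, $\mathbf{V}_s$ in place of the inner function, and $\mathbf{V} - \widetilde{\mathbf{V}}$ in place of $G$) yields
\begin{align}
\bigl\| DF(\mathbf{V}_s) \cdot (\mathbf{V} - \widetilde{\mathbf{V}}) \bigr\|_{H^j} \leq C(j,d) |DF|_{j,\bar{\mathcal{O}}_2} \bigl(1 + \|\mathbf{V}_s\|_{H^j}^{j}\bigr) \|\mathbf{V} - \widetilde{\mathbf{V}}\|_{H^j}.         \notag
\end{align}
Using $|DF|_{j,\bar{\mathcal{O}}_2} \leq |F|_{j+1,\bar{\mathcal{O}}_2}$ and the triangle inequality $\|\mathbf{V}_s\|_{H^j} \leq \|\widetilde{\mathbf{V}}\|_{H^j} + \|\mathbf{V} - \widetilde{\mathbf{V}}\|_{H^j} \leq \|\mathbf{V}\|_{H^j} + \|\widetilde{\mathbf{V}}\|_{H^j}$ uniformly in $s$ absorbs the dependence on $s$ into the constant $C(j,d,\|\mathbf{V}\|_{H^j},\|\widetilde{\mathbf{V}}\|_{H^j}).$

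Finally, I would apply Minkowski's integral inequality for the Bochner integral in $s,$ which gives
\begin{align}
\|F \circ \mathbf{V} - F \circ \widetilde{\mathbf{V}}\|_{H^j} \leq \int_0^1 \bigl\| DF(\mathbf{V}_s) \cdot (\mathbf{V} - \widetilde{\mathbf{V}}) \bigr\|_{H^j} \, ds,                 \notag
\end{align}
and combining with the uniform-in-$s$ bound above delivers the desired inequality.

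The main technical obstacle, modest as it is here, lies in verifying that the integrand is genuinely Bochner-measurable and integrable as a map $[0,1] \to H^j(\mathbb{R}^d)$; this is where the convexity of $\bar{\mathcal{O}}_2$ (ensuring $\mathbf{V}_s$ has range in $\bar{\mathcal{O}}_2$) and the continuous dependence of $DF \circ \mathbf{V}_s$ on $s$ (via $DF \in C_b^j$) are essential, and also why $|F|_{j+1,\bar{\mathcal{O}}_2}$ rather than $|F|_{j,\bar{\mathcal{O}}_2}$ enters the final estimate. Beyond this, the proposition is essentially a ``Moser-type'' corollary of Proposition \ref{P:CompositionProductSobolevMoser}, with the Gagliardo-Nirenberg-based estimates doing all the analytic heavy lifting inside that earlier result.
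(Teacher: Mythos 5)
Your proof is correct. The paper does not actually provide a proof of this proposition --- it merely refers the reader to chapter~6 of H\"ormander's book and notes that, like the other appendix propositions, the argument ultimately rests on the Gagliardo--Nirenberg inequality (Lemma~\ref{L:GN}) --- so there is no ``paper's proof'' to compare against. Your fundamental-theorem-of-calculus reduction,
\begin{align}
F \circ \mathbf{V} - F \circ \widetilde{\mathbf{V}} = \int_0^1 DF\bigl(\mathbf{V}_s\bigr)\cdot(\mathbf{V} - \widetilde{\mathbf{V}})\,ds, \qquad \mathbf{V}_s \overset{\mbox{\tiny{def}}}{=} (1-s)\widetilde{\mathbf{V}} + s\mathbf{V}, \notag
\end{align}
followed by Proposition~\ref{P:CompositionProductSobolevMoser} applied to the integrand (with $DF$ as the outer function, $\mathbf{V}_s$ as the inner function, and $G=\mathbf{V}-\widetilde{\mathbf{V}}$) and Minkowski's integral inequality, is the standard route and is exactly what produces the $|F|_{j+1,\bar{\mathcal{O}}_2}$ factor rather than $|F|_{j,\bar{\mathcal{O}}_2}$: you are invoking Proposition~\ref{P:CompositionProductSobolevMoser} for $DF$, which requires $DF\in C_b^j(\bar{\mathcal{O}}_2)$, i.e.\ $F\in C_b^{j+1}(\bar{\mathcal{O}}_2)$. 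This is in fact a touch stronger than the $F\in C_b^j$ stated in the hypothesis, but the stated conclusion already involves $|F|_{j+1,\bar{\mathcal{O}}_2}$ and is vacuous unless $F\in C_b^{j+1}$, so this is a minor inconsistency in the proposition as worded rather than a gap in your argument. Two very small cosmetic remarks: Proposition~\ref{P:CompositionProductSobolevMoser} is literally stated for square-matrix-valued $F$, whereas $DF$ is rectangular when $F$ is array-valued, but the estimate is indifferent to the matrix shape; and the simpler uniform bound $\|\mathbf{V}_s\|_{H^j}\le (1-s)\|\widetilde{\mathbf{V}}\|_{H^j}+s\|\mathbf{V}\|_{H^j}\le\max(\|\mathbf{V}\|_{H^j},\|\widetilde{\mathbf{V}}\|_{H^j})$ avoids the detour through the triangle inequality on the difference.
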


\begin{remark}     \label{R:SobolevTaylorCalculusRemark}
    As in Remark \ref{R:SobolevCalculusRemark}, we may replace the hypotheses $\mathbf{V}, \mathbf{\widetilde{V}} \in H^j(\mathbb{R}^d)$ 
    from Proposition
    \ref{P:SobolevTaylor} with the hypotheses $\mathbf{V}, \mathbf{\widetilde{V}} \in
    H_{\Vb}^j(\mathbb{R}^d),$ where $\Vb$ is a constant array, in which case the conclusion
    of the proposition is:
    \begin{align}                                                                                       \label{E:ModifiedSobolevEstimate2}
        \|(F \circ \mathbf{V}) - (F \circ \mathbf{\widetilde{V}}) \|_{H^j} \leq
        C(j,d,\|\mathbf{V}\|_{H_{\Vb}^j},\|\widetilde{\mathbf{V}}\|_{H_{\Vb}^j})|F|_{j+1,\bar{\mathcal{O}}_2} \|\mathbf{V} -
        \widetilde{\mathbf{V}}\|_{H^j}.
    \end{align}
    Furthermore, a careful analysis of the special case $\widetilde{\mathbf{V}} =
    \bar{\mathbf{V}}$ gives the bound
    \begin{align}
        \|F \circ \mathbf{V} - F \circ \bar{\mathbf{V}}\|_{H^j} \leq C(j,d)|\partial F/\partial \mathbf{V}|_{j-1,\bar{\mathcal{O}}_2}(1 + 
        \|\mathbf{V}\|_{H_{\Vb}^j}^{j-1}) (\|\mathbf{V}\|_{H_{\Vb}^j}),
    \end{align}
    in which we require less regularity of $F$ than we do in the general case.
\end{remark}

\begin{proposition}                                                                             \label{P:SobolevMissingDerivativeProposition}
        Assume the hypotheses of Proposition \ref{P:CompositionProductSobolevMoser} with the following two changes:
        \begin{enumerate}
            \item Assume $j > \frac{d}{2} + 1.$
            \item Assume that $G \in H^{j-1}(\mathbb{R}^d).$
        \end{enumerate}
        Let $k \in \mathbb{N}$ with $1 \leq k \leq j,$ and let $\vec{\alpha}$ be a spatial
        derivative multi-index with $|\vec{\alpha}|=k.$ Then
        \begin{align}       \label{E:SobolevMissingDerivativeProposition}
            \|\partial_{\vec{\alpha}}&\left((F \circ \mathbf{V})G\right) - (F \circ
                \mathbf{V})\partial_{\vec{\alpha}}G\|_{L^2} \notag \\
            &\leq C(j,d)|\partial F/\partial \mathbf{V}|_{j-1,\bar{\mathcal{O}}_2}(\|\mathbf{V}\|_{H^j} +
                \|\mathbf{V}\|_{H^j}^j)\|G\|_{H^{j-1}}. 
        \end{align}
\end{proposition}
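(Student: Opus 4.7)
The plan is to bound the commutator by expanding it with the Leibniz rule, observing that the subtracted term is precisely the one in which all derivatives fall on $G$, and then estimating each remaining term by combining the Fa\`a di Bruno formula for derivatives of $F \circ \mathbf{V}$ with the Gagliardo-Nirenberg inequality (Lemma \ref{L:GN}). Concretely, writing $k = |\vec{\alpha}|$, I would use
\begin{equation}
\partial_{\vec{\alpha}}\bigl((F\circ\mathbf{V})G\bigr) - (F\circ\mathbf{V})\,\partial_{\vec{\alpha}}G
= \sum_{\substack{\vec{\beta}+\vec{\gamma}=\vec{\alpha}\\ |\vec{\beta}|\ge 1}} \binom{\vec{\alpha}}{\vec{\beta}}\,\partial_{\vec{\beta}}(F\circ\mathbf{V})\,\partial_{\vec{\gamma}}G, \notag
\end{equation}
so every surviving term has at least one derivative hitting $F\circ\mathbf{V}$, and at most $k-1 \le j-1$ derivatives landing on $G$ (this is where the hypothesis $G\in H^{j-1}$ suffices).

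Next I would expand each $\partial_{\vec{\beta}}(F\circ\mathbf{V})$ via Fa\`a di Bruno as a finite sum of terms of the form $(\partial^{\ell}F)(\mathbf{V})\cdot\prod_{s=1}^{\ell}\partial_{\vec{\delta}_s}\mathbf{V}$ with $1\le\ell\le|\vec{\beta}|$ and $\sum_s |\vec{\delta}_s|=|\vec{\beta}|$, each $|\vec{\delta}_s|\ge 1$. Since $\ell \ge 1$, only derivatives of $F$ of order between $1$ and $k \le j$ appear; equivalently, only derivatives of $\partial F/\partial\mathbf{V}$ of order up to $j-1$ appear, which explains the appearance of $|\partial F/\partial\mathbf{V}|_{j-1,\bar{\mathcal{O}}_2}$ rather than $|F|_{j,\bar{\mathcal{O}}_2}$ in the final estimate. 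Pulling $(\partial^{\ell}F)(\mathbf{V})$ out in $L^\infty$, I would then estimate
\begin{equation}
\Bigl\|\prod_{s=1}^{\ell}\partial_{\vec{\delta}_s}\mathbf{V}\cdot\partial_{\vec{\gamma}}G\Bigr\|_{L^2} \notag
\end{equation}
by H\"older's inequality with exponents $p_s = 2j/|\vec{\delta}_s|$ for the $\mathbf{V}$-factors and $q = 2j/|\vec{\gamma}|$ for the $G$-factor (using the convention $p=\infty$ when the exponent would be $0$). These exponents are conjugate because $\sum_s|\vec{\delta}_s| + |\vec{\gamma}| = k \le j$, and Lemma \ref{L:GN} applied to each factor yields
\begin{equation}
\|\partial_{\vec{\delta}_s}\mathbf{V}\|_{L^{p_s}}\le C\,\|\mathbf{V}\|_{L^\infty}^{1-|\vec{\delta}_s|/j}\|\nabla^{(j)}\mathbf{V}\|_{L^2}^{|\vec{\delta}_s|/j},\quad
\|\partial_{\vec{\gamma}}G\|_{L^q}\le C\,\|G\|_{L^\infty}^{1-|\vec{\gamma}|/j}\|\nabla^{(j)}G\|_{L^2}^{|\vec{\gamma}|/j}, \notag
\end{equation}
where the $G$ estimate is applied at level $j-1$ after using the assumption that $j > d/2 + 1$, so that $H^{j-1}\hookrightarrow L^\infty$. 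Sobolev embedding $H^j\hookrightarrow L^\infty$ also controls $\|\mathbf{V}\|_{L^\infty}$ by $\|\mathbf{V}\|_{H^j}$ (the passage to $\|\mathbf{V}\|_{H_{\Vb}^j}$ plus a constant background is harmless since only derivatives of $\mathbf{V}$ appear).

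Multiplying all factor bounds, the total power of $\|\nabla^{(j)}\mathbf{V}\|_{L^2}$ equals $|\vec{\beta}|/j \le 1$, while the total power of $\|\mathbf{V}\|_{L^\infty}$ times $\|\nabla^{(j)}\mathbf{V}\|_{L^2}$ is bounded by a finite sum of $\|\mathbf{V}\|_{H^j}^{r}$ with $1 \le r \le j$, producing the $(\|\mathbf{V}\|_{H^j}+\|\mathbf{V}\|_{H^j}^j)$ factor in \eqref{E:SobolevMissingDerivativeProposition}. Summing over the finitely many multi-index decompositions and Fa\`a di Bruno terms yields the claim, with constants absorbed into $C(j,d)$. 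The main obstacle I expect is purely combinatorial bookkeeping: verifying that the conjugate-exponent condition $\sum_s 1/p_s + 1/q = 1/2$ is met in every decomposition, that the lowest possible power of $\|\mathbf{V}\|_{H^j}$ is exactly $1$ (achieved when $|\vec{\beta}|=1$), and that no derivative of $F$ of order higher than $j$ is ever required so that $|\partial F/\partial\mathbf{V}|_{j-1,\bar{\mathcal{O}}_2}$ genuinely suffices.
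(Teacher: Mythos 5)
The paper does not prove this proposition itself; it refers to Klainerman and Majda \cite{sKaM1981}. Your skeleton --- Leibniz, Fa\`a di Bruno, H\"older, and Lemma \ref{L:GN} --- is the right one, and your reading of why $|\partial F/\partial\mathbf{V}|_{j-1,\bar{\mathcal{O}}_2}$ and the hypothesis $j>\frac{d}{2}+1$ enter is correct. But the exponent choice is a genuine gap, not bookkeeping to verify. With $p_s = 2j/|\vec{\delta}_s|$ and $q = 2j/|\vec{\gamma}|$ the reciprocals sum to $k/(2j)$, which is strictly less than $1/2$ whenever $k<j$; H\"older then yields an $L^r$ bound with $r>2$, which does not control the $L^2$ norm on $\mathbb{R}^d$. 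In the single case where the sum equals $1/2$, namely $k=j$, your Gagliardo--Nirenberg display for the $G$-factor invokes $\|\nabla^{(j)}G\|_{L^2}$, which does not exist since $G$ lies only in $H^{j-1}$. The parenthetical ``applied at level $j-1$'' cannot repair both defects at once: replacing $q$ by $2(j-1)/|\vec{\gamma}|$ changes the reciprocal sum to $|\vec{\beta}|/(2j) + |\vec{\gamma}|/(2(j-1))$, which equals $1/(2j)+1/2>1/2$ when $k=j$ and $|\vec{\beta}|=1$. So with every natural variant of the stated exponents, the H\"older condition fails for some admissible decomposition.

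The missing device is the one-derivative shift that actually exploits the commutator structure; it is the central trick in \cite{sKaM1981}. Since $|\vec{\beta}|\ge 1$, the Fa\`a di Bruno expansion has at least one $\mathbf{V}$-factor with $|\vec{\delta}_1|\ge 1$. Treat that one factor as an $(|\vec{\delta}_1|-1)$-th order derivative of $\nabla\mathbf{V}$ and apply Lemma \ref{L:GN} at the uniform level $k-1$: take $p_1 = 2(k-1)/(|\vec{\delta}_1|-1)$ with endpoint norms $\|\nabla\mathbf{V}\|_{L^\infty}$ and $\|\nabla^{(k)}\mathbf{V}\|_{L^2}$; take $p_s = 2(k-1)/|\vec{\delta}_s|$ for $s\ge 2$ with endpoints $\|\mathbf{V}\|_{L^\infty}$ and $\|\nabla^{(k-1)}\mathbf{V}\|_{L^2}$; take $q = 2(k-1)/|\vec{\gamma}|$ with endpoints $\|G\|_{L^\infty}$ and $\|\nabla^{(k-1)}G\|_{L^2}$. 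The reciprocals now sum to $\bigl((|\vec{\delta}_1|-1) + \sum_{s\ge 2}|\vec{\delta}_s| + |\vec{\gamma}|\bigr)/\bigl(2(k-1)\bigr) = (k-1)/\bigl(2(k-1)\bigr) = 1/2$ exactly; every Gagliardo--Nirenberg order is at most $k-1\le j-1$ (one checks $|\vec{\delta}_1|-1\le k-1$, $|\vec{\delta}_s|\le k-1$ for $s\ge 2$, and $|\vec{\gamma}|\le k-1$), so only $\|\nabla^{(k-1)}G\|_{L^2}\le\|G\|_{H^{j-1}}$ and $\|\nabla^{(k)}\mathbf{V}\|_{L^2}, \|\nabla^{(k-1)}\mathbf{V}\|_{L^2}\le\|\mathbf{V}\|_{H^j}$ are needed, with the $L^\infty$ norms controlled via the Sobolev embedding $H^{j-1}\hookrightarrow L^\infty$. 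Summing over Fa\`a di Bruno decompositions, the number of $\mathbf{V}$-factors ranges over $1\le\ell\le k\le j$, which produces the $\|\mathbf{V}\|_{H^j}+\|\mathbf{V}\|_{H^j}^j$ factor you anticipated.
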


    \begin{remark}                                                              \label{R:SobolevMissingDerivativeRemark}
        As in Remark \ref{R:SobolevCalculusRemark}, we may replace the assumption $\mathbf{V} \in
        H^j(\mathbb{R}^d)$ in Proposition \ref{P:SobolevMissingDerivativeProposition}
        with the assumption $\mathbf{V} \in H_{\Vb}^j(\mathbb{R}^d),$ where $\bar{\mathbf{V}}$ is a constant array, in
        which case we obtain
        \begin{align}
             \|\partial_{\vec{\alpha}}&\left((F \circ \mathbf{V})G\right) - (F \circ
             \mathbf{V})\partial_{\vec{\alpha}}G\|_{L^2} \notag \\
             &\leq C(j,d)|\partial F/\partial \mathbf{V}|_{j-1,\bar{\mathcal{O}}_2}(\|\mathbf{V}\|_{H_{\Vb}^j} +
                \|\mathbf{V}\|_{H_{\Vb}^j}^j)\|G\|_{H^{j-1}}. 
        \end{align}
    \end{remark}

\begin{proposition}                                                           \label{P:SobIterpolation}
            Let $N',N \in \mathbb{R}$ be such that $0 \leq N' \leq N,$ and assume that $F \in H^N(\mathbb{R}^d).$ Then
            \begin{align}
                \|F\|_{H^{N'}} \leq C(N',d) \|F\|_{L^2}^{1-
                N'/N}\|F\|_{H^N}^{N'/N}.
            \end{align}
\end{proposition}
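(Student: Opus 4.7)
The plan is to reduce the inequality to a Hölder estimate on the Fourier side. By Plancherel's theorem applied to the standard Bessel-potential characterization, for any $s \geq 0$ we have
\begin{equation}
\|F\|_{H^s}^2 = \int_{\mathbb{R}^d} (1 + |\xi|^2)^{s} \, |\hat{F}(\xi)|^2 \, d\xi,
\end{equation}
up to a fixed constant depending only on the Fourier normalization. Once this identity is in hand, the proof is a one-line application of Hölder.

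First I would dispense with the boundary cases. If $N' = 0$, the claim reduces to $\|F\|_{L^2} \leq C \|F\|_{L^2}$, which is immediate; if $N' = N$, the claim reduces to $\|F\|_{H^N} \leq C \|F\|_{H^N}$, also immediate. So I assume $0 < N' < N$ and set $\theta := N'/N \in (0,1)$.

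The core step is the factorization
\begin{equation}
(1 + |\xi|^2)^{N'} |\hat{F}(\xi)|^2 \;=\; \Big[(1 + |\xi|^2)^{N} |\hat{F}(\xi)|^2\Big]^{\theta} \cdot \Big[|\hat{F}(\xi)|^2\Big]^{1-\theta}.
\end{equation}
Applying Hölder's inequality with conjugate exponents $p = 1/\theta = N/N'$ and $q = 1/(1-\theta) = N/(N-N')$ and integrating over $\mathbb{R}^d$ gives
\begin{equation}
\int_{\mathbb{R}^d} (1+|\xi|^2)^{N'} |\hat{F}(\xi)|^2 \, d\xi \;\leq\; \left(\int_{\mathbb{R}^d} (1+|\xi|^2)^{N} |\hat{F}(\xi)|^2 \, d\xi\right)^{\theta}\! \left(\int_{\mathbb{R}^d} |\hat{F}(\xi)|^2 \, d\xi\right)^{1-\theta}.
\end{equation}
Translating back via the Fourier/Plancherel identification yields $\|F\|_{H^{N'}}^2 \leq \|F\|_{H^N}^{2\theta} \|F\|_{L^2}^{2(1-\theta)}$, and taking square roots produces the stated bound with $C(N',d)$ equal to a constant absorbing the Fourier normalization; if one uses the Bessel-potential norm directly, in fact $C = 1$ works.

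No step is really an obstacle: the only thing worth being careful about is the admissibility of the Fourier characterization of $H^{N'}$ for \emph{fractional} $N'$, which is precisely why we use $(1+|\xi|^2)^{s/2}$ as a Fourier multiplier in place of a combinatorial sum of integer-order derivatives. Once that characterization is accepted as the working definition of $H^s$ (consistent with the integer-order definition used elsewhere in this paper up to equivalence of norms), the result is a direct Hölder computation. The only mild caveat is that if one insists on the derivative-based definition of $H^{N'}$ for integer $N'$, a routine equivalence-of-norms argument introduces the dimensional constant $C(N',d)$.
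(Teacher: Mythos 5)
Your proof is correct. Note that the paper itself supplies no proof of this proposition: the Appendix explicitly singles it out as ``a standard Sobolev interpolation inequality'' and refers to the literature, in contrast to the other appendix propositions, which it attributes to the Gagliardo--Nirenberg inequality (Lemma \ref{L:GN}) plus H\"older and Sobolev embedding. Your Plancherel/Bessel-potential plus H\"older argument is precisely the standard one for fractional orders $N', N \in \mathbb{R}$, so you are filling a deliberate gap rather than diverging from a route chosen in the paper. The boundary cases are handled cleanly, the exponent bookkeeping $(1+|\xi|^2)^{N'} = \bigl[(1+|\xi|^2)^{N}\bigr]^{\theta}$ with $\theta = N'/N$ is correct, and your remark that $C=1$ in the Bessel-potential norm is accurate; the constant $C(N',d)$ is only needed if one insists on the integer-order derivative-sum definition of $H^{N'}$ and passes through the equivalence of norms. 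One trivial degeneracy the statement itself glosses over and you inherit: if $N=0$ then necessarily $N'=0$ and the exponent $N'/N$ is formally $0/0$, but in that case both sides are $\|F\|_{L^2}$ and the inequality is vacuous.
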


\end{document}